\newcommand{\xRightarrow}[2][]{\ext@arrow 0359\Rightarrowfill@{#1}{#2}}
\newcommand\apxA{~A}
\newcommand\code[1]{\texttt{#1}}
\newcommand\Tool{{\sc Cion}}
\newcommand\ie{{\it i.e.,} }
\newcommand\eg{{\it e.g.,} }
\newcommand\added[1]{\textcolor{blue}{#1}}
\newcommand\removed[1]{ }
\newcommand\smartpar[1]{\medskip\noindent{\bf #1. $\;$ }}
\newcommand\exprSet{\mathsf{Abs}}
\newcommand\expr{\mathsf{expr}}
\newcommand\meth{\mathit{Meth}}
\newcommand{\As}{\vec{x}} 
\newcommand{\rvs}{\vec{v}} 
\newcommand{\avals}{\vec{u}}
\newcommand{\rvals}{\vec{w}}
\newcommand{\rv}{v} 
\newcommand{\fst}{\mathit{first}}
\newcommand{\lst}{\mathit{last}}
\newcommand{\Inv}[2]{\textsf{call}\;#2}
\newcommand{\Ret}[2]{\textsf{ret}(#2)}
\newcommand{\tid}{t}
\newcommand{\tids}{\mathcal{T}}
\newcommand{\lstate}{\sigma_l}
\newcommand{\supp}{\mathit{supp}}
\newcommand{\gstate}{\sigma_{g}}
\newcommand{\Lstate}{\Sigma_{lo}}
\newcommand{\Gstate}{\Sigma_{gl}}
\newcommand{\Conf}{\mathcal{C}}
\newcommand{\sem}[1]{\llbracket  #1 \rrbracket}
\newcommand{\semL}{\llbracket}
\newcommand{\semR}{\rrbracket}
\newcommand{\dom}[1]{\textsf{dom}(#1)}
\newcommand{\exec}{\rho}
\newcommand{\execs}{\mathcal{E}}
\newcommand\Pths{\Pi}
\newcommand\Lyrs{\Lambda}
\newcommand\lyr{\lambda}
\newcommand\eeq{\equiv}
\newcommand\symeq{\simeq}
\newcommand\neeq{\not\equiv}
\newcommand{\tr}{\tau}
\newcommand{\traces}{\mathit{Traces}}
\newcommand{\condLL}{\langle\!\!\langle}
\newcommand{\condLLd}{\langle\!\!\!\langle}
\newcommand{\condwr}[2]{\condLL #1\kdot #2 \rangle\!\!\rangle}
\newcommand{\condwrd}[2]{\condLLd #1\kdot #2 \rangle\!\!\!\rangle}
\newcommand{\ttcondwr}[2]{\condwr{\code{#1}}{\code{#2}}}
\newcommand{\ttcondwrd}[2]{\condwrd{\code{#1}}{\code{#2}}}
\newcommand{\lin}[1]{\mathit{lin}({#1})}
\newcommand\kdot{\xspace\cdot\xspace}
\newcommand\tightkdot{\!\xspace\cdot\xspace\!}
\newcommand\qq{\mathit{q}}
\newcommand\trans{\delta}
\newcommand\QQ{\mathcal{Q}}
\newcommand\kat{k}
\newcommand\Kat{\mathcal{K}}
\newcommand\bkat{\mathcal{B}}
\newcommand\Akat{\Sigma}
\newcommand\kAs{\textsf{A}}
\newcommand\kBs{\textsf{B}}
\newcommand\env{\mathcal{E}}
\newcommand\lbl{\ell}
\newcommand\Lbls{\mathcal{L}}
\newcommand\ttc{\code{c}}
\newcommand\ttctr{\code{ctr}}
\definecolor{bluekeywords}{rgb}{0,0,1}
\definecolor{greencomments}{rgb}{0,0.5,0}
\definecolor{redstrings}{rgb}{0.64,0.08,0.08}
\definecolor{xmlcomments}{rgb}{0.5,0.5,0.5}
\definecolor{types}{rgb}{0.17,0.57,0.68}
\newcommand\dt[1]{\textbf{\emph{#1}}}
\newcommand\equo[1]{\langle\!\lfloor #1 \rfloor\!\rangle}
\newcommand{\seqlbl}{\omega}
\newcommand\ttdelete{\code{delete}}
\newcommand\ttinsert{\code{insert}}
\newcommand\ttlocate{\code{locate}}
\newcommand\xrsquigarrow[1]{%
    \mathrel{%
        \begin{tikzpicture}[%
            baseline={(current bounding box.south)}
            ]
        \node[%
            ,inner sep=.44ex
            ,align=center
            ] (tmp) {$\scriptstyle #1$};
        \path[%
            ,draw,<-
            ,decorate,decoration={%
                ,zigzag
                ,amplitude=0.7pt
                ,segment length=1.2mm,pre length=3.5pt
                }
            ] 
        (tmp.south east) -- (tmp.south west);
        \end{tikzpicture}
        }
    }
\newcommand\exectrans[1]{\xrsquigarrow{#1}}
\newcommand\QuoteAuthor[2]{%
  \begin{center}%
  \fbox{\begin{minipage}{0.93\columnwidth}\emph{#1} -- #2\end{minipage}}%
  \end{center}}
\newcommand\QuoteAuthorCiteless[1]{%
  \begin{center}%
  \fbox{\begin{minipage}{0.93\columnwidth}\emph{#1}\end{minipage}}%
  \end{center}}
\newcommand\TAOMPP{\citet{TAOMPP}}
\newif\ifarxiv  \arxivtrue
\newif\ifconf   \conffalse
\newcommand\ARXIV[1]{#1}
\newcommand\CONF[1]{}
\newcommand\ignoreme[1]{}
\newcommand\ttpop{\code{pop}}
\newcommand\ttpush{\code{push}}
\newcommand\intheextended{in the extended version~\cite{arxiv}}
\newcommand\inExtOrApx[1]{\CONF{\intheextended{}}\ARXIV{in Apx.~\ref{#1}}}
\newcommand\ExtOrApx[1]{\CONF{\cite{arxiv}}\ARXIV{Apx.~\ref{#1}}}
\newcommand\ExtOrApxt[1]{\CONF{\citet{arxiv}}\ARXIV{Apx.~\ref{#1}}}
\definecolor{bluekeywords}{rgb}{0,0,1}
\definecolor{greencomments}{rgb}{0,0.5,0}
\definecolor{redstrings}{rgb}{0.64,0.08,0.08}
\definecolor{xmlcomments}{rgb}{0.5,0.5,0.5}
\definecolor{types}{rgb}{0.17,0.57,0.68}
\begin{document}

\ifconf
\title{Scenario-Based Proofs for Concurrent Objects}
\fi
\ifarxiv
\title{Scenario-Based Proofs for Concurrent Objects [Extended Version]}
\fi

\author{Constantin Enea}
\orcid{0000-0003-2727-8865}
\affiliation{%
  \institution{LIX - CNRS - École Polytechnique}
  \city{Paris}
  \country{France}
}
\email{cenea@lix.polytechnique.fr}

\author{Eric Koskinen}
\orcid{0000-0001-7363-634X}
\affiliation{%
  \institution{Stevens Institute of Technology}
  \city{Hoboken}
  \country{USA}
}
\email{eric.koskinen@stevens.edu}

\vspace{-10mm}
\begin{abstract}

Concurrent objects form the foundation of many applications that exploit multicore architectures and their importance has lead to informal correctness arguments, as well as formal proof systems.
Correctness arguments (as found in the distributed computing literature) give intuitive descriptions of  a few canonical executions or ``scenarios'' often each with only a few threads, yet it remains unknown as to whether these intuitive arguments have a formal grounding and extend to arbitrary interleavings over unboundedly many threads.

We present a novel proof technique for concurrent objects,
based around identifying a small set of scenarios (representative, canonical interleavings), formalized as the commutativity quotient of a concurrent object. 
We next give an expression language for defining abstractions of the quotient in the form of regular or context-free languages that enable simple proofs of linearizability. 
These quotient expressions organize unbounded interleavings into a form more amenable to reasoning and make explicit the relationship between implementation-level contention/interference and ADT-level transitions.

We evaluate our work on numerous non-trivial concurrent objects from the literature (including the Michael-Scott queue, Elimination stack, SLS reservation queue, RDCSS and Herlihy-Wing queue).
We show that quotients capture the diverse features/complexities of these algorithms, can be used even when linearization points are not straight-forward, correspond to original authors' correctness arguments, and provide some new scenario-based arguments. 
Finally, we show that discovery of some object's quotients reduces to two-thread reasoning and give an implementation that can derive candidate quotients expressions from source code.
\end{abstract}

\begin{CCSXML}
<ccs2012>
   <concept>
       <concept_id>10011007.10011074.10011099.10011692</concept_id>
       <concept_desc>Software and its engineering~Formal software verification</concept_desc>
       <concept_significance>500</concept_significance>
       </concept>
   <concept>
       <concept_id>10003752.10003790.10002990</concept_id>
       <concept_desc>Theory of computation~Logic and verification</concept_desc>
       <concept_significance>500</concept_significance>
       </concept>
   <concept>
       <concept_id>10003752.10010124.10010138</concept_id>
       <concept_desc>Theory of computation~Program reasoning</concept_desc>
       <concept_significance>500</concept_significance>
       </concept>
   <concept>
       <concept_id>10010147.10011777.10011778</concept_id>
       <concept_desc>Computing methodologies~Concurrent algorithms</concept_desc>
       <concept_significance>500</concept_significance>
       </concept>
 </ccs2012>
\end{CCSXML}

\ccsdesc[500]{Software and its engineering~Formal software verification}
\ccsdesc[500]{Theory of computation~Logic and verification}
\ccsdesc[500]{Theory of computation~Program reasoning}
\ccsdesc[500]{Computing methodologies~Concurrent algorithms}

\keywords{verification, linearizability, commutativity quotient, concurrent objects}

\maketitle


\section{Introduction}
\label{sec:intro}

Efficient multithreaded programs typically rely on optimized implementations of
common abstract data types ({\sc adt}s) like stacks, queues, and sets, whose operations execute in
parallel to maximize
efficiency. 
Synchronization between operations must be
minimized to increase
throughput~\cite{TAOMPP}. Yet 
this minimal amount of synchronization must also be adequate to ensure that
operations behave as if they were executed atomically, 
so that client programs can rely on their (sequential) {\sc adt} specification;
this de-facto correctness criterion is known as
\emph{linearizability}~\cite{DBLP:journals/toplas/HerlihyW90}. These opposing
requirements, along with the general challenge in reasoning about 
interleavings, make concurrent data structures a ripe source of insidious
programming errors. 

Algorithm designers (\eg researchers defining new concurrent objects) argue about correctness by considering some number of ``scenarios'', \ie interesting ways of interleaving steps of different operations, and showing for instance, that each one satisfies some suitable invariant (which is not necessarily inductive). 
For example, a scenario of the~\citet{conf/podc/MichaelS96} queue is described as: many threads concurrently reading, one enqueuer thread taking a specific read path finding a tail pointer to be outdated, and then succeeding a compare-and-swap (CAS) operation, causing others to fail their compare-and-swap (paraphrasing from~\citet{TAOMPP}). Such scenario descriptions are powerful because they describe unboundedly many threads and often generalize to cover many executions that are equivalent due to commutative re-orderings. Consequentially, informal correctness arguments need only consider a few representative scenarios. Furthermore, another critical benefit of scenario-based reasoning is that scenarios are more readily explainable to software developers, who need not have a background in formal logic.

Despite the intuitive benefit of these operational, scenario-based proofs---which continue to be widely used in the concurrent algorithms literature---it remains unknown as to whether they have a formal grounding. This has lead to cases where objects thought to be linearizable~\cite{DBLP:conf/wdag/DetlefsFGMSS00} where later determined to contain bugs in unconsidered scenarios~\cite{DBLP:conf/spaa/DohertyDGFLMMSS04}. 

\subsection{Formalizing Scenarios with Quotients}
\label{subsec:intro-quo}

In this paper, we show that operational, scenario-based correctness arguments can be formally grounded.
To that end, we propose a new proof methodology that is based on formal arguments while keeping the intuition of scenario-based reasoning. This methodology relies on a reduction to reasoning about a subset of \emph{representative} interleavings (i.e. a formal version of informal scenarios), which cover the whole space of interleavings modulo repeatedly swapping adjacent commutative steps. The latter corresponds to the standard \emph{equivalence up to commutativity} between the executions of an object (e.g., Mazurkiewicz traces~\cite{DBLP:conf/ac/Mazurkiewicz86}). 

Reductions based on commutativity arguments have been formalized in previous work, \eg~Lipton's reduction theory~\cite{DBLP:journals/cacm/Lipton75}, QED~\cite{DBLP:conf/popl/ElmasQT09}, CIVL~\cite{DBLP:conf/cav/HawblitzelPQT15}, and they generally focus on identifying \emph{atomic sections}, \ie~sequences of statements in a single thread that can be assumed to execute without interruption (without sacrificing completeness). 
Relying on atomic sections for reducing the space of interleavings has its limitations, especially in the context of concurrent objects. These objects rely on intricate algorithms where almost every step is an access to the shared memory that does not commute with respect to other steps. 

Our reduction argument reasons about a \emph{quotient} of the set of object executions, which is a subset of executions that contains a representative from each equivalence class. In general, an execution of an object interleaves an unbounded number of invocations to the object's methods, each from a different thread\footnote{Typically, it can be assumed w.l.o.g. that each thread performs a single invocation in an execution.}. These executions can be seen as a word over an infinite alphabet, each symbol of the alphabet representing a statement in the code and the thread executing that statement\footnote{Such a sequence will be called a \emph{trace} in the formalization we give later in the paper.}. We show that when abstracting away thread ids from executions, carefully chosen quotients become \emph{regular or context-free languages}. This is not true for any quotient since representatives of equivalence classes can be chosen in an adversarial manner to make the language more complex. 

The principal benefit of quotients is that reasoning about correctness can be done by considering only a few representative execution interleavings, yet those conclusions generalize to all executions. 
For some kinds of concurrent object implementations (defined later), deriving representative traces can be  reduced via induction to two-thread reasoning.

\emph{Proofs with program logics.}
Our work is inspired by the success of many prior works on proofs for concurrent objects based on program logics such as
\citet{DBLP:journals/cacm/OwickiG76},
Rely/Guarantee~\cite{DBLP:conf/ifip/Jones83},
Concurrent separation logic~\cite{DBLP:journals/tcs/Brookes07,DBLP:journals/tcs/OHearn07},
RGSep~\cite{DBLP:conf/concur/VafeiadisP07},
Deny-Guarantee~\cite{DBLP:conf/esop/DoddsFPV09},
Views~\cite{DBLP:conf/popl/Dinsdale-YoungBGPY13},
Iris~\cite{DBLP:conf/popl/JungSSSTBD15,DBLP:journals/jfp/JungKJBBD18}
and interactive proof tools for such logics.

The goal of this paper is orthogonal and focuses on finding a formal grounding for the operational, scenario-based 
correctness arguments present in the algorithms literature.
To this end, our methodology is based on taking representative interleaved traces upfront and using commutativity-based equivalence classes for modularity/generalization rather than exploiting the program structure and invariants for modularity/generalization.
Achieving this alternative reasoning strategy nonetheless requires careful formalization of what is meant by ``representative traces'', as well as how those classes of traces can be expressed abstractly, which we outline below. 
Our results show that (i) scenario-based reasoning can be done formally through quotients, (ii) quotients can be given for a variety of concurrent objects with subtle differences including non-fixed linearization points, (iii) quotients improve the correctness arguments from the literature, and (iv) 
for some cases, quotients---which represent interleavings of unboundedly many threads---can be automatically discovered through a reduction to two-thread reasoning.

\subsection{Example: Scenario-Based Proofs of the Michael-Scott Queue}

For the sake of concreteness, we now show how quotients make concurrent reasoning simpler, using the canonical Michael-Scott Queue (MSQ) as an example. Ultimately the theory and algorithms in this paper lead to an implementation that is able to automatically derive the representation discussed below, from the object's source code.
The MSQ is implemented as a linked-list, with head and tail pointers and a sentinel head node, as depicted to the left below.
\begin{center}
\includegraphics[width=5in]{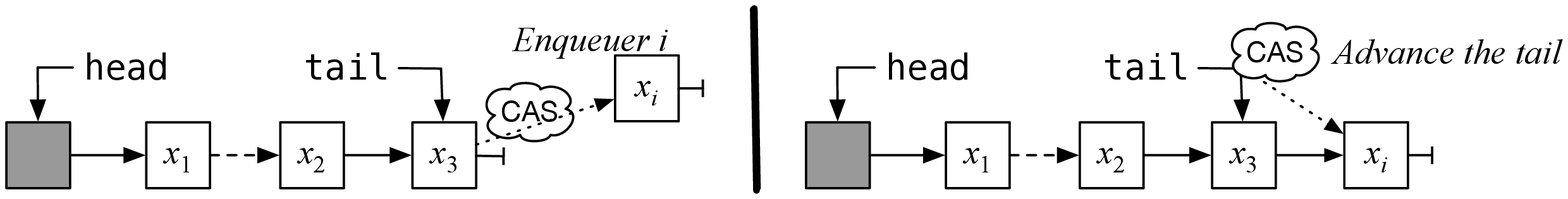}
\end{center}

An enqueue (\texttt{enq}) operation, such as $Enqueuer \; i$ in the diagram above, repeatedly attempts to enqueue a new element by using an atomic compare-and-swap (CAS) operation on the \texttt{tail} element’s \texttt{next} pointer, replacing null with the address of the new node ($x_i$ in the diagram above). It is possible that this CAS operation will fail due to a concurrent enqueuer (of which there can be unboundedly many). Nonetheless, due to the CAS, one enqueuer will succeed. At this point, although the element is linked, it is not logically in the queue because the \texttt{tail} pointer is lagging. The enqueuer will thus perform a second CAS operation, as shown on the digram above to the right, to advance \texttt{tail} to point to $x_i$. To ensure progress, concurrent enqueuers will also check to see if the \texttt{tail} lags and, if so, attempt to advance the \texttt{tail} before they attempt to enqueue their elements (i.e. helping).
A dequeue (\texttt{deq}) operation 
repeatedly attempts to advance the head pointer to make $x_1$ the new sentinel node,
but also has to check that the queue is non-empty and that other threads have not recently dequeued. (To achieve all of these cases, \texttt{deq} must begin by reading the \texttt{head} pointer, the \texttt{tail} pointer and \texttt{head}’s \texttt{next} pointer and validating to see which case applies.)

To verify the correctness of objects like the MSQ, one has to consider all of the ways in which concurrent invocations of unboundedly many methods could interleave. 
One strategy to tackle this problem has been through the aforementioned program logics such as rely-guarantee where, roughly, one defines state-based invariants and then shows they are preserved and threads don't interfere with other threads' actions.
Nevertheless, the correctness arguments laid out by algorithm designers (\eg~in the distributed computing community) 
typically are organized in a more operational manner and instead focus on discussing various ``scenarios.''
Consider the following excerpt from The Art of Multiprocessor Programming~\cite{TAOMPP} regarding the MSQ:

\QuoteAuthorCiteless{An enqueuer creates a new node, reads tail, and finds the node that appears to be last. To verify that node is indeed last, it checks whether that node has a successor. If so, the thread attempts to append the new node with CAS. (A CAS is required because other threads may be trying 
the same thing.) [Assume that] the CAS succeeds.}

\noindent
Such sentences describe scenarios that involve unboundedly many threads executing some portion of their programs. They are chosen to highlight tricky situations and describe why those situations are still acceptable. The above example can be thought of as the sequence:

\begin{enumerate}
\item Unboundedly many threads are reading the data structure.
\item There is a distinguished thread, let's call $\tau_{enq}$.
\item $\tau_{enq}$ reads the \texttt{tail} and the \texttt{tail}'s \texttt{next} pointer.
\item $\tau_{enq}$ finds that \texttt{tail}'s \texttt{next} is null.
\item $\tau_{enq}$atomically updates \texttt{tail}'s \texttt{next} to point to its new node.
\item The other (unboundedly many) threads fail their CASes on \texttt{tail}'s \texttt{next} and restart.
\end{enumerate}
This scenario has a particular shape about it: unboundedly many threads read, then a single thread performs a write, then the remaining threads react to that write. This is a common setup in many non-blocking concurrent algorithms and a useful pattern (although, in general, we will describe scenarios beyond those of this shape). One might think of it as a regular expression denoted $r_\texttt{next}$:
\[
r_\texttt{next} \equiv (\tau \in T: read + \tau_{enq}:read)^* \cdot 
  (\tau_{enq}: \texttt{cas/succeed}) \cdot
  (\tau \in T: restart)^*
\]
where $T$ is the (unbounded) set of all threads excluding $\tau_{enq}$.
Above $r_\texttt{next}$ expresses that some unboundedly many threads from set $T$ (including $\tau_{enq}$) perform only $read$-path actions, then $\tau_{enq}$ succeeds its \texttt{cas}, then those unboundedly many threads restart.
This expression is more powerful than it may first appear. There are a few important considerations:
\begin{itemize}
\item \emph{Conciseness}. The entirety of MSQ's concurrent execution behaviors can be represented with \emph{this and only two other} similarly concise representative interleavings, along with four even simpler read-only interleavings.
Expressions $r_\texttt{tail}$ and $r_\texttt{head}$ are similarly defined and represent advancing the tail pointer and the \texttt{head} pointer (due to a dequeuer), respectively.
\item \emph{Unbounded}. With these concise descriptions, the interleavings between an unbounded number of enqueuers and dequeuers can be seen as an unbounded alternation
$(r_{\texttt{next}} + r_{\texttt{tail}} + r_{\texttt{head}})^*$. 
(Below we will further refine this approximation with stateful automata.)
\end{itemize}

This starred-union description does not include all possible ways of interleaving steps of enqueuers, \eg~it does not include interleavings where a thread restarts after two successful CASs since it last read the shared memory. It includes just a subset of representatives that we call a quotient, which is succinct enough to correspond to the designer’s intuition and large enough to cover the whole space of interleavings modulo repeatedly swapping adjacent commutative steps (\ie~the standard equivalence up to commutativity between executions known as Mazurkiewicz traces~\cite{DBLP:conf/ac/Mazurkiewicz86}). For instance, an interleaving where a thread restarts after two successful CASs (since it last read the shared memory) is equivalent to one where the restart step is reordered to the left to occur immediately after the first CAS. This is because the restarting condition is fulfilled after this first CAS as well and the restart step does not perform any writes. 

The MSQ falls into a special class of objects for which quotients can be expressed in this inductive way, as a sequence of what we call ``layers'' (above 
$r_{\texttt{next}}$, $r_{\texttt{tail}}$ and $r_{\texttt{head}}$ are layers) wherein only a single shared memory \emph{write} action occurs per layer, and all other actions are thread-local/read-only (perhaps restarting due to a failed CAS). Consequently, it is possible via induction to reduce reasoning to a collection of two-threaded arguments (one writer, one reader). 
While quotients and their abstractions are a much broader class, layers are nonetheless an important subclass since they apply to many lock-free implementations and can be automated, as discussed below.

\subsection{Challenges and Contributions}

We now identify several challenges toward enabling scenario-based reasoning and discuss how we address them in this paper.

{\bf 1. Concurrent Object Quotients.}
\emph{How can scenario-based reasoning be done formally?}
(Sec.~\ref{sec:quotients})
We show that scenario-based reasoning can be made formal through a 
methodology wherein reasoning about all executions of a concurrent object is reduced to reasoning only about a smaller set of representative interleavings. At the technical core is the definition of an object's execution \emph{quotient} which collapses executions that are equivalent up to swapping commutative adjacent actions. A quotient is parameterized by this equivalence relation and has both a minimality constraint (no two executions are equivalent) and a completeness constraint (all executions are equivalent to some execution in the quotient). We prove that linearizability of the quotient is sufficient to show linearizability of the object. The upshot is that concurrent object correctness is now accomplished via reasoning about a collection of scenarios (as in typical informal proofs).

\smallskip
{\bf 2. Expressing Quotients.} \emph{How can a quotient set be described?}
(Sec.~\ref{sec:finite_reps})
A next question is how to \emph{finitely express} a quotient, which can have unboundedly many interleavings.
In Sec.~\ref{sec:quotients}, we introduce a \emph{quotient expression language} that permits a mixture of regular expressions (\eg~Kleene-star iterations of subexpressions) and context-free grammars (\eg~unbounded but balanced subexpressions). We then give an interpretation/semantics for these expressions that maintains the \emph{minimality} condition: there will only be one interleaving (with threads organized in a canonical order) for every unboundedly many unrolling.
The MSQ expression $(r_{\texttt{next}} + r_{\texttt{tail}} + r_{\texttt{head}})^*$ above provides an intuition for the quotient expression for the MSQ. (Technically, the $read$ actions are paths and the $*$-iterations within the $r_\texttt{x}$ subexpressions are replaced with a context-free form of iteration.)

As we will show later, quotients and their abstractions are \emph{expressive} and can capture canonical concurrent objects as well as more complicated ones such as the \citet{DBLP:journals/toplas/HerlihyW90} queue and the elimination stack of~\citet{DBLP:conf/spaa/HendlerSY04}, each having different kinds of non-fixed linearization points. These are notoriously hard cases for today's proof methodologies. 
We note that, while the idea of reasoning about execution quotients is generic, identifying precise limits for the applicability of the particular class of quotients expressions is hard in general. This is similar to using abstract domains in the context of static analysis: it is hard to determine precisely the class of programs for which interval or polyhedra abstractions are effective.

\smallskip
{\bf 3. Layer Quotient Expressions and Automata.}
(Sec.~\ref{sec:layers})
\emph{In addition to broad expressivity, are there 
classes of objects whose quotients have a simpler structure?}
To increase accessibility and automation, we next describe certain kinds of quotient expressions for which reasoning can actually be reduced, via induction, to two-thread reasoning.
Specifically, 
for objects whose implementation can be written as a collection of (possibly restarting) read-only/local paths and paths that have only a single atomic read-write, we define \emph{layer quotients} to more conveniently and inductively capture the quotient.
Although this does not apply to all objects, it does apply to canonical examples such as the MSQ, Treiber's Stack, and even the~\citet{Scherer2006} synchronous reservation queue.
For these objects, executions can be decompiled into a sequence of \emph{layers}, each described by context-free quotient expressions of the form $(a_1+b_1+\ldots)^n\cdot w\cdot (a_2+b_2+\ldots)^n$ where $a_1 \cdot a_2$ is a read-only path through the method implementation (possibly restarting), and $w$ is a path with a successful atomic read-write. The exponents in both expressions indicate the unbounded replication of local paths ($n$ is not fixed; it ensures prefix/suffix balancing). 
Then an overall quotient expression can be made from regular compositions of these context-free layers, leading to an inductive argument.
Furthermore, each layer can be discovered with two-thread reasoning: considering how each write, treated atomically, impacts each other read-only/local path.

\removed{We show that each such layer can be described by context-free quotient expressions of the form $(a_1+b_1+\ldots)^n\cdot w\cdot (a_2+b_2+\ldots)^n$ where $a_1 \cdot a_2$ is a read-only path through the method implementation (possibly restarting), and $w$ is a path with a successful atomic read-write. The exponents in both expressions indicate the unbounded replication of local paths ($n$ is not fixed; it ensures prefix/suffix balancing). The MSQ can be defined this way, where each subexpression $r_\texttt{x}$ is a layer. }

We describe how layer expressions can be conveniently represented as finite-state \emph{automata} (and further below also used for automation). 
\removed{The automata are defined by a number of states that, roughly, represent pre-conditions of layers, and a number of layer-labeled edges corresponding to a transition made by that layer. Thus a run of the automaton is an unfolding of a layer expression.} The layer automaton for the Michael-Scott Queue is shown in Fig.~\ref{fig:aut-msq}. We will discuss it in detail in Sec.~\ref{ssec:MSQ} but, roughly, the states track whether the queue is empty and whether the tail is lagging. The layer-labeled edges define the local/read-only (unbold) control-flow paths and how they are impacted by the write path (bold). There are also \emph{read-only} layers, which we will describe later.

\begin{figure}[t]\centering
\vspace{2mm}
\includegraphics[width=0.98\columnwidth]{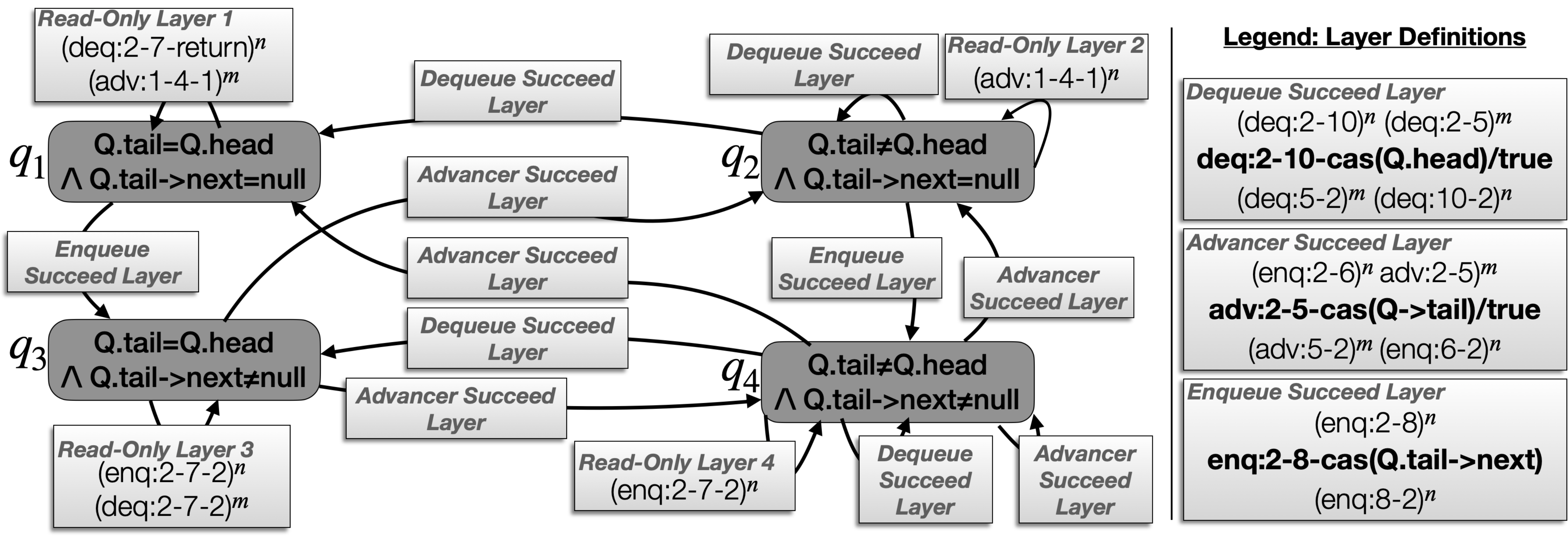}
\caption{\label{fig:aut-msq} Layer automaton for the Michael/Scott Queue.}
\end{figure}

\smallskip
{\bf 4. Evaluation: Verifying Concurrent Objects.} 
(Sec.~\ref{sec:cases})
We consider a broad range of concurrent objects including
Treiber's stack~\cite{IBMTR:Treiber86}, 
the~\citet{conf/podc/MichaelS96} queue,
the~\citet{Scherer2006} synchronous reservation queue,
the~\citet{DBLP:journals/toplas/HerlihyW90} queue,
the~\citet{DBLP:conf/spaa/HendlerSY04} elimination stack, and
the Restricted Double-Compare Single-Swap (RDCSS)~\cite{DBLP:conf/wdag/HarrisFP02}.
Each object has its own subtleties, including complications like multiple CAS steps and non-fixed linearization points. For each object we (i) show that its behavior and linearizability can be captured through a quotient and (ii) revisit the object's authors' correctness arguments. We find that quotients capture those intuitive scenarios and make scenarios explicit and comprehensive.

\smallskip
{\bf 5. Generating Candidate Quotient Expressions.}
(Sec.~\ref{sec:automation})
Automating quotient-based proofs of concurrent objects
is a rather large question (perhaps warranting new forms of induction) which we mostly leave to future work.
Nonetheless, we present an algorithm and prototype implementation \Tool{} for 
generating candidate quotient expressions, directly from a concurrent object's source code.
We manually confirmed that these expressions are sound abstractions of those objects' quotients.
\removed{Finally, we explore initial steps toward automation, showing that we can derive an object's quotient automatically from its source code. We give an algorithm for computing layer automata automatically, and then  
a prototype tool called \Tool{} that uses this algorithm to automatically derive automaton-based layer expressions from concurrent object implementations}
We applied \Tool{} to layer-compatible objects such as Treiber's Stack and the Michael/Scott Queue, 
finding that candidate layer expressions can be discovered in a few minutes.

\smallskip
\CONF{\emph{For lack of space, some detail has been omitted and is available \intheextended{} of this paper.}}
\ARXIV{\emph{This is an extended version of~\citet{OOPSLA2024}.}}
\emph{Our implementation \Tool{} is available on GitHub\footnote{\url{https://github.com/quotientprovers/cion}}, along with benchmark sources.}

\section{Preliminaries}\label{sec:prelim}

{\bf Running example: A simple concurrent counter.}
Fig.~\ref{fig:counter} lists a concurrent counter with methods

\begin{wrapfigure}{r}{4.2cm} 
\vspace{-5mm}
\hspace{1mm}
\begin{minipage}[t]{3.8cm}
\lstset{xleftmargin=1.0ex,numbersep=2pt}
\begin{lstlisting}
int increment() {
  while (true) {
    int c = ctr; `\label{ln:inc_read}`
    if (CAS(ctr,c,c+1)) `\label{ln:inc_cas}`
       return c;    
  } }
\end{lstlisting}
\vspace{-2mm}
\lstset{firstnumber=8}
\begin{lstlisting}
int decrement() {
  while (true) {
    int c = ctr; `\label{ln:dec_read}`
    if ( c == 0 ) return 0;
    if (CAS(ctr,c,c-1)) `\label{ln:dec_cas}`
      return c;    
  }  }
\end{lstlisting}
\end{minipage}
\vspace{-2mm}
\caption{\label{fig:counter} A concurrent counter.}
\vspace{-2mm}
\end{wrapfigure}
for incrementing and decrementing.
Both methods of the counter return the value of the counter before modifying it, and the counter is 
decremented only if it is strictly positive. 

Each method consists of a retry-loop that reads the shared variable \code{ctr} representing the counter and tries to update it using a Compare-And-Swap (CAS). 
A CAS atomically tests whether \code{ctr} equals the second argument and if this is the case, then it assigns the value specified by the third argument. If the test fails, then the CAS has no effect. The return value of CAS represents the truth value of
the equality test. If the CAS is unsuccessful, \ie~it
returns \emph{false}, then the method retries the same steps in another iteration.

The executions of the concurrent counter are interleavings of an arbitrary number of increment or 
decrement invocations from an arbitrary number of threads. 
Each invocation executes a number of retry-loop iterations until reaching the \code{return}. An iteration corresponds to a control-flow path that starts at the beginning of the loop and ends with a \code{return} or goes back to the beginning. For instance, the increment method consists of two possible iterations:
\begin{enumerate}
\item \code{c} = \code{ctr}; \code{CAS(ctr, c, c+1)}; \code{return c}, and
\item \code{c} = \code{ctr}; \code{assume(ctr != c)}.
\end{enumerate}
Iteration \#1 is called \emph{successful} because it contains a successful CAS, and the unsuccessful CAS in the iteration \#2 is written as an \code{assume} that blocks if the condition is not satisfied.

An invocation can execute more iterations if \code{ctr} is modified by another thread in between reading it at line~\ref{ln:inc_read} or~\ref{ln:dec_read} and executing the CAS at line~\ref{ln:inc_cas} or~\ref{ln:dec_cas}, respectively. Fig.~\ref{fig:counter-execution} pictures an execution with 3 increments that execute between 1 and 3 retry-loop iterations. The first iteration of threads 2 and 3 contains unsuccessful CASs because thread 1 executed a successful CAS and modified \code{ctr}, and these invocations must retry, execute more iterations. Note that there are unboundedly many such executions and, even with bounded threads, exponentially many interleavings.

\begin{figure}[t]
\vspace{2mm}
\includegraphics[width=0.95\textwidth]{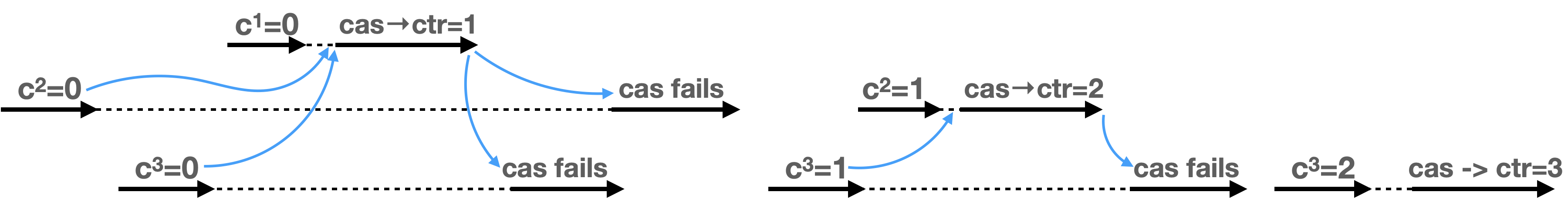}
\caption{
The steps of an execution with three increment-only threads whose actions are aligned horizontally. For readability, we rename the local variable \code{c} in thread $i$ to $\code{c}^i$. The curved blue arrows depict data-flow dependencies between reads/writes of \code{ctr}.}
\label{fig:counter-execution}
\end{figure}

\noindent
{\bf Concurrent Object Syntax.}
We model concurrent objects using Kleene Algebra with Tests~\cite{kozen97} (KAT). Intuitively, a KAT represents the code of an object method using regular expressions over symbols that represent conditionals (tests) or statements (actions).

\begin{definition}\label{def:KAT}[Kleene Algebra with Tests]
A KAT $\Kat$ is a two-sorted structure $(\Akat,\bkat,+,\;\kdot\;,*,\overline{\;},0,1)$, where $(\Akat,+,\;\kdot\;,*,0,1)$ is a Kleene algebra, $(\bkat,+,\;\kdot\;,\overline{\;},0,1)$ is a Boolean algebra, and the latter is a subalgebra of the former.
There are two sets of symbols: $\kAs$ for primitive actions,
and $\kBs$ for primitive tests. The grammar of boolean test expressions is
$BExp ::= b \in \kBs \mid b \kdot b \mid b + b \mid \overline{b} \mid 0 \mid 1$, and the grammar of KAT expressions is
$
KExp ::= a\in\kAs \mid b \in BExp \mid k \kdot k \mid k + k \mid k^* \mid 0 \mid 1
$.
For $k_1,k_2 \in \Kat$, we write $k_1 \leq k_2$ if $k_1 + k_2 = k_2$.
\end{definition}

\newcommand\ttx{\code{x}}
\newcommand\tty{\code{y}}
The primitive actions and tests used in examples in this paper will be along the lines of
$\kAs = \{ \ttx := \tty, \code{x.f := y}, \ldots \}$ and $\kBs = \{  \code{x = y}, \code{x.f = y}, \ttx=\code{null}, \code{x.f}=\code{null} \ldots \}$.

{\emph Atomic read-write (ARW).}
We conservatively extend KAT with a syntactic notation $\ttcondwr{b}{a}$, used to indicate a condition $b$ and action $a$, between which no other actions can be interleaved. Apart from restricting interleaving (defined below), this does not impact the semantics so it can be represented with two special symbols ``$\langle\!\!\!\langle$'' and ``$\rangle\!\!\!\rangle$'' whose semantics are the identity relation. For example a compare-and-swap \texttt{cas(x,v,v')} can be represented as 
$
(\ttcondwrd{[x=v]}{x:=v'}\kdot k) +
(\overline{\code{[x=v]}\kdot k')},
$
where $[x=v]$ is a primitive test and the assignment is a primitive action. Overline indicates negation, as in KAT notation. $k$ is the code to be executed when \texttt{cas} succeeds and $k'$ when it fails.

\noindent
\textbf{Methods of a concurrent object.}
We define a method signature $m(\As)/\rvs$ with a vector of arguments $\As$ and return values $\rvs$ (often a singleton $\rv$). For a vector $\vec{x}$, $x_i$ denotes its $i$-th component.
An \emph{implementation} of a method $m$ is a KAT expression $\kat_m$, whose actions may refer to argument values, \eg~$\ttx := \mathit{args}_i$. 
%
A \emph{concurrent object} $O$ is 
a set of methods $O=\{m_1(\As_1)/\rvs_1:\kat_{m_1},\ldots\}$, associating signatures with implementations. The set of method names in an object $O$ is denoted by $\meth(O)$.

\begin{example}\label{ex:ctrobj} The counter from Sec.~\ref{sec:prelim} is formalized as $O_{ctr} = \{ \texttt{inc}()/v:k_{inc}, \texttt{dec}()/u:k_{dec}\}$ 
\[\begin{array}{lll}
k_{inc} &=& 
(\code{c:=ctr}\kdot \left( (\ttcondwr{[c=ctr]}{ctr:=c+1}\kdot \Ret{\tid}{\code{c}}) + (\overline{[\code{c=ctr}]}) \right) )^* \\
k_{dec} &=& 
(\code{c:=ctr}\tightkdot 
\left(
  ([\code{c=0}]\tightkdot\Ret{\tid}{0})
 \! + \!
  (\overline{[\code{c=0}]}\tightkdot \ttcondwr{[c=ctr]}{ctr:=c-1}\tightkdot \Ret{\tid}{\code{c}})\!+\!(\overline{[\code{c=ctr}]})
\right))^*
\end{array}
\]
The outer $^*$ in $k_{inc}$ corresponds to the \code{while (true)} loop in the method \code{increment} while the inner $+$ corresponds to the two branches of the conditional. The KAT expression $k_{inc}$ represents every control-flow path of \code{increment} which goes a number of times through the assignment \code{c:=ctr} and the ``\code{false}'' branch of the conditional before succeeding the atomic read-write and returning (other sequences represented by this regular expression, \eg, iterating multiple times through the atomic read-write and return will be excluded when defining the semantics).
\end{example}


\noindent
{\bf Concurrent Object Semantics.}
A full semantics for these concurrent objects is given \inExtOrApx{apx:semantics}.
In brief, the semantics involves 
local states $\lstate \in \Lstate$,
shared states $\gstate \in \Gstate$,
and nondeterministic thread-local transition relation
$\lstate,\gstate,k \downarrow_\lbl \lstate',\gstate',k'$, which optionally involve label $\lbl$ ($k$ and $k'$ are KAT expressions representing code to be executed).
These \dt{labels} are taken from the set of possible labels 
$\Lbls \subseteq \kAs \cup \kBs \cup \Inv{\tid}{m(\vec{v})} \cup \Ret{\tid}{\rvs} \cup \condwr{b}{a}$
which includes primitive actions, primitive tests, call actions, return actions or ARWs. (We here write $\Inv{\tid}{m(\vec{v})}$ with free variables to refer to the set of all call actions and similar for returns and ARWs.)
Next, a configuration $C = (\gstate,T)$ where $T : \tids \rightharpoonup (\Lstate\times (\Kat\cup\{\bot\}))$ comprises a shared state $\gstate\in\Gstate$ and a mapping for each active thread to its local state and current code.  We use $\tids$ to denote the set of thread ids, which is equipped with a total order $<$. 
%
Configurations of an object transition according to the relation  $\exectrans{\_}: \Conf \times (\tids\times\Lbls) \times \Conf$, labeled with a thread id and a label. 


%

An object $O$ is acted on by a finite \dt{environment} $\env: \tids \rightarrow O \times \vec{\mathit{Val}}$, specifying which threads invoke which methods, with which argument values. $\mathit{Val}$ denotes a set of values and $\vec{\mathit{Val}}$ denotes the set of tuples of values. We assume that object methods can not access thread identifiers (which is true for concurrent objects defined in the literature) and therefore, each invocation is assumed to be executed by a different thread.
An \dt{execution} of $O$ 
in the environment $\env$ 
is a sequence of labeled transitions between configurations $C_0 \exectrans{\_} \ldots \exectrans{\_} C_n$
that starts in the initial configuration $C_0$ w.r.t. $\env$ and ends in configuration $C_n$. 
A  configuration $C_f=({\gstate}^f,T^f)$ is \dt{final} iff $T^f(\tid) = (\lstate,\bot)$, for some $\lstate$, for all $\tid \in \dom{T^f}$. An execution is \dt{completed} if it ends in a final configuration.
%
$\sem{O \otimes \env}$ denotes the set of completed executions of $O$ in the environment $\env$.
A \dt{trace} $\tr\in\traces$ is a sequence of $\tids\times\Lbls$ pairs, \ie~thread-indexed labels $\tid_0\!:\!\lbl_0,\ldots,\tid_n\!:\!\lbl_n$.
A trace of an execution $\exec$ denoted $\tr_\exec$ is a projection of the thread-indexed labels out of the transitions in the execution.

The \dt{semantics} $\sem{O}$ of a concurrent object $O$ is defined as the set of traces under all possible environments (\ie for any number of threads invoking any methods with any inputs). Formally,
$\sem{O} = 
\{ \tr_\exec \mid \mbox{$\exec\in\sem{O\otimes \env}$, for some environment $\env$}\}.
$

\noindent
{\bf Linearizability} 
For an object $O$, 
an \dt{operation} symbol (or operation for short) $o=m(\avals)/\rvals$ represents an invocation of a method $m\in \meth(O)$
with signature $m(\As)/\rvs$, where $\avals$ is a vector of values for the corresponding arguments $\As$, and $\rvals$ is a vector of values for the corresponding returns $\rvs$.
A \dt{sequential specification} $S$ for an object $O$ is a set of sequences over 
operation symbols. For instance, the sequential specification for the counter object includes sequences of increments and decrements corresponding to executions where each invocation executes in isolation, \eg $\texttt{inc}()/0\cdot \texttt{inc}()/1 \cdot \texttt{inc}()/2$ or $\texttt{inc}()/0\cdot \texttt{dec}()/1 \cdot \texttt{dec}()/0$.

A trace $\tr$ of an object $O$ is \dt{linearizable} w.r.t. a specification $S$ if there exists a (linearization-point) mapping $lp(\tr) : \tids \rightarrow \mathbb{N}$ where the label at position (index) $lp(\tr)$ in $\tr$ is considered to be the so-called \emph{linearization point} of $\tid$'s invocation, and must satisfy the following:
\begin{enumerate}
	\item\label{item:lin1} the position $lp(\tr)$ is after $\tid$'s invocation label and before $\tid$'s return,
	\item the (linearization) sequence $\lin{\tr,lp}$ of operation symbols $m(\avals)/\rvals$, where the $i$-th symbol represents the invocation of the $i$-th thread $\tid$ w.r.t. the positions $lp(\tr,\tid)$, belongs to $S$.
\end{enumerate}
For example, Fig.~\ref{fig:counter-execution} pictures a trace which is linearizable w.r.t. the counter specification described above because there exists a linearization-point mapping $lp$ which associates each thread $i$ with the position of the $i$-th successful CAS. The linearization $\texttt{inc}()/0\cdot \texttt{inc}()/1 \cdot \texttt{inc}()/2$ induced by this mapping is admitted by the specification.

For simplicity, we omit invocation labels from traces and consider the first instruction in an invocation to play the same role. 
%
Object $O$ is \dt{linearizable} wrt a spec.~$S$ if all traces in $\sem{O}$ are linearizable wrt $S$.


\section{Object Quotients}
\label{sec:quotients}

To formalize scenarios, we introduce the concept of a \emph{quotient} of an object which is a subset of its traces that represents every other trace modulo reordering of commutative steps or renaming thread ids. For an expert reader, the quotient is a partial order reduction~\cite{DBLP:books/sp/Godefroid96} composed with a symmetry reduction~\cite{DBLP:conf/cav/ClarkeEJS98} of its set of traces. In general, an object may admit multiple quotients, but as we show later, there exist quotients which can be finitely-represented using regular expressions or extensions thereof. We interpret scenarios as components (sub-expressions) of these finite representations.

Two executions $\exec_1$ and $\exec_2$ are \dt{equivalent up to commutativity}, denoted as $\exec_1 \eeq \exec_2$, if $\exec_2$ can be obtained from $\exec_1$ (or vice-versa) by repeatedly swapping adjacent commutative steps. An execution $\exec_2$ is obtained from $\exec_1$ through one swap of adjacent commutative steps, denoted as $\exec_1\eeq_1\exec_2$, if
\[
\begin{array}{l}
\exec_1 = 
C_0^\env  \cdots C_i \exectrans{(\tid:\lbl)} C_{i+1} 
\exectrans{(\tid':\lbl')} C_{i+2} \cdots C_n, \text{ and }
\exec_2 = 
C_0^\env  \cdots C_i \exectrans{(\tid':\lbl')} C'_{i+1} 
\exectrans{(\tid:\lbl)} C_{i+2} \cdots C_n \\[-.5mm]
\end{array}
\]
($\exec_2$ is obtained from $\exec_1$ by re-ordering the steps labeled by $\tid:\lbl$ and $\tid':\lbl'$). When there exist executions $\exec_1$ and $\exec_2$ as above, we say that the re-ordered labels $\lbl$ and $\lbl'$ are \dt{possibly commutative}.

\begin{definition} 
The equivalence relation $\eeq \subseteq \execs \times \execs$ between executions is the least reflexive-transitive relation that includes $\eeq_1$. 
\end{definition}

\noindent
The relation $\eeq$ is extended to traces as expected: $\tr_1\eeq \tr_2$ if $\tr_1$ and $\tr_2$ are traces of executions $\exec_1$ and $\exec_2$, respectively, and $\exec_1 \eeq \exec_2$.

For example, the Counter executions below are equivalent up to commutativity (related by $\eeq_1$):
\[
\exec = C_0 \cdots C_1
\exectrans{(t:\overline{[\code{c}_{t} \code{=ctr}]})}
C_2
\exectrans{(t':\code{c}_{t'} \code{:=ctr})}
C_3 \cdots \mbox{ and }
\exec' = 
C_0 \cdots C_1
\exectrans{(t':\code{c}_{t'} \code{:=ctr})}
C'_2
\exectrans{(t:\overline{[\code{c}_{t} \code{=ctr}]})}
C_3 \cdots
\]
assuming that $\code{ctr}>0$ at configuration $C_1$ (recall that $\overline{[\code{c}_{t} \code{=ctr}]}$ represents an unsuccessful CAS).

\begin{definition} 
Two traces $\tr_1$ and $\tr_2$ are \emph{equivalent up to thread renaming}, denoted as $\tr_1\symeq\tr_2$, if there is a bijection $\alpha$ between thread ids in $\tr_1$ and $\tr_2$, resp., s.t. $\tr_2$ is the trace obtained from $\tr_1$ by replacing every thread id label $\tid$ with $\alpha(\tid)$. 
\end{definition}

\noindent
For example, $C_0\exectrans{(t:a)}C_1\exectrans{(t':b)}C_2$
and $C_0\exectrans{(t':a)}C_1\exectrans{(t:b)}C_2$ are equivalent up to thread renaming.

We define a quotient of an object as a subset of its traces that is \emph{complete} in the sense that it represents every other trace up to commutative reorderings or thread renaming, and that is \emph{optimal} in that sense that it does not contain two traces that are equivalent up to commutativity. Optimality does \emph{not} include equivalence up to thread renaming (symmetry reduction) because the finite representations we define later abstract away thread ids.

\begin{definition}[Quotient]\label{def:quotient}
A \emph{quotient} of object $O$ is a set of traces $\equo{O} \subseteq \sem{O}$ such that:
\begin{itemize}
\item  $\forall \tr\in\sem{O}.\ \exists \tr', \tr''.
\tr \symeq \tr' \wedge \tr' \eeq \tr''
 \wedge \tr''\in\equo{O}$ (completeness), and
 \item $\forall \tr, \tr'\in\equo{O}.\; \tr \neeq \tr'$ (optimality)
\end{itemize}
\end{definition}

Note that an object admits multiple quotients since representatives of equivalence classes w.r.t. $\equiv$ can be chosen arbitrarily.

For a quotient $\equo{O}$, a set $\mathsf{Swaps}$ of pairs of possibly-commutative labels (in $\Lbls\times\Lbls$) is called \dt{$\equo{O}$-sufficient} if all the swaps needed to establish $\tr' \eeq \tr''$ in Def.~\ref{def:quotient} are between pairs of labels in $\mathsf{Swaps}$.

\begin{example}[Quotient and representative/canonical traces for the Counter]\label{ex:counter-phases}
The trace of three increment-only threads from Fig.~\ref{fig:counter-execution} represents many other traces of the Counter modulo commutative reorderings or thread renaming. It can be thought of as a sequence of three canonical phases, depicted with stacked parallelograms as follows:
\begin{center}
\includegraphics[width=0.95\textwidth]{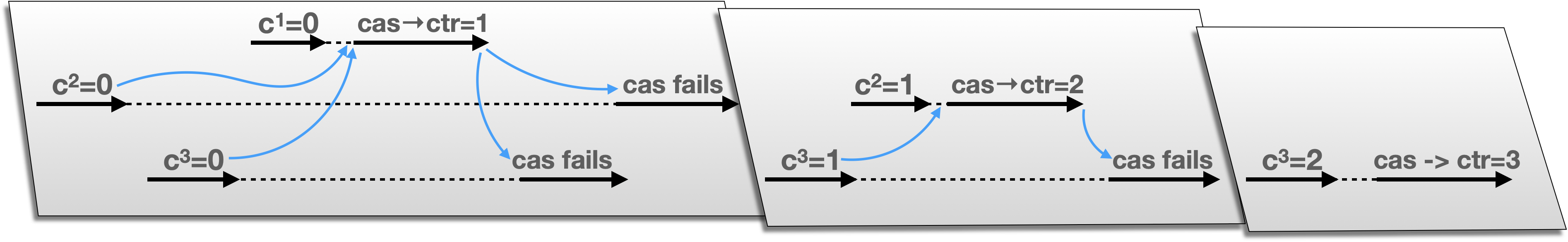}
\end{center}
Each phase above groups together the retry-loop iterations that interact with each other: a single successful CAS instruction causes the other attempts to fail. 
For instance, it represents another trace where the first ``cas fails'' step occurs after the second successful CAS:
\begin{center}
\includegraphics[width=0.80\textwidth]{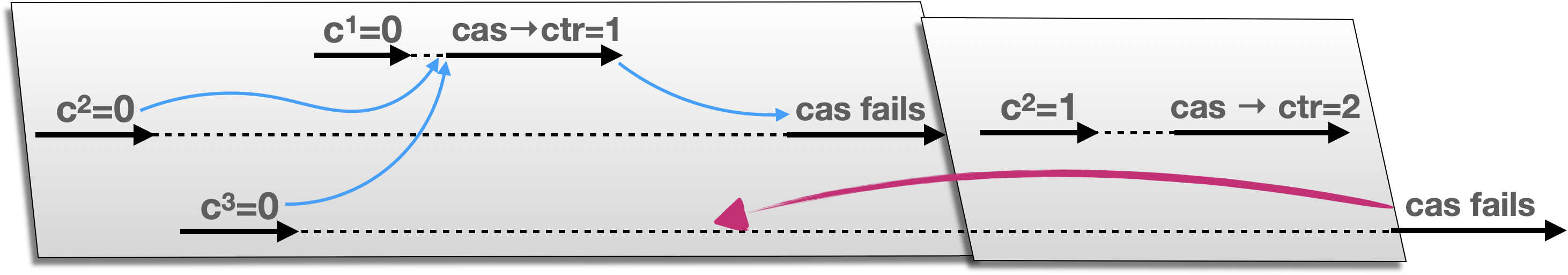}\\
\end{center}
This ``late'' CAS failure would also fail if moved to the left as shown above.
Similarly, it also represents traces where the action $c^2=0$ is swapped with $c^3=0$ and even $c^1=0$, 
or traces where thread ids change from 1, 2, 3 to 4, 5, 6 for instance.

One can define a quotient $\equo{O_{ctr}}$ of Counter which includes representative traces of this form. The representative traces only differ in the number of incrementers/decrementers and the order in which they succeed their CASs. $\equo{O_{ctr}}$ will contain similar canonical traces for, say, an environment with 4 incrementers, 2 decrementers acting in the sequence $incr;decr;decr;incr;incr;incr$ (wherein the second $decr$ does nothing). See Example~\ref{ex:counter-expr} for a more precise description. 
\end{example}

\noindent
{\bf Preserving Linearizability Through Commutative Reorderings.}
Our goal is to reduce the problem of proving linearizability for all traces of an object to proving linearizability only for traces in a quotient. Therefore, given two traces $\tr$ and $\tr'$ that are equivalent up to commutativity ($\tr\equiv\tr'$), where for instance, $\tr$ would be part of a quotient, an important question is whether the linearizability of $\tr$ implies the linearizability of $\tr'$. We show that this holds provided that the reordering allowed by the equivalence $\equiv$ is consistent with a commutativity relation between operations in the specification.

Given a specification $S$, two operations $o_1$ and $o_2$ are \dt{$S$-commutative} when $\eta_1 \cdot o_1\cdot o_2\cdot \eta_2\in S$ iff $\eta_1 \cdot o_2\cdot o_1\cdot \eta_2\in S$,  for every $\eta_1$, $\eta_2$ sequences of operations. 
Given a set of pairs of labels $\mathsf{Swaps}\subseteq \Lbls\times \Lbls$, a linearization point mapping $lp(\tr)$ of a trace $\tr$ is \dt{robust against $\mathsf{Swaps}$-reorderings} if for every two threads $\tid_1$ and $\tid_2$, if the linearization points of $\tid_1$ and $\tid_2$ form a pair in $\mathsf{Swaps}$, then the operations of $\tid_1$ and $\tid_2$ are $S$-commutative.

\begin{theorem}\label{th:com_lin}
Let $\tr\eeq \tr'$ be two equivalent traces, such that $\tr'$ is obtained from $\tr$ by swapping pairs of labels in some set $\mathsf{Swaps}$. If $\tr$ is linearizable w.r.t. some specification $S$ via a linearization point mapping $lp(\tr)$ that is robust against $\mathsf{Swaps}$-reorderings, then $\tr'$ is linearizable w.r.t. $S$.
\end{theorem}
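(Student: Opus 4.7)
\medskip\noindent\textbf{Proof plan.}
The plan is to induct on the length of the chain
$\tr = \tr_0 \eeq_1 \tr_1 \eeq_1 \cdots \eeq_1 \tr_n = \tr'$ guaranteed by $\tr\eeq\tr'$, reducing the argument to preservation under a single adjacent swap. In the base case $\tr=\tr'$, the linearization point mapping $lp(\tr)$ witnesses linearizability unchanged. For the inductive step I would assume linearizability of $\tr_i$ via a mapping $lp_i$ that is robust against $\mathsf{Swaps}$-reorderings, and show that $\tr_{i+1}$ is linearizable via a new mapping $lp_{i+1}$ which is also robust against $\mathsf{Swaps}$-reorderings (so the induction continues). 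Thus the core task is the one-swap preservation claim.

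\medskip\noindent\textbf{Constructing the new mapping.}
Suppose $\tr_{i+1}$ is obtained from $\tr_i$ by swapping the labels at positions $j$ and $j{+}1$, where $(\tid\!:\!\lbl)$ sits at position $j$ and $(\tid'\!:\!\lbl')$ at position $j{+}1$ in $\tr_i$, with $(\lbl,\lbl')\in\mathsf{Swaps}$. Since these two steps are possibly commutative and the local state of a thread is sequentially updated by its own steps, one has $\tid\neq\tid'$ (a fact I would state as a small lemma from the definition of $\eeq_1$). Define
\[
lp_{i+1}(t) \;=\;
\begin{cases}
j{+}1 & \text{if } lp_i(t)=j,\\
j & \text{if } lp_i(t)=j{+}1,\\
lp_i(t) & \text{otherwise,}
\end{cases}
\]
so that the label previously chosen as $t$'s linearization point remains its linearization point after the swap.

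\medskip\noindent\textbf{Checking the two linearizability conditions.}
For condition~(\ref{item:lin1}), I need $lp_{i+1}(t)$ to lie between $t$'s invocation and return in $\tr_{i+1}$. The only way this could fail is if moving $lp_i(t)$ from $j$ to $j{+}1$ (or symmetrically) pushed it past $t$'s return label, but $t$'s own invocation/return labels occur at positions owned by $t$, and because $\tid\neq\tid'$, at most one of positions $j,j{+}1$ belongs to $t$; hence neither swapped position is $t$'s invocation or return. A short interval argument then gives the desired inclusion. For condition~(2), the linearization sequence changes only when both $j$ and $j{+}1$ are linearization points of two distinct threads $t_1,t_2$; in every other case $\lin{\tr_{i+1},lp_{i+1}}=\lin{\tr_i,lp_i}$ as sequences of operation symbols and we are done.

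\medskip\noindent\textbf{The main obstacle.}
The hard part is precisely the case where both swapped positions are linearization points, say $lp_i(t_1)=j$ and $lp_i(t_2)=j{+}1$. Here $\lin{\tr_{i+1},lp_{i+1}}$ is obtained from $\lin{\tr_i,lp_i}$ by transposing the two adjacent operations $o_1$ of $t_1$ and $o_2$ of $t_2$. Because $lp_i$ is robust against $\mathsf{Swaps}$-reorderings and $(\lbl,\lbl')\in\mathsf{Swaps}$, the definition of robustness forces $o_1$ and $o_2$ to be $S$-commutative; by the definition of $S$-commutativity, swapping them preserves membership in $S$, so $\lin{\tr_{i+1},lp_{i+1}}\in S$. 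Finally I would observe that $lp_{i+1}$ inherits robustness against $\mathsf{Swaps}$ from $lp_i$: the underlying labels selected as linearization points are exactly the same multiset, so every pair of linearization-point labels related by a $\mathsf{Swaps}$-reordering already came from such a pair in $lp_i$ and hence witnesses $S$-commutativity. This closes the induction and yields linearizability of $\tr'$.
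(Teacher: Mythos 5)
Your proposal is correct and takes essentially the same approach as the paper: the paper's proof consists of a single line defining $lp(\tr')(\tid)$ as the index in $\tr'$ of the label $lp(\tr)(\tid)$, which is exactly the mapping your swap-by-swap induction constructs. Your elaboration — checking condition~(1) via $\tid\neq\tid'$, and handling the adjacent-transposition case via robustness and $S$-commutativity — is the natural filling-in of the details the paper leaves implicit.
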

The above holds by defining $lp(\tr')$ by 
$lp(\tr')(\tid)$ = the index in $\tr'$ of the label $lp(\tr)(\tid)$, for every $\tid$.

Theorem~\ref{th:com_lin} implies that proving linearizability for an object $O$ reduces to proving linearizability only for the traces in a quotient $\equo{O}$ of $O$, provided that the used linearization point mappings are robust against $\mathsf{Swaps}$-reorderings for a set $\mathsf{Swaps}$ which is $\equo{O}$-sufficient (thread renaming does not affect this reduction because specifications are agnostic to thread ids).

\section{Finite Abstract Representations of Quotients}\label{sec:finite_reps}

We define finite representations of sets of traces, quotients in particular, which resemble regular expressions and which denote context-free languages over a finite alphabet. The finite alphabet is obtained by projecting out thread ids from labels in a trace. As we show in the evaluation section, scenarios in previous informal proofs of many concurrent objects correspond to components of these expressions, and linearization points can be identified directly within such expressions.

Let $\exprSet$ be the set of expressions $\expr$ defined by the following grammar
\begin{align*}
\expr & = \seqlbl\mid \seqlbl_1^n\cdot \expr\cdot \seqlbl_2^n \mid\ \expr^* \mid\ \expr + \expr\mid \expr \cdot \expr 
\end{align*}
such that $\seqlbl,\seqlbl_1,\seqlbl_2\in (\kAs \cup \kBs \cup \condwr{b}{a})^*$ are finite sequences of labels, and for every application of the production rule $\seqlbl_1^n\cdot \expr\cdot \seqlbl_2^n$, $n$ is a fresh variable not occurring in $\expr$ (this ensures context-free abstractions). Therefore, for every expression in $\exprSet$, a variable $n$ is used exactly twice.

Such expressions have a natural interpretation as context-free languages by interpreting $^*$, $+$, and $\cdot$ as the Kleene star, union, and concatenation in regular expressions, and interpreting every $\seqlbl_1^n\cdot \expr\cdot \seqlbl_2^n$ as sequences $\seqlbl_1,\ldots,\seqlbl_1\cdot \sem{\expr} \cdot \seqlbl_2,\ldots,\seqlbl_2$ where the number of $\seqlbl_1$ repetitions on the left of $\expr$'s interpretation, denoted as $\sem{\expr}$, equals the number of $\seqlbl_2$ repetitions on the right. 

We define an interpretation $\sem{\expr}$ of expressions $\expr$ as sets of \emph{traces}, which differs from the above only in the interpretation of $\seqlbl$, $\seqlbl^*$, and $\seqlbl_1^n\cdot \expr\cdot \seqlbl_2^n$, for finite sequences of labels $\seqlbl,\seqlbl_1,\seqlbl_2$. 

\begin{definition}[Interpretation of an expression]\label{def:interpexpr}
For an expression $\expr$, 
\[\begin{array}{llp{4.0in}}
\sem{\seqlbl}&=&$\{\tid:\seqlbl\mid \tid\in\tids\}$, where $\tid:\seqlbl$ means that all the labels in $\seqlbl$ are associated with the same thread id $\tid$,\\
\sem{\seqlbl^*}&=& $\{\tid_0:\seqlbl, \ldots, \tid_k:\seqlbl \mid k\in\mathbb{N}, \tid_0<\ldots<\tid_k\}$, sequences of labels associated with increasing thread ids,\\
\sem{\seqlbl_1^n\cdot \expr\cdot \seqlbl_2^n}&=&$\{\tid_0:\seqlbl_1, \ldots, \tid_k:\seqlbl_1, \sem{\expr},\tid_k:\seqlbl_2, \ldots, \tid_0:\seqlbl_2\mid k\in\mathbb{N}, \tid_0<\ldots<\tid_k\}$, sequences of labels where the same sequence of increasing thread ids is associated to $\seqlbl_1$ and $\seqlbl_2$ repetitions (in reverse order), respectively,\\
\sem{\expr^*}&=&$\sem{\expr},\ldots,\sem{\expr}$, sequences of repetitions of $\sem{\expr}$,\\
\sem{\expr_1+\expr_2}&=&$\sem{\expr_1} \cup \sem{\expr_2}$, union of interpretations, and\\
\sem{\expr_1 \cdot \expr_2}&=&$\sem{\expr_1}, \sem{\expr_2}$, concatenation of interpretations.
\end{array}\]
\end{definition}

\noindent
For example, in the first case of Def.~\ref{def:interpexpr}, $\{ (t:\code{x:=v}), (t:\code{x++}) \} \in \sem{\code{x:=v} \cdot \code{x++}}$.
For an expression $
(\code{x:=r}^n
\cdot
\code{y:=s}^m
\cdot
\code{skip}
\cdot
\code{s:=y+1}^m
\cdot
\code{r:=x+1}^n
)
$, its interpretation includes traces such as
\[
(t_1:\code{x:=r}), 
(t_2:\code{x:=r}), 
(t_3:\code{y:=s}), 
(t_4:\code{skip}),
(t_3:\code{s:=y+1}), 
(t_2:\code{r:=x+1}), 
(t_1:\code{r:=x+1})
\]

\begin{definition}[Abstractions of quotients]\label{def:expr_abs_quo}
An expression $\expr\in\exprSet$ is called an \dt{abstraction} of an object quotient $\equo{O}$ if $\equo{O}\subseteq \sem{\expr}$. 
\end{definition}

\begin{figure}[t]\centering
\vspace{2mm}
\includegraphics[width=0.95\textwidth]{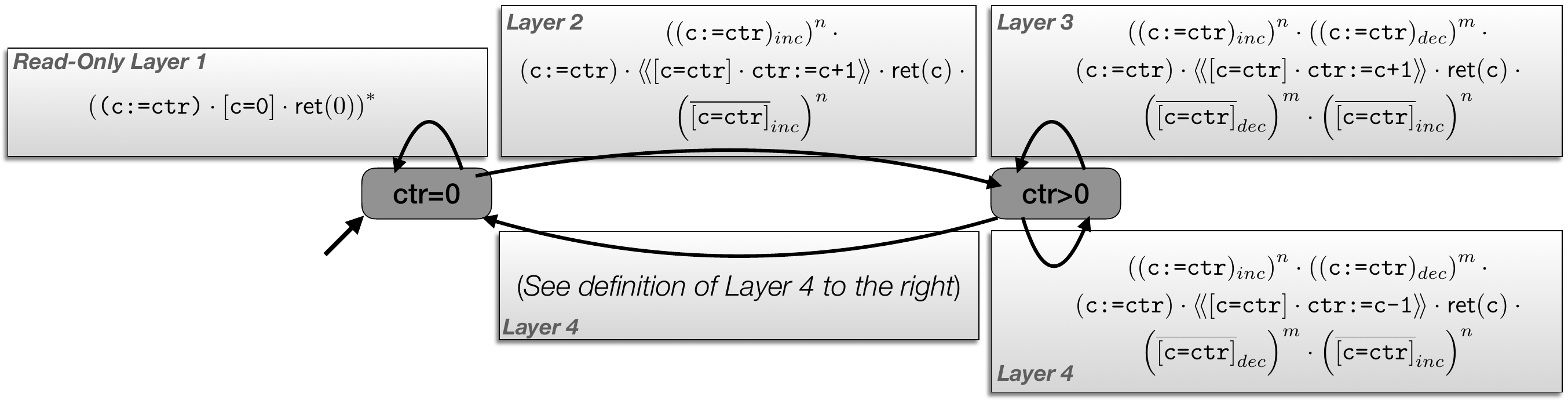}
\caption{An expression representing a quotient of the Counter. 
For readability we present it as four sub-expressions called ``layers'' whose composition with regular expression operators (concatenation, union, star) is represented using an automaton (all states are accepting). 
The full formal definitions of an example layer---from the quotient expression grammar---is given in Example \ref{ex:layer}.
In this figure, for conciseness, we subscript the primitives to indicate whether they were from increment-vs-decrement. Layer 1 represents decrements acting alone and finding the counter to be 0, Layer 2 corresponds to the first successful increment, Layer 3 and Layer 4 represent successful increments and decrements. For Layers 2 -- 4, some number $x$ of threads begin to read 
then a single different thread performs its complete write path, and then all $x$ threads fail their CAS instructions.
Technically, Layer 2 is a specialization of Layer 3, by letting $m=0$. However, treating them as separate layers provides a more refined representation.}
\label{fig:layers_automaton_formal}
\end{figure}

\begin{example}[Abstraction of a Quotient of the Counter]\label{ex:counter-expr}
An expression representing a quotient of the counter is given in Figure~\ref{fig:layers_automaton_formal}. 
%
The following trace is in the interpretation of this expression (for readability, we split the trace across lines, with segments labeled by layer names):
\newcommand\tightcolon{\!:\!}
\[\begin{array}{ll}
\text{Layer 2}: &
 \tid_2\tightcolon(c:=ctr) \tightkdot \tid_3\tightcolon(c:=ctr) \tightkdot \left(
\tid_1\tightcolon(c:=ctr) \tightkdot  \tid_1\tightcolon\condwr{[c=ctr]}{ctr:=c+1}  \tightkdot \tid_1\tightcolon\Ret{\tid_1}{0}  \right)\tightkdot\\
&\;\;\;    
\tid_3\tightcolon\overline{[c = ctr]} \tightkdot 
\tid_2\tightcolon\overline{[c = ctr]}\ \tightkdot
\\
\text{Layer 3}: &
\tid_3\tightcolon(c:=ctr) \kdot
\tid_2\tightcolon(c:=ctr) \kdot
  \tid_2\tightcolon\condwr{[c=ctr]}{ctr:=c+1} \kdot
\tid_2\tightcolon\Ret{\tid_2}{1}\kdot
\tid_3\tightcolon\overline{[c = ctr]} \kdot\\
\text{Layer 3}: &
\tid_3\tightcolon(c:=ctr) \kdot
  \tid_3\tightcolon\condwr{[c=ctr]}{ctr:=c+1} \kdot
  \tid_3\tightcolon\Ret{\tid_3}{2}
  \end{array}
\]

\emph{Linearizability.} Each layer corresponds to linearizing a single \emph{effectful} invocation, \ie~an increment invocation or a decrement invocation when the counter is non-zero, or an arbitrary number of \emph{read-only} invocations, \ie~decrement invocations when the counter is zero.
\end{example}

\section{Layers: An Inductive Quotient Language}
\label{sec:layers}

We show that, for a broad class of objects, 
we can provide a subclass of quotient abstraction expressions---that we will call \emph{layer expressions}---which, via an inductive argument, reduce reasoning about all executions (and about linearizability) to two-threads. 
This applies to numerous canonical examples such as Treiber Stack, the Michael-Scott Queue, a linked-list Set, and even the SLS Reservation Queue. For illustrative purposes, we will continue to use the concurrent Counter, whose quotient can also be expressed with layers.


Many lock-free\footnote{Lock-freedom requires that at least one thread makes progress, if threads are run sufficiently long. A slow/halted thread may not block others, unlike when using locks.} objects rely on a form of optimistic concurrency control where an operation repeatedly reads the shared-memory state in order to prepare an update that reflects the specification and tries to apply a possible update using an atomic read-write. The condition of the atomic read-write checks for possible interference from other threads since reading the shared-memory state.
The executions of such objects can be seen as sequences of what we call
``layers,'' each one being a triple consisting of 
(i) many threads all performing 
commutative local (\eg~read) actions,
(ii) a single non-commutative atomic read-write ARW on the shared state,
and (iii) those same initial threads reacting to the ARW with
more local commutative actions.
For example, incrementing the counter involves a successful cas operation on the shared variable,
which leads to other threads' old reads to go down a failure/restart path.
In fact, with this layer language one can consider
an arbitrary number of control-flow paths executed by an arbitrary number of threads where at most one can contain an atomic read-write. In the remainder of this section we discuss this in detail and then discuss automated discover of layers in Sec.~\ref{sec:automation}.

\subsection{Local-vs-Write Paths}

For an implementation $\Inv{\tid}{m(\As)}\kdot\kat_m\in\Kat$ of a method $m(\As)/\rvs$, a \emph{full (control-flow) path} of $\kat_m$ is a KAT expression $\kat$ such that $\kat \leq\kat_m$ and $\kat$ contains only primitive actions, tests or ARWs, composed together with $\kdot$ ($\kat$ contains no $+$ or $^*$ constructor). In a representation with control-flow graphs of $m$'s code, $\kat$ corresponds to a path from the entry point to the exit point.
A \emph{path} is any contiguous subsequence $\kat'$ of a full path $\kat$, i.e., there exists (possibly empty) $\kat_1$ and $\kat_2$ such that $\kat=\kat_1\kdot\kat'\kdot\kat_2$.
The set of paths of method $m$ is denoted by $\Pths(m)$, 
and as a straightforward extension, the set of paths of an object $O$ defined by a set of methods $m_i$ with $1\leq i\leq n$ is defined as $\Pths(O)=\bigcup_{1\leq i\leq n}\Pths(m_i)$. $\Pths_f(O)$ denotes the subset of \emph{full} paths in $\Pths(O)$.

A primitive action is called \emph{local} when it cannot affect actions or tests executed by another thread (atomic read-writes included), e.g., it represents a read of a shared variable or it reads/writes a memory region that has been allocated but not yet connected to a shared data structure (this region is still \emph{owned} by the thread). 
Formally, let $\sem{a}:(\Lstate \times \Gstate) \rightarrow (\Lstate \times \Gstate)$ and $\sem{b}:(\Lstate \times \Gstate) \rightarrow \{\mathit{true},\mathit{false}\}$ denote the functions defining the semantics of actions $a\in \kAs$ and tests $b\in \kBs$. Then, an action $a\in \kAs$ is \emph{local} iff for every $(\lstate',\gstate') = \sem{a}(\lstate,\gstate)$ and every $s\in \kAs\cup\kBs$ that occurs in some method implementation, $\sem{s}(\lstate'',\gstate)=\sem{s}(\lstate'',\gstate')$, for every local state $\lstate''$. 

A path is called \emph{local} if it contains only local actions, and a \emph{write path}, otherwise. 
Given a KAT expression $\kat'$ that represents a path,
we use $\fst(\kat')$ and $\lst(\kat')$ to denote the first and the last action or test in $\kat'$, respectively. 

\begin{example}\label{ex:counter_full_paths} Returning to the counter object $O_{ctr}$, the full paths are as follows:
\[
\begin{array}{ll}
\code{(c:=ctr)} \cdot \overline{[\code{c=ctr}]} & \code{(c:=ctr)} \cdot [\code{c=0}] \cdot \Ret{\tid}{0}\\
\code{(c:=ctr)} \cdot \ttcondwrd{{[c=ctr]}}{ctr:=c+1} \cdot \Ret{\tid}{\ttc} & \code{(c:=ctr)} \cdot \overline{[\code{c=ctr}]} \\
& \code{(c:=ctr)} \cdot \ttcondwrd{{[c=ctr]}}{ctr:=c-1} \cdot \Ret{\tid}{\ttc}\\
\end{array}
\]
The first two paths are from $k_{inc}$ and the last three are from $k_{dec}$. Paths without ARWs consist of only \emph{local} actions, that may read global \ttctr, but they do not mutate any global variables. 
\end{example}

\newcommand\prek{\overleftarrow{\kat}}
\newcommand\sufk{\overrightarrow{\kat}}
\newcommand\canonLayer{\left((\prek_1)^{n_1}\tightkdot (\prek_2)^{n_2}\cdots (\prek_N)^{n_N}\right) \kdot \kat_w\kdot \left((\sufk_N)^{n_N}\tightkdot (\sufk_{N-1})^{n_{N-1}}\cdots (\sufk_1)^{n_1}\right)}

\subsection{The Language of Layers}

We now define layer expressions and discuss how they represent an object's quotient.

\begin{definition}[Basic Layer Expressions]\label{def:layer_exprs}
A \emph{basic layer expression} $\lambda$ has one of two forms:
\begin{itemize}
\item \emph{local layer}: $(\kat_l)^*$ where $\kat_l$ is a \emph{local} path in  $\Pths(O)$.
\item \emph{write layer}: $\canonLayer$, where 
\begin{enumerate}
\item $\kat_w$ is a write path in $\Pths(O)$,
\item for each $j\in[1,N], \prek_{j}\cdot \sufk_{j}$ is a \emph{local} path in $\Pths(O)$ and the prefix and suffix are each repeated $n_j$ times,
\item $\lst(\prek_{j})$ and $\fst(\sufk_{j})$ do not commute with respect to the ARW in $\kat_w$.
\end{enumerate}

\end{itemize}
\end{definition}
\noindent
The first type, \emph{local layers}, represent 
unboundedly many threads executing a local path $\kat_l$.
Since each instance of the path is local, they all commute with each other, so the interpretation puts them into a single, canonical order which follows the increasing order between their thread ids (by the interpretation of $*$ in quotient expressions; see Def.~\ref{def:interpexpr}).

The second type, \emph{write layers}, represents an interleaving where 
threads execute $n_j$ read-only prefix $\prek_{j}$ of paths (in a canonical, serial order), then 
a different thread executes a non-local path $\kat_w$, and then $n_j$ corresponding suffixes 
$\sufk_{j}$ occur, finishing their iteration reacting to the write of $\kat_w$.
Again, the interpretation $\sem{\lyr}$ of a write layer associates these KAT action labels with increasing thread ids.
Prefixes and suffixes of local paths can be assumed to execute serially as in the first type of layer. The non-commutativity constraint ensures that such an interleaving is ``meaningful'', \ie~it is not equivalent to one in which complete paths are executed serially. 

A \dt{layer expression} is a collection of basic layer expressions, combined in a regular way via $\kdot, +,$ or $*$ (defined in Sec.~\ref{sec:finite_reps}). That is, a layer expression represents complete traces as sequences of layers. 


\begin{example}\label{ex:layer} 
The expression given in Fig.~\ref{fig:layers_automaton_formal} representing a quotient of the Counter is a layer expression. It combines a single read-only layer with other three write layers.
One layer $\lambda_{inc}$ pertains to the increment write path, along with the local paths that fail their CAS attempts. Here, we consider full paths. This basic expression $\lambda_{inc}$ is:
\[\lambda_{inc} \equiv \left[ \begin{array}{l}
  \left(\odot\begin{array}{l}
        (\code{c:=ctr})_{inc}\\
        (\code{c:=ctr})_{dec}
        \end{array}\right)
\tightkdot (\code{c:=ctr})\cdot\condwr{[\code{c=ctr}]}{\code{ctr:=c+1}}\tightkdot\Ret{\tid}{\code{c}} \tightkdot
  \left(\odot\begin{array}{l}
        \overline{[\code{c=ctr}]}_{dec}\\
        \overline{[\code{c=ctr}]}_{inc}
        \end{array}\right)
\end{array}\right]^\circledast \]
This layer interleaves the write path between prefixes/suffixes of the two local paths. We subscript the primitives to indicate whether they were from increment-vs-decrement.  
The $\fst$ and $\lst$ actions/tests do not commute with the interleaved writer's ARW.

\end{example}


\noindent
\textbf{Support of a Layer.} 
The \emph{support} of a basic layer expression $\lambda$, denoted by $\supp(\lambda)$, is defined as a set of KAT expressions where a single prefix/suffix local path is concretized to a single occurrence, and interleaved with the write path.
Intuitively, the support of a write layer characterizes all of the
pair-wise interference by representing interleavings of two paths executed by different threads.

\begin{definition} For basic layer expression $\lambda$, $\supp(\lambda)$ is defined as:
\begin{itemize} 
\item If $\lambda$ is a local layer $\lambda=(\kat_l)^*$,
then $\supp(\lambda) = \{ \kat_l \}$.
\item If $\lambda$ is a write layer $\lambda=\canonLayer$,\\
then $\supp(\lambda) = \{ 
\prek_{j}\kdot \kat_w\kdot \sufk_{j}
\mid j \in [1,n]
\}$.
\end{itemize}
\end{definition}

\begin{example}\label{ex:support} 
For Layer 3 in Fig.~\ref{fig:layers_automaton_formal} involving the increment write path 
$\kat_w=(\code{c:=ctr})\cdot\condwrd{[\code{c=ctr}]}{\code{ctr:=c+1}}\cdot\Ret{\tid}{\code{c}}$, $\supp(\text{Layer 3})=
\{(\ttc\code{:=ctr})_{inc}\kdot\kat_w\kdot \overline{[\ttc\code{=ctr}]}_{inc}, 
(\ttc\code{:=ctr})_{dec} \kdot\kat_w\kdot \overline{[\ttc\code{=ctr}]}_{dec}\}$.
Here there are only two elements of the support, the first being a local path through increment and the second being a local path through decrement. 
\end{example}

The paths $\Pths(\lambda)$ of a basic layer expression $\lambda$ are defined from its support: (1) if $\lambda$ is a local layer, then $\Pths(\lambda)=\supp(\lambda)$, and (2) if $\lambda$ is a write layer, then $\{\kat_w, \prek_{j}\kdot \sufk_{j}\}\subseteq \Pths(\lambda)$ iff $\prek_{j}\kdot \kat_w\kdot \sufk_{j}$ is included in $\supp(\lambda)$. The paths $\Pths(\expr)$ of a layer expression $\expr$ is obtained as the union of $\Pths(\lambda)$ for every basic layer expression $\lambda$ in $\expr$.


\subsection{Proof Methodology with Two-Thread Reasoning}

Recall that layer expressions represent languages of traces so we now ask whether a given expression is an abstraction of an object's quotient (Def.~\ref{def:expr_abs_quo}).
That is: whether each execution $\exec$ of an object is equivalent to some execution $\exec'\eeq\exec$, where the trace of $\exec'$ is in the interpretation of the expression.

Interestingly, this can be done by considering only two threads at a time,
since local paths do not affect the feasibility of a trace. Therefore, it is sufficient to focus on interleavings between a \emph{single} local or write path $\kat$ (on a first thread) and a sequence $\vec{\kat}_w$ of (possibly different) write paths (on a second thread), and show that they can be reordered as a sequence of layers, i.e., $\kat$ executes in isolation if it is a write path, and interleaved with at most one other write path in $\vec{\kat}_w$, otherwise (it is a local path). Applying such a reordering for each path $\kat$ while ignoring other local paths makes it possible to group paths into layers. The reordering must preserve a stronger notion of equivalence defined as follows: two executions $\rho$ and $\rho'$ are \emph{strongly equivalent} if they are $\eeq$-equivalent, they start and resp., end in the same configuration, and they go through the same sequence of shared states modulo stuttering. This notion of equivalence guarantees that any local path enabled in the context of an arbitrary interleaving between $\kat$ and $\vec{\kat}_w$ remains enabled in the context of an interleaving where for instance, $\kat$ executes in isolation. 
A more detailed proof for the following theorem is given \inExtOrApx{apx:proof_methodology_sound}.

\begin{theorem}
\label{def:completeness_methodology}
Let $O$ be an object defined by a set of methods $m_i$ with implementations $\Inv{\tid}{m_i(\As)}\kdot\kat_{m_i}\in\Kat$.
A layer expression $\expr=(\lambda_1+\ldots+\lambda_n)^*$ is an abstraction of a quotient of $O$ if 
\begin{itemize}
\setlength\itemsep{1mm}
\item the layers cover all statements in the implementation: $\Pths(\expr)\subseteq \Pths(O)$ and for each primitive action, test or ARW $\kat_p$ in $\kat_{m_i}$ for some $i$, there exists a path in $\Pths(\expr)$ which contains $\kat_p$,
\item for every path $\kat\in\Pths(\expr)$ and every execution $\exec$ of $O$ starting in a reachable configuration that represents\footnote{An execution $\exec$ represents an interleaving $\kat\,\mid\mid\,\vec{\kat}_w$ if it interleaves two sequences of steps labeled with symbols in $\kat$ and $\vec{\kat}_w$, respectively (in the same order). An execution $\rho$ represents a path sequence $\vec{\kat}$ when it is a sequence of steps labeled with symbols in $\vec{\kat}$ (in the same order).}  an interleaving $\kat\,\mid\mid\,\vec{\kat}_w$, where $\vec{\kat}_w$ is a sequence of write paths in $\Pths(\expr)$, 
\begin{itemize}
\item Write Path Condition (WPC): if $\kat$ is a write path, there is an exec.~$\exec'$ of $O$ s.t. $\exec'$ is strongly equivalent to $\exec$, and 
$\exec'$ represents a 
write path sequence $\vec{\kat}_w^1 \kdot \kat\kdot \vec{\kat}_w^2$ where $\vec{\kat}_w=\vec{\kat}_w^1\kdot\vec{\kat}_w^2$,
\item Local Path Condition (LPC): if $\kat$ is a local path, there exists an execution $\rho'$ of $O$ such that $\exec'$ is strongly equivalent to $\exec$ and 
\begin{itemize}
\item $\exec'$ 
represents a path sequence $\vec{\kat}_w^1 \kdot \kat\kdot \vec{\kat}_w^2$ where $\vec{\kat}_w=\vec{\kat}_w^1\kdot\vec{\kat}_w^2$ ($\kat$ executes in isolation) and $\kat$ is the support of a local layer $\lambda_j$, $1\leq j\leq n$, or 
\item a sequence $\vec{\kat}_w^1 \kdot \kat_l^1\kdot \kat_w\kdot \kat_l^2\kdot \vec{\kat}_w^2$ where $\vec{\kat}_w=\vec{\kat}_w^1\kdot\kat_w\kdot\vec{\kat}_w^2$ and $\kat_w$ is a write path ($\kat$ interleaves with a single write path $\kat_w$), and $\kat_l^1\kdot \kat_w\kdot \kat_l^2\in \supp(\lambda_j)$ for some write layer $\lambda_j$, $1\leq j\leq n$.
\end{itemize}
\end{itemize}
\end{itemize}
\end{theorem}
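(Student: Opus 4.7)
The plan is to prove, by strong induction on the number of completed write paths in an execution $\exec \in \sem{O \otimes \env}$, that $\tr_\exec \symeq \tr' \eeq \tr''$ for some trace $\tr'' \in \sem{\expr}$, where $\expr = (\lambda_1 + \ldots + \lambda_n)^*$. The first step decomposes $\exec$ into per-thread sequences of control-flow paths determined by the method implementations $\kat_m$. By the coverage bullet of the hypothesis, every such path appears in $\Pths(\lambda_j)$ for some basic layer $\lambda_j$ of $\expr$, so the question reduces to regrouping the path occurrences in $\exec$ into a valid sequence of basic-layer interpretations.

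For the base case, when $\exec$ has no write paths, all per-thread paths are local. Iteratively applying LPC with $\vec{\kat}_w$ empty yields a strongly equivalent execution in which every local path runs without interruption. Grouping consecutive identical local paths produces a concatenation of local layers $(\kat_l)^*$; a final $\symeq$ renaming inside each group puts the thread ids into the increasing canonical order required by Def.~\ref{def:interpexpr}.

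For the inductive step, let $\kat_w$ be the first write path whose ARW step occurs in $\exec$. Apply WPC with $\vec{\kat}_w^1$ empty to obtain a strongly equivalent $\exec_1$ in which $\kat_w$ executes without being interleaved with another write path. Then, for each local path $\kat_l$ of another thread whose steps lie around $\kat_w$'s window in $\exec_1$, invoke LPC: depending on which disjunct holds, $\kat_l$ is either pulled entirely out of $\kat_w$'s window (contributing a step of a local layer placed just before or after $\kat_w$'s layer) or retained as a prefix/suffix pair $\prek_j \cdot \kat_w \cdot \sufk_j$ matching a support element of a write layer centered on $\kat_w$. After all such applications, an initial segment of the rearranged execution constitutes at most one local layer followed by one write layer, both matching basic layers of $\expr$. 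Apply the induction hypothesis to the remaining suffix, which has strictly fewer completed write paths and begins in a reachable configuration. Concatenating layers and performing a final $\symeq$ renaming inside each layer produces a trace in $\sem{(\lambda_1 + \ldots + \lambda_n)^*}$.

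The main obstacle is arguing that the successive WPC and LPC applications around a single $\kat_w$ compose coherently: each application reorders only a two-thread slice of the execution, and later applications must not invalidate earlier placements of $\kat_w$ or already-classified local paths. Strong equivalence is the key tool here: since it preserves the shared-state trajectory modulo stuttering, once $\kat_w$ is placed, moving any additional local path cannot alter any test outcome or ARW value elsewhere, so feasibility of the suffix is preserved and the induction hypothesis applies. A secondary subtlety is invoking the non-commutativity clause of Def.~\ref{def:layer_exprs}(3) to justify that the accumulated prefix/suffix pieces truly constitute a write layer (rather than being further collapsible to a purely sequential composition), but this is a shape constraint on the target $\lambda_j$ and does not obstruct the rearrangement construction itself.
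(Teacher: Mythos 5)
Your proposal takes a genuinely different route from the paper's. The paper inducts on the number of \emph{defective} paths in the execution --- write paths that are interleaved with anything at all, and local paths that interleave with more than one write path (or with one write path without forming a layer support) --- and repairs one defect at a time: it extracts the offending path together with the write paths it meets, applies WPC or LPC to that sub-execution, and splices the strongly equivalent replacement back into the full execution; the layer decomposition then falls out of the base case in which no defects remain. You instead induct on the number of write paths and peel layers off the front. Your version is more constructive (it directly produces the layer sequence and makes the final $\symeq$-canonicalization of thread ids explicit, which the paper leaves implicit), at the cost of having to commit to a ``first'' layer and to defer unclassified local paths into the suffix before recursing; the paper's defect-counting avoids both.

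Two steps in your argument need repair. First, WPC and LPC apply only to executions that represent an interleaving $\kat \,\mid\mid\, \vec{\kat}_w$, i.e.\ executions containing \emph{nothing but} the designated path and a sequence of write paths; you invoke them on (a window of) the full execution $\exec$, which also contains other threads' local steps. You must first project $\exec$ onto the relevant paths, argue that the projection is still a feasible execution from a reachable configuration (this is where locality is used: deleting local actions cannot disable any remaining step), apply the condition, and then argue that the strongly equivalent result can replace the projected piece inside $\exec$ --- the paper does all three moves explicitly. Second, WPC's conclusion is existential in the split $\vec{\kat}_w=\vec{\kat}_w^1\kdot\vec{\kat}_w^2$, so you cannot simply ``apply WPC with $\vec{\kat}_w^1$ empty''; you need the additional observation that strong equivalence preserves the shared-state trajectory modulo stuttering, so the serialized order of write paths must follow the order of their ARWs, and choosing $\kat_w$ to be the write path with the earliest ARW then forces $\vec{\kat}_w^1=\varepsilon$. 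Neither issue is fatal, and your closing paragraph correctly identifies strong equivalence as the glue that makes the local rewrites compose.
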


\begin{example}[Counter layers via two-thread reasoning]
We now proceed to show that the \emph{starred union} of the basic layer expressions defined in Fig.~\ref{fig:layers_automaton_formal} is an abstraction of a quotient.
Concerning WPC, a write path is of the form $(\code{c:=ctr})\cdot\condwrd{[\code{c=ctr}]}{\code{ctr:=c+1}}\cdot\Ret{\tid}{\code{c}}$. Such paths can be reordered to execute in isolation because the ARW is enabled only if the counter did not change its value since the read, and therefore, the read $\code{c:=ctr}$ can be reordered after any step of another thread that may occur until the ARW. Also, the return action is local and can be reordered to occur immediately after the ARW.
%
LPC holds because any ``late'' CAS failure (that occurs after more than one successful CAS) would also fail if moved to the left (as explained in Example~\ref{ex:counter-phases}).
\end{example}

\subsection{Automaton Representation of Layer Quotients}

A layer expression comprised simply of a starred union of basic layer expressions is not always appealing since some layers are not enabled from some configurations.
For instance, as shown in Figure~\ref{fig:layers_automaton_formal} for the Counter, the read-only ``decrement returning 0'' layer cannot occur after one successful increment layer. 
(In formal notation, layer $\lambda_{dec0}$ of $O_{ctr}$ in Example~\ref{ex:layer} is enabled only when $\code{ctr}$ is 0.)
In other words, the starred starred union composition of layers
can be refined further to enforce certain orders in which layers can occur, by taking into account reachability.

We therefore describe a more convenient representation as a \emph{layer automaton}, in which the automaton states represent abstractions (sets) of concrete configurations in executions (as defined in Sec.~\ref{sec:prelim}) and the transitions are labeled by basic layer expressions. 
Another example of such an automaton was seen for the Michael-Scott queue
in Fig.~\ref{fig:aut-msq} in Sec.~\ref{sec:intro}.
Briefly, the control states correspond to the
configurations of the objects (\eg, whether the MSQ is empty, tail is lagged, etc.),
and the transitions are labeled by basic layer expressions (\eg, the ``\textsf{Dequeue Succeed}'' layer from Fig.~\ref{fig:aut-msq}, in which one thread succeeds a CAS on the \code{head} pointer and other threads fail their CAS). 
These layer automata are a convenient representation of the quotient and, as shown in Sec.~\ref{sec:automation}, we can derive candidate 
layer quotients represented as layer automata automatically from source code.




\begin{definition}[Layer automaton]
Given an object $O$, a \emph{layer automaton} is a tuple 
$
\mathcal{A} = (\QQ, \QQ_0,\Lyrs, \trans)
$
where $\QQ$ is a finite set of states representing abstractions (sets) of configurations of $O$, $\QQ_0\subseteq \QQ$ is the set of initial states, and $\delta \subseteq \QQ \times 2^{\Lyrs} \times \QQ$ is a set of transitions labeled with basic layer expressions (elements of $\Lyrs$) with the constraint that an edge $q \xrightarrow{\alpha} q'$ can only be one of two types:
\begin{enumerate}
\item Unique self-loop: $\alpha=\lyr_1\cdots\lyr_n$ is a sequence of $n\geq 1$ local layers,  
$q'=q$, and there are no other self-loops $q\xrightarrow{\alpha'}q$. 
\item Single write layer edges: $\alpha=\lyr$ is a single write layer.
\end{enumerate}
\end{definition}

\noindent

The \dt{interpretation} of the automaton, denoted by $\sem{\mathcal{A}}$, as a layer expression is defined as expected, except that the label of a self-loop is not starred. For instance, the interpretation of an automaton consisting of a single state $q$ and self-loop $q\xrightarrow{\alpha}q$ is defined as $\alpha$ instead of $\alpha^*$.

\begin{theorem}\label{th:layer_automaton}
Given an object $O$ and a layer automaton $\mathcal{A}=(\QQ, \QQ_0,\Lyrs, \trans)$, the layer expression $\sem{\mathcal{A}}$ is an abstraction of a quotient of $O$ if 
\begin{itemize}
	\item the starred union of the basic layer expressions labeling transitions of $\mathcal{A}$ is an abstraction of a quotient of $O$ (Theorem~\ref{def:completeness_methodology}), 
	\item every initial configuration of $O$ is represented by some abstract state in $\QQ_0$, and every reachable configuration is represented by some abstract state in $\QQ$,
	\item for every layer $\lambda$ in $\sem{\mathcal{A}}$, if there exists an execution $\rho$ representing $\lambda$ from a reachable configuration $C$ to a configuration $C'$, then $\mathcal{A}$ contains a transition $q\xrightarrow{\alpha'}q$ where $q$ is an abstraction of $C$ and $q'$ is an abstraction of $C'$.
\end{itemize}
\end{theorem}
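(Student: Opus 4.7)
The plan is to show that every trace $\tau$ belonging to the quotient $\equo{O}$ admitted by the plain starred-union expression $\expr_\cup = (\lambda_1 + \ldots + \lambda_n)^*$ (which, by the first hypothesis and Theorem~\ref{def:completeness_methodology}, is an abstraction of a quotient of $O$) also lies in the interpretation $\sem{\sem{\mathcal{A}}}$ of the automaton-based expression. Since $\sem{\mathcal{A}}$ is defined by restricting the order in which the $\lambda_i$ may appear, we have $\sem{\sem{\mathcal{A}}} \subseteq \sem{\expr_\cup}$ automatically; the whole content of the theorem is in the reverse direction, \ie~showing that no quotient trace is lost when passing from $\expr_\cup$ to $\sem{\mathcal{A}}$.

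First I would fix a trace $\tau \in \equo{O}$ and its underlying execution $\rho$ of $O$ from some initial configuration $C_0$. Using the decomposition of $\tau$ into basic layers guaranteed by $\expr_\cup$, write $\tau = \tau_1 \cdots \tau_k$ with each $\tau_j \in \sem{\lambda_{i_j}}$, and split $\rho$ accordingly into configurations $C_0, C_1, \ldots, C_k$ where the segment from $C_{j-1}$ to $C_j$ represents $\lambda_{i_j}$. Then I would inductively construct a run of $\mathcal{A}$ on this layer sequence. For the base case, condition 2 gives an initial abstract state $q_0 \in \QQ_0$ with $C_0$ abstracted by $q_0$. For the inductive step, assuming state $q_{j-1}$ abstracts the (reachable) configuration $C_{j-1}$, I apply condition 3 to the execution segment representing $\lambda_{i_j}$: it supplies a transition $q_{j-1} \xrightarrow{\alpha} q_j$ in $\mathcal{A}$ whose label $\alpha$ either equals $\lambda_{i_j}$ (write-layer edge) or contains $\lambda_{i_j}$ as one of its atoms (unique self-loop of local layers), with $q_j$ abstracting $C_j$.

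The subtle case is a unique self-loop whose label $\alpha = \lambda_{s_1} \cdots \lambda_{s_m}$ is a concatenation of several local layers. Here I would collect each maximal contiguous block of indices $j$ for which $C_{j-1}$ is abstracted by $q$ and $\lambda_{i_j}$ is a local layer, and argue that the block can be reorganized into iterations of $\alpha$: the uniqueness clause on self-loops forces all such local layers at $q$ to be among the $\lambda_{s_\ell}$, and because local layers act only by stuttering on the shared state and commute with each other, any reordering needed to match $\alpha^*$'s canonical atom order produces an execution strongly equivalent to $\rho$. Since the automaton interprets a self-loop label without the outer star but allows revisiting the state via the outer expression, each such block is realized by traversing the self-loop the appropriate number of times.

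The main obstacle I foresee is the second bullet above: reconciling the two decompositions, \ie~ensuring that whenever the execution sits at a configuration abstracted by $q$, the next layer $\lambda_{i_j}$ picked by the $\expr_\cup$-decomposition is in fact enabled by some outgoing edge of $q$. This is exactly what condition 3 is for—it says that every feasible layer from a reachable configuration appears as an outgoing edge of the corresponding abstract state—but applying it cleanly requires that the inductively constructed $q_{j-1}$ really does abstract the concrete $C_{j-1}$ reached along $\rho$, which in turn depends on the abstraction $\QQ$ being closed under reachability (the second hypothesis). Once this correspondence is established, the sequence $q_0 \to q_1 \to \cdots \to q_k$ is an accepting run of $\mathcal{A}$ on $\tau_1 \cdots \tau_k$, hence $\tau \in \sem{\sem{\mathcal{A}}}$. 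Combined with completeness and optimality of $\equo{O}$ inherited from Theorem~\ref{def:completeness_methodology}, this yields $\equo{O} \subseteq \sem{\sem{\mathcal{A}}}$, proving $\sem{\mathcal{A}}$ is an abstraction of a quotient of $O$.
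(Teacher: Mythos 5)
Your overall strategy is the right one, and in fact it is the argument the three hypotheses are designed to support (the paper states this theorem without giving a proof): decompose a quotient trace into basic layers using the starred-union hypothesis, then use the second hypothesis to seed and the third to drive an inductive construction of an accepting run of $\mathcal{A}$. Two points need tightening. First, your final conclusion ``$\tau \in \sem{\sem{\mathcal{A}}}$'' is slightly too strong: when a block of local layers at a state $q$ appears in $\tau$ in an order or multiplicity that does not match the self-loop label $\lyr_{s_1}\cdots\lyr_{s_m}$, what you actually produce (by the commutation argument you sketch) is a strongly equivalent trace $\tau'$ that \emph{is} accepted by $\mathcal{A}$. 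The correct conclusion is therefore that $\sem{\mathcal{A}}$ abstracts the quotient obtained from $\equo{O}$ by replacing each such $\tau$ with its canonicalized $\tau'$; one must then check (easily) that this reordered set is still complete and optimal, since the theorem only demands that $\sem{\mathcal{A}}$ abstract \emph{some} quotient. Second, your inductive step implicitly assumes that the transition supplied by condition~3 departs from the \emph{particular} state $q_{j-1}$ your run has reached, whereas the condition only asserts the existence of a transition from \emph{some} state abstracting $C_{j-1}$. This is harmless when the abstract states partition the configurations (as in all the paper's instances, where states are conjunctions of predicates and their negations), but you should either invoke that disjointness explicitly or read condition~3 as quantifying over every abstracting pair $(q,q')$. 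With these two repairs the argument goes through.
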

%
%

The automaton in 
Fig.~\ref{fig:aut-msq} is a layer automaton for the MSQ (see Section~\ref{ssec:MSQ} for more details).


\begin{corollary} (To Thm.~\ref{th:com_lin})
If a layer expression $\expr$ is an abstraction of a quotient and 
there is a linearization point mapping for every 
trace in $\sem{\expr}$ that is robust against re-ordering,
then the object is linearizable.
\end{corollary}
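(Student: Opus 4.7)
The plan is to chain together the three main ingredients already established in the excerpt: the abstraction relation $\equo{O}\subseteq\sem{\expr}$ from Def.~\ref{def:expr_abs_quo}, the completeness/optimality properties of a quotient (Def.~\ref{def:quotient}), and the commutativity-preservation result for linearizability (Thm.~\ref{th:com_lin}). The goal is to take an arbitrary $\tr\in\sem{O}$ and exhibit a valid linearization-point mapping for it.

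First, I would fix an arbitrary trace $\tr\in\sem{O}$ and apply the completeness clause of Def.~\ref{def:quotient} for the quotient $\equo{O}$ that $\expr$ abstracts. This yields traces $\tr'$ and $\tr''$ with $\tr\symeq\tr'$, $\tr'\eeq\tr''$, and $\tr''\in\equo{O}\subseteq\sem{\expr}$. Let $\mathsf{Swaps}$ be an $\equo{O}$-sufficient set of label pairs witnessing $\tr'\eeq\tr''$. By hypothesis, $\tr''$ admits a linearization-point mapping $lp(\tr'')$ that is robust against $\mathsf{Swaps}$-reorderings; this is precisely the premise required to invoke Thm.~\ref{th:com_lin}, which transports linearizability from $\tr''$ back to $\tr'$ via the explicit construction $lp(\tr')(\tid) = $ the index in $\tr'$ of the label $lp(\tr'')(\tid)$.

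It remains to pass from $\tr'$ to $\tr$ across the thread-renaming equivalence $\symeq$. Let $\alpha$ be the bijection on thread ids such that $\tr$ is obtained from $\tr'$ by relabeling each $\tid$ to $\alpha(\tid)$. Define $lp(\tr)(\alpha(\tid)) = lp(\tr')(\tid)$; the per-thread position constraint (clause~\ref{item:lin1} of the definition of linearizability) is preserved verbatim since invocation and return labels of a thread are renamed together, and the induced linearization sequence $\lin{\tr,lp}$ coincides with $\lin{\tr',lp(\tr')}$ as a sequence of operation symbols because operation symbols carry no thread identifier. Hence $\lin{\tr,lp}\in S$ and $\tr$ is linearizable. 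Since $\tr$ was arbitrary, $O$ is linearizable.

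The only subtle point—and the one I would spend the most care on—is making explicit the choice of the $\equo{O}$-sufficient set $\mathsf{Swaps}$ and verifying that the informal phrase ``robust against re-ordering'' in the corollary really means robust against every such $\mathsf{Swaps}$. The corollary is stated pointwise on traces in $\sem{\expr}$, but completeness of $\equo{O}$ may route different $\tr\in\sem{O}$ through different representatives $\tr''$, so the robustness assumption must uniformly cover the swaps arising along any such route. Formally, one takes $\mathsf{Swaps}$ to be the union of all adjacent label pairs swapped in some derivation of $\tr'\eeq\tr''$ with $\tr''\in\equo{O}$; this is $\equo{O}$-sufficient by construction, and Thm.~\ref{th:com_lin} then applies uniformly. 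Everything else reduces to routine bookkeeping about projection of operation symbols.
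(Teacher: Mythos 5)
Your proposal is correct and follows exactly the route the paper intends: the corollary is obtained by combining the completeness clause of Def.~\ref{def:quotient} with Theorem~\ref{th:com_lin} applied to an $\equo{O}$-sufficient set of swaps, and dismissing the thread-renaming step because specifications are agnostic to thread ids — precisely the argument sketched in the paragraph following Theorem~\ref{th:com_lin}. Your closing remark about taking $\mathsf{Swaps}$ to be the union of swaps over all derivations is a sensible way to make the paper's informal phrase ``robust against re-ordering'' precise, and matches the paper's definition of $\equo{O}$-sufficiency.
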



\section{Evaluation: Verifying Concurrent Objects}
\label{sec:cases}

As discussed in Sec.~\ref{sec:intro}, our goal is to provide a formal foundation for the scenario-based linearizability correctness arguments found in the distributed computing literature. 
To evaluate whether quotients serve that purpose, we examined several diverse and challenging concurrent objects,
listed below. 

\noindent
\begin{center}
\begin{tabular}{l|l|l}
\toprule
{\bf Concurrent Object} & {\bf Quotient} & {\bf Features} \\
\midrule
Atomic counter  & Sec.~\ref{sec:prelim} & simple cas loop \\
\citet{conf/podc/MichaelS96} queue & {\bf Sec.~\ref{ssec:MSQ}} & many cas, cleanup helping \\
\citet{Scherer2006} queue & {\bf Sec.~\ref{subsec:resqueue}} & synchronous, mult.~writes, LP helping \\
\cite{IBMTR:Treiber86}'s stack & {\bf Sec.~\ref{subsec:elim}} & simple cas loop \\
\citet{DBLP:conf/spaa/HendlerSY04} stack   & {\bf Sec.~\ref{subsec:elim}} & elimination, submodule, LP helping\\
\citet{DBLP:conf/wdag/HarrisFP02} RDCSS & {\bf Sec.~\ref{subsec:rdcss}} & mult.~cas steps, phases \\
\citet{DBLP:journals/toplas/HerlihyW90} queue & {\bf Sec.~\ref{subsec:hwq}} & future-dependent LPs \\
\citet{DBLP:conf/podc/OHearnRVYY10} set  & \ARXIV{Apx.~\ref{apx:listset}}\CONF{Ext.~Ver.} & lock-free traversal\\
\bottomrule
\end{tabular}
\end{center}
\noindent
For each object, we (i) determine whether quotients can be used for verification and (ii) 
revisit the scenario-based correctness arguments given by the object's authors and compare those arguments to the quotient. We discuss the quotients of most in this section (with bold {\bf Sec 6.\_} in the {\bf Quotient} column); further detail can be found 
\ARXIV{in Apx.~\ref{apx:resqueue-quo}--\ref{apx:revisit}.}\CONF{in the appendix of the extended version~\cite{arxiv}.}

\emph{Results summary.} As we show, all above algorithms can be captured with quotient expressions. These expressions 
(i) capture the diverse features/complexities of these algorithms (per the {\bf Features} column),
(ii) provide a succinct, formal foundation for the scenario-based arguments used by those objects' authors,
(iii) organize unbounded interleavings into a form more amenable to reasoning,
(iv) make explicit the relationship between implementation-level contention/interference and ADT-level transitions,
and (v) provide a scenario proof for HWQ which did not have scenario arguments.

\subsection{The Michael/Scott Queue}\label{ssec:MSQ}

Recall the implementation of MSQ, stored as a linked list from global pointers \code{Q.head} and \code{Q.tail}, and manipulated as follows. (Some local variable definitions omitted for lack of space.)
\begin{center}
\lstset{xleftmargin=3.5ex,numbersep=2pt}
\begin{tabular}{l|l|l}
\begin{minipage}{1.56in}\footnotesize
\begin{lstlisting}[morekeywords={ret,loop}]
int enq(int v){ loop {
 node_t *node=...;
 node->val=v;
 tail=Q.tail;  
 next=tail->next;
 if (Q.tail==tail) {  
  if (next==null) {
   if (CAS(&tail->next,
          next,node))
      ret 1;
} } } }
\end{lstlisting}
\end{minipage}
&
\begin{minipage}{1.7in}\footnotesize
\begin{lstlisting}[morekeywords={ret,loop}]
int deq(){ loop { 
 int pval;
 head=Q.head;tail=Q.tail;
 next=head->next;
 if (Q.head==head) { 
  if (head==tail) {
   if (next==null) ret 0;
  } else {
    pval=next->val;
    if (CAS(&Q->head,
            head,next))
      ret pval;
  } } } } }
\end{lstlisting}
\end{minipage}
&
\begin{minipage}{1.7in}\footnotesize
Factored out 
tail advancement:\\
(see notes below)\\
\begin{lstlisting}[morekeywords={ret,loop}]
adv(){ loop {
 tail=Q.tail; 
 next=tail->next;
 if (next!=null){
  if (CAS(&Q->tail,
    tail,next)) 
   ret 0;
 }
} }
\end{lstlisting}
\end{minipage}\\
\end{tabular}
\end{center}

\newcommand\ttenq{\code{enq}}
\newcommand\ttdeq{\code{deq}}
\newcommand\ttadv{\code{adv}}
\newcommand\ttQtail{\code{Q.tail}}
\newcommand\ttQtailnext{\code{Q.tail->next}}
\newcommand\ttQhead{\code{Q.head}}
\newcommand\ttQheadnext{\code{Q.head->next}}
Values are stored in the nodes between \ttQhead\ and \ttQtail, with \ttenq\ adding new elements to the \ttQtail, and \ttdeq\ removing elements from \ttQhead.
During a successful CAS in \ttenq, the \ttQtailnext\ pointer is changed from null to the new node. However, this new item cannot be dequeued until \ttadv\ advances \ttQtail\ forward to point to the new node. A \ttdeq\ on an empty list (when \ttQhead=\ttQtail) returns immediately. Otherwise, \ttdeq\ attempts to advance \ttQhead\ and, if success, returns the value in the now-omitted node. The original MSQ implementation includes the \ttadv\ CAS inside \ttenq\ and \ttdeq\ iterations. 
We have done this for expository purposes and it is not necessary. As we will see in Sec.~\ref{subsec:resqueue}, the SLS queue performs this tail (and head) advancing directly in the enqueue/dequeue method implementation.


\emph{Quotient.}
The layer automaton that abstracts a quotient of MSQ, mentioned briefly in Sec.~\ref{sec:intro}, is shown in Fig.~\ref{fig:aut-msq}.
The automaton states track whether \code{Q.tail=Q.head} and whether \code{Q.tail->next} is \texttt{null}, in rounded dark boxes. 
Edges are labeled with layers (discussed below), defined to the right in Fig.~\ref{fig:aut-msq}.
The write operations in those layers induce the automaton state changes as shown by the various edges between automaton states. For example, the \textsf{Dequeue Succeed} layer can move from automaton state $q_2$ to $q_1$. 
The three layers of the MSQ characterize three forms of interference:
%
\begin{description}
\item[The \textsf{Dequeue Succeed} layer] occurs when a dequeue thread successfully advances the \ttQhead\ pointer, causing concurrent dequeue CAS attempts to fail, as well as dequeue threads checking on Line 5 whether \ttQhead\ has changed. (We abbreviate local paths using line numbers rather than KAT expressions.)
\item[The \textsf{Advancer Succeed} layer] occurs when an advancer moves forward the \ttQtail\ pointer, causing concurrent advancer CAS attempts to fail, and causing concurrent \ttenq\ threads to find \ttQtail\ changed on Line 6.
\item [The \textsf{Enqueue Succeed} layer] occurs when an \ttenq\ thread successfully advances the \ttQtail\ pointer, causing concurrent \ttenq\ threads to fail.
\end{description}

\ARXIV{States are labeled $q_1,\ldots,q_4$.}
\ARXIV{Several transitions in the automaton are labeled with the same layer, so we have given those layers names (\eg~Dequeue Succeed Layer) and provided their definitions in the Legend to the right. These three layers characterize three forms of interference:}

\noindent
Naturally, some edges are not enabled. For example, there is no edge from $q_1$ to $q_2$, because the latter is not reachable from the former via a single write path/layer.
Also, while there are outbound edges from $q_1$, there is no layer involving a \ttdeq\ write operation (since the queue is empty).
Some non-local layers self-loop, such as the \textsf{Dequeue Succeed} layer self-loop at $q_4$.
There are also four \emph{local} layers that self-loop. These involve local paths that return (\eg~Read Only Layer 1 where \ttdeq\ returns because the queue is empty) or paths that loop while waiting (\eg~Read Only Layer 3 where \ttenq\ awaits the advancer thread).
%



\begin{theorem}\label{th:msq-abstraction}
The above layer automaton is an abstraction of a quotient
for Michael-Scott Queue.
\begin{proof}
Proof by the methodology of Def.~\ref{def:completeness_methodology}. 

The WPC condition requires that all write paths (that include successful CASs) can be reordered to execute in isolation. This is a direct consequence of the semantics of a successful CAS which checks that the value did not change since the last read of the written location. The \ttdeq{} successful CAS on \ttQhead{} insures that \ttQhead{} did not change since it was read at Line 3, which also means that its \texttt{next} pointer did not change (this pointer is written only once in \ttenq() for every node in the list). Therefore all actions on the \ttdeq{} path that includes the successful CAS can be reordered to execute together at the place of reading \ttQtail. Similarly the \ttenq{} successful CAS ensures that the actions between Line 5 and Line 8 can be reordered to occur together. Then, since the value of \ttQtail{} could not have changed without \ttQtailnext{} first having been changed, Lines 2-4 can also be reordered to occur together with the rest of the actions on this path. The case of the \ttadv{} write path is similar. 

The LPC condition follows from the fact that CAS operations always change the value so it is always possible to move a late ``failing'' CAS to the left so that it occurs after the first successful CAS following the previous reads in the same iteration. \qed
%
%
\end{proof}
\end{theorem}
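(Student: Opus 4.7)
The plan is to invoke Theorem~\ref{th:layer_automaton}, which reduces the claim to three subgoals: (i) the starred union of the basic layer expressions labeling edges of the automaton is an abstraction of a quotient of MSQ, (ii) every initial configuration is represented by some state in $\QQ_0$ and every reachable configuration by some state in $\QQ$, and (iii) for every layer $\lyr$ executable from a reachable configuration $C$ to $C'$, the automaton has a matching edge between the abstractions of $C$ and $C'$. I would organize the proof as one subsection per subgoal, with most of the work concentrated in (i).

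For (i), I would apply Theorem~\ref{def:completeness_methodology} and verify WPC and LPC for each of the three write layers (\textsf{Dequeue Succeed}, \textsf{Advancer Succeed}, \textsf{Enqueue Succeed}) and for each of the local layers. The WPC argument rests on two structural facts: a successful CAS certifies that the written location has not changed since it was last read by the same thread, and, in MSQ, a node's \code{next} pointer is written at most once (when the node transitions from being the tail to having a successor). Using these, for a dequeue write path I would argue that Lines~3--12 can be slid together since neither \ttQhead, nor \code{head->next}, nor its value field could have changed without invalidating the successful CAS. An analogous argument works for the \ttenq{} and \ttadv{} write paths by following the reads back through the conditional checks that guard each CAS. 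The LPC argument is a short stand-alone lemma: any failing-CAS path on \ttQhead{} (resp.\ \ttQtail, \ttQtailnext) can be moved leftward to occur immediately after the earliest successful CAS on that location in the current segment, since once the CAS-sourced value differs, the later test will still fail and no shared writes on the local path prevent the swap.

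For (ii), I would show that the two-bit abstraction $(\ttQtail\!=\!\ttQhead,\ \ttQtailnext\!=\!\code{null})$ covers every reachable configuration, with the initial empty-queue configuration mapped to $q_1$ (both bits true). A short induction over the three write layers, carried out for subgoal (iii) below, confirms that no further combinations arise. For (iii), I would do a case analysis on each write layer's effect on the abstract state: an \textsf{Enqueue Succeed} layer flips \ttQtailnext{} from null to non-null while leaving the \ttQtail{}-vs-\ttQhead{} bit fixed; an \textsf{Advancer Succeed} layer restores \ttQtailnext{} to null and may equate \ttQtail{} with \ttQhead{} only by shrinking the list (which in MSQ requires a prior dequeue---not achievable in a single layer---so the head/tail bit is actually preserved); and a \textsf{Dequeue Succeed} layer moves \ttQhead{} forward, which either leaves $\ttQhead\!\neq\!\ttQtail$ (self-loop at $q_4$) or equates them, transitioning to a \ttQtail-lagging state. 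Checking each case against Fig.~\ref{fig:aut-msq} closes this subgoal, and simultaneously confirms that no automaton state outside $\{q_1,\ldots,q_4\}$ is reachable.

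The main obstacle, I expect, will be the WPC reorderings involving interactions between the \ttenq, \ttadv, and \ttdeq{} write paths when the queue is near-empty, because the CAS targets (\ttQtailnext, \ttQtail, \ttQhead) are all pointers into the same short linked-list segment. Care is needed to ensure that when reordering a \ttdeq{}'s successful CAS on \ttQhead{} past an interleaved \ttenq{} or \ttadv{} step, the \ttdeq's read of \code{head->next} still reflects the value its successful CAS certified; the single-writer invariant on \code{next} fields is the key fact, but articulating it cleanly across the three interleaving write paths is the delicate part of the argument.
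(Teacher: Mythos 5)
Your proposal follows essentially the same route as the paper: the core of the argument is the verification of WPC via the semantics of a successful CAS (the written location is unchanged since the thread's read) combined with the write-once invariant on \code{next} pointers, and of LPC by sliding a late failing CAS leftward to just after the first successful CAS following its reads. You are more explicit than the paper in separately discharging the automaton-level conditions of Theorem~\ref{th:layer_automaton} (coverage of reachable configurations and existence of matching edges), which the paper leaves implicit; that is a reasonable and arguably cleaner decomposition. One small slip in your case analysis for subgoal (iii): you claim the \textsf{Advancer Succeed} layer preserves the $\ttQtail=\ttQhead$ bit, but advancing \ttQtail{} from a lagging-tail state in which $\ttQtail=\ttQhead$ and $\ttQtailnext\neq\code{null}$ necessarily makes $\ttQtail\neq\ttQhead$ (only \ttQtail{} is written in that layer), so the bit flips in that case; this is exactly the transition out of the lagging-empty state in Fig.~\ref{fig:aut-msq} and your analysis as stated would miss it.
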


\begin{theorem}\label{thm:linearizable-msq} The Michael-Scott Queue is linearizable.
\begin{proof}
We show that the traces in the quotient are linearizable via a linearization-point mapping which is robust against reorderings. Given a trace in the quotient (represented by the automaton in Fig.~\ref{fig:aut-msq}), the linearization points are the successful CAS operations in the \textsf{\{Dequeue,Advancer\} Succeed} layers
(also in {\bf bold} in the Fig.~\ref{fig:aut-msq} layer definitions), as well as the action corresponding to Line 7 in \texttt{deq()} which occurs in \textsf{Read-Only Layer 1}. The successful CAS operations are linearization points of dequeues returning some enqueued value and enqueues, respectively, and Line 7 is the linearization point of a dequeue returning empty. The validity of these linearization points can be proved by induction on the number of layers. The induction hypothesis will relate the last configuration of the quotient execution with a queue ADT state that is the sequence of elements reachable from \ttQhead. For instance, the successful CAS in the \textsf{Dequeue Succeed} layer will remove the first element in such a sequence which by the induction hypothesis is the oldest element in the queue. 
\\
By the proof of the quotient's completeness (Theorem~\ref{th:msq-abstraction}), successful CAS operations are never reordered. The only linearization point labels that can be reordered are those corresponding to Line 7 in \texttt{deq()} for a dequeue returning empty. It is easy to see that dequeues returning empty commute in the queue specification, which implies that the above linearization-point mapping is robust against a set of reorderings which is sufficient for this quotient. \qed
\end{proof}
\end{theorem}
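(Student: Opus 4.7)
The plan is to instantiate the corollary to Theorem~\ref{th:com_lin} using Theorem~\ref{th:msq-abstraction}: because the layer automaton of Fig.~\ref{fig:aut-msq} already abstracts a quotient of MSQ, it suffices to exhibit, for every trace $\tr$ in $\sem{\mathcal{A}}$, a linearization-point mapping $lp(\tr)$ such that (a) $\lin{\tr,lp}\in S$ for the FIFO queue specification $S$, and (b) $lp$ is robust against the set $\mathsf{Swaps}$ of label pairs actually used to derive $\eeq$-equivalence in the proof of Theorem~\ref{th:msq-abstraction}. Everything else follows mechanically.

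First I would fix $lp$ layer-by-layer. In every \textsf{Enqueue Succeed} layer the CAS on \ttQtailnext{} (the bold ARW in Fig.~\ref{fig:aut-msq}) is the LP of the distinguished enqueuer; in every \textsf{Dequeue Succeed} layer the CAS on \ttQhead{} is the LP of the distinguished dequeuer; and in every occurrence of \textsf{Read-Only Layer 1} the Line~7 observation \code{next==null} is the LP of the corresponding empty-dequeue invocation. \textsf{Advancer Succeed} contributes no linearization point, since \ttadv{} is bookkeeping. Each chosen label lies strictly between its thread's invocation and return, so condition~(\ref{item:lin1}) of the linearizability definition is discharged by construction.

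Next I would prove $\lin{\tr,lp}\in S$ by induction on the number of layers consumed along the automaton run of $\tr$. The invariant maintained at layer boundaries is that the sequence of \code{val} fields along the \code{next}-chain starting at \ttQhead->\code{next} (truncated at the first null, and stopping before the freshly linked but not-yet-observable tail node when the automaton state signals a lagging tail) equals the multiset-respecting FIFO of enqueues-minus-dequeues in the linearization so far. The four automaton states $q_1,\ldots,q_4$ provide the case split: the \textsf{Enqueue Succeed} layer extends the chain by one node, \textsf{Advancer Succeed} only reconciles \ttQtail{} with the chain and contributes no LP, \textsf{Dequeue Succeed} unlinks the oldest visible node so its \code{val} matches the earliest unmatched enqueue, and \textsf{Read-Only Layer 1} can fire only in $q_1$ where \ttQhead=\ttQtail{} and \ttQhead->\code{next}=\code{null}, certifying that the queue is genuinely empty. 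The other read-only layers preserve the invariant trivially.

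Finally I would verify robustness. The $\mathsf{Swaps}$ set exhibited in the proof of Theorem~\ref{th:msq-abstraction} only permutes (i) local actions among themselves and (ii) late failing CASes against earlier successful CASes, but never two successful CASes on the same location; hence the LPs chosen in the \textsf{Dequeue/Enqueue Succeed} layers are fixed under $\mathsf{Swaps}$ and trivially robust. The only LP labels that $\mathsf{Swaps}$ can reorder are the Line~7 reads of empty dequeues, which may be permuted among themselves; since a sequence of dequeues returning $0$ is $S$-commutative in the queue specification, robustness holds. The anticipated main obstacle is formulating the invariant precisely enough to cope with the lagging-tail window: one must rule out that any layer fired in a $q_i$ with an unadvanced \ttQtail{} observes the disagreement between the linked list and the abstract queue, and this is exactly where the automaton's state labeling (whether \ttQtail=\ttQhead{} and whether \ttQtailnext=\code{null}) must be shown to soundly characterize the reachable configurations traversed by the layer's ARW.
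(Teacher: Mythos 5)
Your proposal follows the paper's overall strategy --- reduce to the quotient via Theorem~\ref{th:msq-abstraction}, assign linearization points layer-by-layer, prove their validity by induction on layers against an abstract-queue invariant, and establish robustness exactly as the paper does (successful CASes are never swapped by the quotient construction; only empty-dequeue LPs get permuted, and those are $S$-commutative) --- but it departs from the paper on one substantive point: the placement of the enqueue LP. The paper, following \citet{TAOMPP}, linearizes an enqueue at the successful CAS of the \textsf{Advancer Succeed} layer (the swing of \ttQtail), which is why its induction relates configurations to the sequence of elements reachable from \ttQhead; you instead linearize it at the link CAS of the \textsf{Enqueue Succeed} layer and treat \ttadv{} as pure bookkeeping carrying no LP. Your choice is sound for this implementation (no element can be dequeued before \ttQtail{} has been advanced to it, and link CASes occur in list order, so FIFO order is preserved), and it buys something real: the LP is unambiguously an action of the enqueuing thread occurring before its return, whereas in the paper's factored presentation the tail swing may be performed by a separate \ttadv{} invocation after the enqueuer has already returned --- a subtlety the paper's proof does not address.

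The one thing you must repair is your invariant, which as written is inconsistent with your own LP choice: you truncate the chain ``before the freshly linked but not-yet-observable tail node'' when the tail lags, but that truncated chain is the abstract state induced by the \emph{paper's} LP placement, not yours. If the enqueue LP is the link CAS, the freshly linked node belongs to the abstract queue from the instant it is linked, so the invariant should equate the abstract queue with the entire \code{next}-chain from \ttQheadnext{} to the first null, with no truncation. With that correction your layer case analysis closes the induction: \textsf{Enqueue Succeed} appends to the chain, \textsf{Dequeue Succeed} removes its head (the oldest unmatched enqueue), \textsf{Advancer Succeed} and the remaining read-only layers leave the chain unchanged, and \textsf{Read-Only Layer 1} fires only in the state where \ttQhead=\ttQtail{} and \ttQtailnext=\code{null}, i.e., where the chain --- and hence the abstract queue --- is empty.
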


\noindent
{\bf Comparison with the Authors' Proof.}
We evaluated the quotient by comparing with the correctness arguments from~\citet{TAOMPP}.
For lack of space, the following table gives example elements of the correctness argument/proof from~\citet{TAOMPP}, and identifies 
where they occur in the quotient proof (see \ExtOrApx{apx:revisit}  for more details).

\medskip
\noindent
\begin{tabular}{|p{1.2in}|p{1.9in}|p{1.9in}|}
\hline
{\bf Proof Element} & \citet{TAOMPP} & {\bf Quotient Proof} \\
\hline
\hline
ADT states & \footnotesize ``queue is nonempty,'' ``tail is lagged'' & ADT states, e.g. (\texttt{Q.tail=}\texttt{Q.head} $\wedge$ \texttt{Q.tail->}\texttt{next} $\neq$ \texttt{null})\\
\hline
Concurrent threads & \footnotesize ``some other thread'' & Superscripting $(...)^n$ \\
\hline
Event order & \footnotesize ``only then'' & Arcs in the quo automaton\\
\hline
Thread-local step seq. & \footnotesize ``reads tail, and finds the node that appears to be last (Lines 12–13)'' & Layer paths, e.g., \textsf{enq:2-6}\\
\hline
Linearization pts. & \footnotesize ``If this method returns a value, then its linearization point occurs when it completes a successful [CAS] call at Line 38, and otherwise it is linearized at Line 33.'' & The successful CAS in the Dequeue Succeed Layer or Read-Only Layer 1 \\
\hline
\end{tabular}

\medskip
\noindent
The layer quotient and, especially, the layer automaton helps make the \citet{TAOMPP} proof more explicit, without sacrificing the organization of the proof, for a few reasons.
First, all of the important ADT states are explicitly identified.
Second, it can be determined, from each of them, which layers are enabled as well as the target ADT states that are reached after each such layer transition.
This ensures that all cases are considered.
Finally, \emph{linearization points} are explicit in the layer quotient, occurring once with each layer transition.

\subsection{The SLS Synchronous Reservation Queue}
\label{subsec:resqueue}

The~\citet{Scherer2006} (SLS) queue builds on MSQ, 
but has some complications:
queue operations are synchronous (blocking),
a single invocation can involve multiple sequentially composed write paths that necessitate different layers, and linearization points must account for dequeuers arriving before their corresponding enqueuer.

\emph{Implementation.} 
Like  MSQ, SLS has paths that read the \code{head} or \code{tail} pointer and subsequent pointers, perform read validations and then attempt a CAS. Also like MSQ, enqueuers arriving at an empty list (or list of items), attempt to append \emph{item} nodes (and then try to advance the tail pointer). Dequeuers arriving at a list of items, attempt to swap item node contents for null (and then try to advance the head pointer).

SLS then has some further complexities.
Dequeuers arriving at an empty list (or list of reservation nodes) attempt to append \emph{reservation} nodes (and attempt to advance tail).
Enqueuers arriving at a list of reservations, attempt to \emph{fulfill} those reservations by swapping null for an item  (and attempt to advance head).
The list never contains both items and reservations; when the list becomes empty it can then transition from an item list to a reservation list (or vice-versa).
Finally, SLS is \emph{synchronous:} dequeuers with reservations \emph{block} until those reservations have been fulfilled and enqueuers with items \emph{block} until those items have been consumed.
(For the sake of comprehensiveness, the implementation is \inExtOrApx{apx:resqueue}, but not necessary for a general understanding.) As noted, unlike MSQ where paths have at most 1 write operation, a single SLS invocation can perform multiple write operations (\eg~a dequeue path inserting a reservation, advancing tail, awaiting fulfillment, advancing head). Despite conceptual simplicity, the implementation is non-trivial with many restart paths when validations or CAS operations fail.


\usetikzlibrary{shapes.misc, positioning}
\newcommand*\circled[1]{\tikz[baseline=(char.base)]{
            \node[draw,rounded rectangle,fill,inner sep=1pt] (char) {\textcolor{white}{\tt #1}};}}
\newcommand*\ropath[1]{\tikz[baseline=(char.base)]{
            \node[draw,rounded rectangle,color=black,draw,inner sep=1pt] (char) {\textcolor{black}{\tt #1}};}}

\begin{figure}[t]
\vspace{2mm}
\includegraphics[width=0.95\columnwidth]{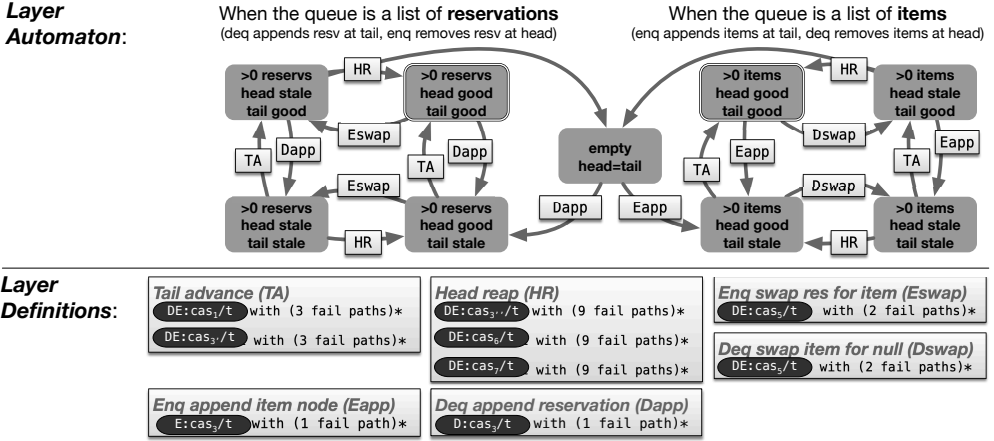}
\caption{\label{fig:resautomaton} Layer automaton for the synchronous
SLS queue. Layers' acronyms and their definitions are given in the lower half of the figure. \ARXIV{States are depicted in dark boxes.} For conciseness, layer definitions do not split the prefix/suffix of the read paths.
}
\end{figure}

\newcommand\Eapp{\textsf{Eapp}}
\newcommand\Dapp{\textsf{Dapp}}
\newcommand\Eswap{\textsf{Eswap}}
\newcommand\Dswap{\textsf{Dswap}}
\newcommand\TA{\textsf{TA}}
\newcommand\HR{\textsf{HR}}

\emph{Quotient.}
The quotient expression for the SLS queue is depicted as a layer automaton in Fig.~\ref{fig:resautomaton}. 
In the upper portion, the automaton \emph{states} differentiate between whether the queue is 
empty or whether the queue consists of reservations (left hand region) or of items (right hand region). In each of those regions, it is relevant as to whether the head pointer is stale or not, as well as whether the tail pointer is stale or not.
When the queue is a list of reservations, the head or tail could be stale (hence four states) and similar when the queue is a list of items. 

The \emph{basic layers} of the quotient expression are defined at the bottom of Fig.~\ref{fig:resautomaton}. 
The black circles (\eg~$\circled{DE:CAS$_\ell$/t}$) represent a write path in which a \texttt{D}equeuer or \texttt{E}nqueuer has successfully performed a CAS at some program location $\ell$. 
Along with the write path, we simply summarize the number of competing read-only paths, which are star-iterated.  
\ARXIV{When there are two write paths (\eg~$\circled{1t}$ and $\circled{3't}$) that have the same effect, they are grouped together (in this case, the EITA layer).}
Two layers are enq/deq-agnostic: advancing the tail pointer in \textsf{TA} and advancing the head pointer (and ``reaping'' the head node) in \textsf{HR}. These helping operations happen in many places in the code, with corresponding read-only ``\texttt{\_f}'' failure paths.
Enqueue can either append an item node (\Eapp) when in the RHS states of the automaton or else swap an item into a reservation node (\Eswap) in the LHS. These layers have a single CAS operation (\eg~$\circled{E:CAS$_5$/t}$) along with read-only paths\ignoreme{(\eg~$\ropath{4+5f}$)} where concurrent competing threads fail. The dequeue layers \Dapp{} and \Dswap{} are similar. 
\removed{There is one further read-only layer \textsf{Notice}, which occurs when a thread notices the node it's spinning on has changed, as when a dequeuer finds its reservation has been fulfilled. As discussed below, this layer provides a linearization point, ensuring the FIFO queue semantics.}

Finally, these (context-free) basic layer expressions are connected into an overall expression, represented here as an automaton or (below) as a star-/plus-/or-combination of layer expressions.


\begin{theorem} The SLS queue is linearizable.
  \begin{proof}  We associate linearization points with layers:
  \Dswap{} is an LP for dequeue, \Eapp{} is an LP for enqueue, and
  \Eswap{} is an LP for a combination of an enqueue followed by a dequeue. Next, we project the linearization points out of the quotient to obtain simply $(E\!\cdot\! D)^* \cdot (E^*\! +\! D^*)$. Combining this with a lemma that this expression is an abstraction of the quotient, we obtain that all executions in the quotient meet the sequential spec.~of a queue. This linearization point mapping is also robust because successful CASs (linearization points) do not have to be swapped in order to prove the completeness of the quotient.
(Detail 
\CONF{\intheextended{}}\ARXIV{in Apx.~\ref{apx:resqueue}, Thm.~\ref{thm:apxresqueue-linearizability}}.) \qed
  \end{proof}

 \end{theorem}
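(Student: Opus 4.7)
The plan is to apply the quotient-based methodology, reducing linearizability of every execution to linearizability of the traces represented by the layer automaton in Fig.~\ref{fig:resautomaton}, and then to specify a linearization-point mapping that is robust against the commutative reorderings used to build the quotient.

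First, I would verify that the automaton is an abstraction of a quotient, in the style of Thm.~\ref{th:msq-abstraction} for the MSQ, using Thm.~\ref{th:layer_automaton}. The WPC condition follows from the observation that every write path ends with a successful CAS whose read-validation guarantees that the memory cells touched by the preceding reads were unchanged in the meantime, so the whole path can be reordered to execute together at the position of its CAS. The LPC condition follows because failing validations and failing CAS attempts can always be slid leftward to occur immediately after the first write that invalidates them, without altering the sequence of shared states. The four abstract regions (empty, reservation-list, item-list, with head/tail staleness bits) partition the reachable configurations, and case analysis on each layer's successful CAS shows that its effect on the concrete state is captured by the corresponding labeled edges.

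Next, I would exhibit a linearization-point mapping. The guiding idea is that the fulfilling write in an \Eswap{} layer serves as the LP of \emph{two} operations at once: the enqueue that performs the swap, and the dequeue whose previously-posted reservation is being filled. An \Eapp{} layer is the LP of its enqueue alone, a \Dswap{} layer is the LP of its dequeue alone, and the \TA{}, \HR{}, and read-only layers contribute no LPs. To argue admissibility, I would project LPs out of the quotient and show the resulting operation-symbol sequence lies in $(E\cdot D)^* \cdot (E^* + D^*)$, by induction on layers and the automaton's region structure: within the item region, \Eapp{} and \Dswap{} strictly alternate on the FIFO list; \Eswap{} emits a matched $E\cdot D$ pair corresponding to the oldest outstanding reservation; and transitions between regions occur only when the list is empty, giving the tail of unmatched operations.

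For robustness, I would observe that successful CASs, which carry every LP, are never among the label pairs swapped to establish completeness of the quotient (only failing validations and local reads move), so the set $\mathsf{Swaps}$ of commuted pairs contains no pair of LP labels, and Thm.~\ref{th:com_lin} applies. The main technical obstacle I anticipate is the bookkeeping that matches each \Eswap{} fulfiller to the correct pending reservation-owning dequeue, since these dequeues have been issued but not returned at their LP. I expect to discharge this via an invariant asserting that the list of reservation nodes, projected onto their owning dequeue threads, coincides with the FIFO order of outstanding unfulfilled dequeues; together with the fact that \Eswap{} always acts on the head-adjacent reservation, this ensures the induced $E\cdot D$ pair respects the FIFO queue specification.
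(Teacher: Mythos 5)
Your proposal is correct and follows essentially the same route as the paper: the same linearization-point assignment (\Eswap{} linearizing a matched enqueue--dequeue pair, \Eapp{} and \Dswap{} linearizing single operations, the remaining layers contributing none), the same projection onto $(E\cdot D)^*\cdot(E^*+D^*)$ via the automaton's region structure, and the same robustness argument that successful CASs are never among the swapped labels. The extra detail you supply on WPC/LPC and on the FIFO-matching invariant for reservations merely fills in steps the paper leaves to its appendix lemma.
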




\newcommand\SLS[1]{\QuoteAuthor{#1}{\citet{Scherer2006}}}
\newcommand\SLSCiteless[1]{\QuoteAuthorCiteless{#1}}
\newcommand\Quo{}

\noindent
{\bf Comparison with the Authors' Proof.} We evaluated the SLS quotient expression by revisiting the authors' proof in \citet{Scherer2006}. Line numbers in the authors' quotes below refer to a reproduction of the source code given in \inExtOrApx{apx:resqueue}. For lack of space, some discussion of the authors' quotes can be found \inExtOrApx{apx:resqueue-eval}. 

The authors split the enqueue operation into two linearization points: a ``reservation linearization point'' and
a later ``follow up linearization point,'' so that synchronous, blocking enqueue implementations are a single reservation LP and then repeated follow-up LPs (as if the client is repeatedly checking whether the operation has completed).

\newcommand\refAuthorLine[1]{\CONF{[...]}\ARXIV{\ref{#1}}}

\SLS{[Regarding enqueue,] the reservation linearization point for this code path occurs at line~\refAuthorLine{ln:insertOffer} when we successfully insert our offering into the queue}

\noindent
This prose describes a scenario, (i) identifying an alleged linearization point at $\circled{\ignoreme{E3t}E:cas$_3$/t}$, involving a specific change 
to shared memory (a CAS on the tail's next pointer), and (ii) identifying the important ADT state transition (inserting an offer node into the queue). This scenario is formalized by the \Eapp{} layer in the quotient expression. The successful CAS $\circled{\ignoreme{E3t}E:cas$_3$/t}$ in \Eapp{} is the linearization point, with competing concurrent threads abstracted away by the starred fail path expression\ignoreme{ \texttt{(DE3f)*}}, and the state transition is given in the automaton as the downward \Eapp-labeled arcs in the righthand region of the automaton.
The scenario and LP for dequeue on a list of reservation nodes is symmetric, and represented in the quotient expression as layer \Dapp{} involving $\circled{\ignoreme{D3t}D:cas$_3$/t}$ and competing fail path\ignoreme{\texttt{(DE3f)*}}.

The quotient expression makes the interaction between LPs and ADT states more explicit (\eg~through $LP$-marked layers) and comprehensive (\eg~the authors do not discuss the 9 different automaton ADT states and which transitions are possible from each). The quotient expression can be seen as an abstract view of an implementation of the sequential specification.

%


\noindent
\SLSCiteless{\begin{tabular}{ll}
\begin{minipage}[b]{2.7in}
The other case occurs when the queue consists of reservations
(requests for data), and is depicted [to the right]. 
In this case, after originally reading the head node (step A), we read its successor
(line~\refAuthorLine{ln:readHeadNext}/step B) and verify consistency (line~\refAuthorLine{ln:verifyHeadNext}). Then, we attempt to supply our data to the \underline{head-most reservation} (line~\refAuthorLine{ln:appendHead}/C). If this succeeds, we dequeue the former dummy node (\refAuthorLine{ln:removeDummy}/D) and return
\end{minipage} & 
\begin{minipage}[b]{1.5in}
\includegraphics[height=1.0in]{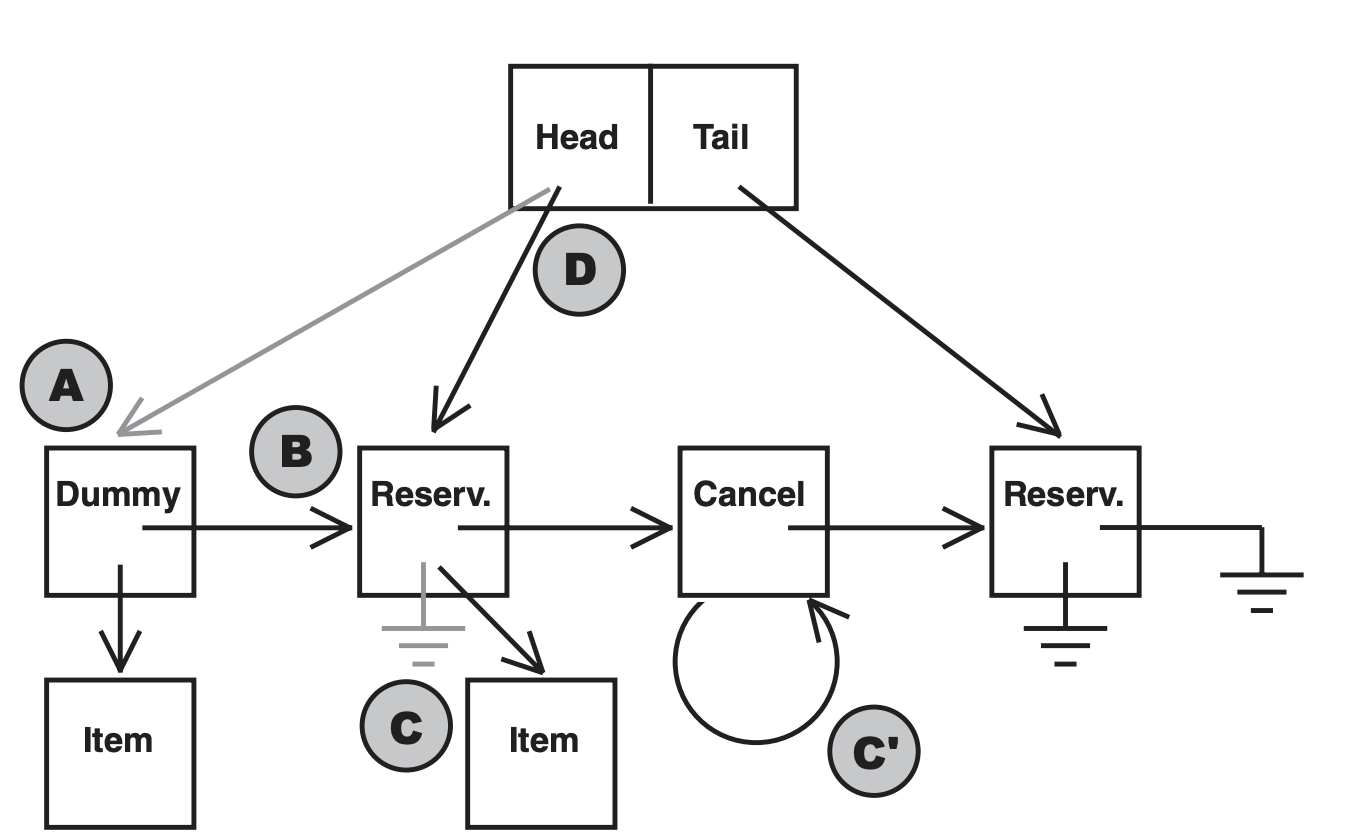}
\end{minipage}
\end{tabular}}

\noindent
This prose again indicates important mutations (\eg~swapping the node's contents pointer),
ADT state changes (\eg~supplying data) and that the head dummy node needs to be advanced.
These memory mutations and state changes are explicit in the quotient expression. For example,
\Eswap{} performs a memory CAS and makes a ADT state transition. The staleness of the head is also captured directly in the ADT states and the \HR{} layers' transitions.
%
The authors' prose also discusses failure paths (see \ExtOrApx{apx:resqueue-eval}) and retry, which are 
also captured in the layer definitions\ignoreme{, for example, $\circled{5f} \cdot (\circled{6bt} + \ropath{6bf}$)}.

\emph{Summary.}
The layer quotient expression/automaton provides a succinct formal foundation for the correctness arguments of~\citet{Scherer2006}, capturing the authors' discussions of LPs, ADTs, impacts of writes, CAS contention, etc.


\subsection{The Hendler et al. Elimination Stack}
\label{subsec:elim}

\begin{figure}
%
\begin{tabular}{ll}
\begin{minipage}{3.0in}
\lstset{xleftmargin=3.0ex,numbersep=2pt}
\begin{lstlisting}[morekeywords={ret}]
void push/pop(descriptor p){ while(1) {
  one iteration of Treiber stack
  location[mytid] = p; `\label{line:descriptor}`
  pos = nondet(); `\label{line:start_publish}`
  do { him = collision[pos]
  } while (!CAS(&collision[pos], him, mytid)) `\label{line:end_publish}`
  if him != NULL { `\label{line:first_test}`
     q = location[him]
     if ( q != NULL & q.id = him & p.op != q.op ) { `\label{line:second_test}`
        if (CAS(&location[mytid],p,NULL)) { `\label{line:first_CAS}`
           if ( CAS (&location[him], q, p/NULL) ) `\label{line:second_CAS}`
              return NULL/q.input
           else continue `\label{line:restart}`
        } else {
           val = NULL/location[mytid].input; `\label{line:start_finish}`
           location[mytid] = NULL;
           return val `\label{line:end_finish}`
  } } } 
  if (!CAS(&location[mytid],p,NULL)) { `\label{line:clear_CAS}`
     val = NULL/location[mytid].data; `\label{line:start_last_finish}`
     location[mytid] = NULL;
     return val `\label{line:end_last_finish}`
}}  }
\end{lstlisting}
\end{minipage}
&
\begin{minipage}{3.0in}
\vspace{2mm}
\includegraphics[width=5.7cm]{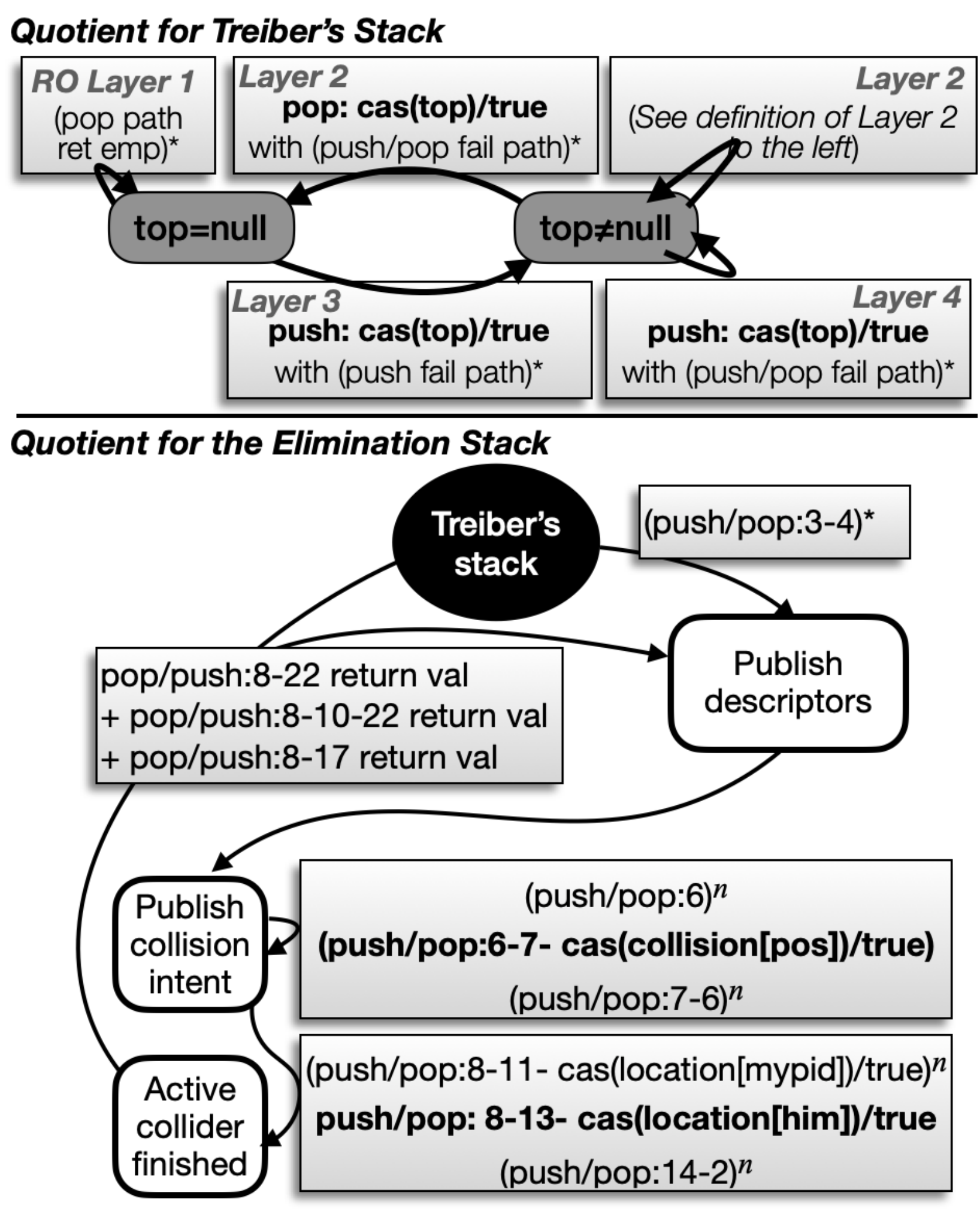}
\end{minipage}\\
(a) Elimination Stack source code
&
(b) Stack Quotients\\
\end{tabular}
\caption{Elimination Stack}
\label{fig:code_stacks}
\vspace{-4mm}
\end{figure}

The Elimination Stack of~\citet{DBLP:conf/spaa/HendlerSY04} is difficult because the linearization point of some invocation can happen in another (threads can awake to find they were linearized earlier) and it uses a submodule: Treiber's stack~\cite{IBMTR:Treiber86}.

We first show the Treiber's stack quotient, and then build elimination on top. 
Since Treiber's stack is simple, we explain only the basics here, with more detail \inExtOrApx{apx:treiber}.
The implementation of
\ttpush{} prepares a new node and then attempts a CAS to swing the \code{top} pointer,
while \ttpop{} attempts to advance the \code{top} pointer and return the removed node's value. 
The quotient for Treiber's stack is shown in the upper right of Fig.~\ref{fig:code_stacks} and is similar to the counter, but with ADT states tracking emptiness (rather than non-zeroness) and CAS contention on the top pointer (rather than the counter cell).
There is one read-only layer for a \ttpop{} and an empty stack, and other layers involve one successful CAS with failed competing CAS attempts. See \ExtOrApx{apx:treiber} for more detail, as well as \ARXIV{Lemma~\ref{lemma:treiber}}\CONF{a lemma} proving that this layer automaton is an abstraction of the quotient.

The Elimination Stack, listed in Fig.~\ref{fig:code_stacks}(a), augments Treiber's stack with a protocol for ``colliding'' push and pop invocations so that the push passes its input directly to the pop without affecting the underlying data structure. An invocation starts this protocol after performing a loop iteration in Treiber's stack and failing (due to contention on \texttt{top}). The protocol uses two arrays: (1) a \texttt{location} array indexed by thread ids where a push or pop invocation publishes a descriptor tuple (\texttt{op,id,input}) with fields \texttt{op} for the type of invocation (push or pop), \texttt{id} for the id of the invoking thread, and \texttt{input} for the input of a push operation, and (2) a \texttt{collision} array indexed by arbitrary integers which stores ids of threads announcing their availability to collide. 

Each invocation starts by publishing their descriptor in the \texttt{location} array (line~\ref{line:descriptor}). Then, it reads a random cell of the \texttt{collision} array while also trying to publish their id at the same index using a CAS (lines~\ref{line:start_publish}--\ref{line:end_publish}). If it reads a non-NULL thread id, then it tries to collide with that thread. A successful collision requires 2 successful CASs on the \texttt{location} cells of the two threads (we require CASs because other threads may compete to collide with one of these two threads): the initiator of the collision needs to clear its cell (line~\ref{line:first_CAS}) and modify the cell of the other thread (line~\ref{line:second_CAS}) to pass its input if the other thread is a pop. The first CAS failing means that a third thread successfully collided with the initiator and the initiator can simply return (lines~\ref{line:start_finish}--\ref{line:end_finish}). Failing the second CAS leads to a restart (line~\ref{line:restart}).
Succeeding the second CAS means there has been a successful collision and the thread returns, returning null for a push and otherwise using the descriptor to obtain the popped value (line~\ref{line:second_CAS}).
 If the invocation reads a NULL thread id from \texttt{collision}, then it tries to clear its cell before restarting (line~\ref{line:clear_CAS}). If it fails, then as in the previous case, a collision happened with a third thread and the current thread can simply return (line~\ref{line:start_last_finish}--\ref{line:end_last_finish}).


\emph{Quotient.}
We use the automaton in the lower right of Fig.~\ref{fig:code_stacks} to describe a sound abstraction of the quotient. Layers of Treiber's stack  interleave with layers of the collision protocol (some components are not exactly layers as in Definition~\ref{def:layer_exprs}, but quite similar).
Executions in the quotient \emph{serialize} collisions and proceed as follows: (1) some number of threads publish their descriptor and choose a cell in the \texttt{collision} array, (2) some number of threads publish their id in the \texttt{collision} array (there may be more than one such thread -- note the self-loop on the ``Publish collision intent'' state), (3) some number of threads succeed the CAS to clear their \texttt{location} cell but only one succeeds to also CAS the \texttt{location} cell of some arbitrary but fixed thread \texttt{him} and return, and (4) the thread \texttt{him} returns after possibly passing the tests at line~\ref{line:first_test} or~\ref{line:second_test}. 
(Note that, for succinctness, we have combined push/pop into the same method, which also makes the automaton succinct. The code and corresponding automaton could also have been written in a more verbose way where the bottommost layer is replaced with two layers: (1) a layer where a push's successful CAS takes with it a corresponding pop, and (2) a layer where a pop's successful CAS takes with it a corresponding push. For succinctness, we have combined those layers using the ``push/pop'' notation.) 
We emphasize that collisions happen in a serial order, i.e., at any point there is exactly one thread that succeeds on both CASs required for a collision and immediately after the collided thread returns (publishing descriptors or collision intent interleaves arbitrarily with collisions).

\begin{theorem} The Elimination Stack is linearizable.
\begin{proof}Follows from the fact that the above expression is an abstraction of the quotient (\ARXIV{Thm.~\ref{th:elimination}}\CONF{See~\citet{arxiv}}), with the {\bf bold} actions in the layers being the LPs. \qed\end{proof}
\end{theorem}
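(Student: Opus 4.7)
The plan is to apply the Corollary to Theorem~\ref{th:com_lin}: it suffices to exhibit a linearization-point (LP) mapping on every trace in the interpretation of the layer automaton of Fig.~\ref{fig:code_stacks}(b), check that the induced linearization lies in the sequential stack specification, and check that the mapping is robust against the reorderings used to establish the quotient abstraction. The abstraction itself I would cite from Thm.~\ref{th:elimination}, so the remaining work is entirely about the LP assignment.

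First I would fix the LP of each layer to be its \textbf{bold} action, as the statement suggests: (i) in the two Treiber layers the LP is the successful CAS on \texttt{top}, linearizing the unique thread that completes that write-path as a plain \ttpush{} or \ttpop{}; (ii) in the read-only ``empty \ttpop'' layer (inherited from the Treiber sublayer) the LP is the read of \texttt{top=null}, linearizing that \ttpop{} as returning empty; (iii) in the collision layer the LP is the initiator's successful CAS at line~\ref{line:second_CAS}, which I would use as the \emph{single} LP for both colliding invocations, linearizing the pair as the consecutive sequence $\ttpush(v)\cdot \ttpop()/v$. The remaining bold CAS at line~\ref{line:first_CAS} (clearing the initiator's own \texttt{location} cell) is not itself an LP; it is a precondition for the pairing to take effect.

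Next I would prove by induction on the layer decomposition of a quotient trace that the resulting operation sequence is in the stack spec. The invariant maintains that the abstract stack equals the sequence of values reachable from \texttt{top} in the Treiber substructure. Each Treiber layer updates this sequence exactly as a single \ttpush{} or \ttpop{} on the abstract stack, so the induction step is immediate. A collision layer leaves \texttt{top} untouched, but contributes $\ttpush(v)\cdot \ttpop()/v$ to the linearization, which preserves the abstract stack and returns the matching value, so the induction step still goes through; this is precisely why elimination is semantically transparent. The empty-\ttpop{} layer fires only from an automaton state witnessing \texttt{top=null}, which by the invariant means the abstract stack is empty, so returning empty is correct.

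For the LP interval condition (Def.~\ref{def:quotient}, item~\ref{item:lin1}), the only nontrivial case is the collided thread \texttt{him} in the collision layer: its LP is the initiator's CAS at line~\ref{line:second_CAS}, and I must check that this event falls strictly between \texttt{him}'s call and return. This holds because the initiator's CAS targets \texttt{location[him]}, whose non-NULL descriptor was installed by \texttt{him} at line~\ref{line:descriptor} and is only cleared by \texttt{him} after returning via lines~\ref{line:start_finish}--\ref{line:end_finish}; hence \texttt{him} is still inside its invocation at the moment of the CAS. Robustness against the $\mathsf{Swaps}$ used for the quotient then follows because LPs are all bold ARW events (or the \texttt{top=null} read), which are never swapped across each other by the abstraction argument of Thm.~\ref{th:elimination}; the only LPs that could be swapped are two empty-\ttpop{} reads, and empty-returning pops are specification-commutative on the stack. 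The main obstacle I anticipate is precisely this cross-thread LP assignment for collided pops: one must verify both that the interval condition holds (addressed above) and that no intervening Treiber-layer LP can slip between the push's and the pop's LPs in the linearization, which the serialized-collision shape of the automaton (one completed collision per layer) is designed to guarantee.
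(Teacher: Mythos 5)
Your proposal is correct and follows essentially the same route as the paper: cite the quotient abstraction (Thm.~\ref{th:elimination}), take the \textbf{bold} successful CAS of each layer as the LP — with the single collision CAS at line~\ref{line:second_CAS} linearizing the matched pair as $\ttpush(v)\cdot\ttpop()/v$ — and argue by induction on the layer decomposition that the induced linearization satisfies the stack specification, with robustness following because the LP events are never swapped. Your explicit checks of the interval condition for the passive collider and of the abstract-stack invariant are elaborations of details the paper relegates to its appendix (where the same points appear as the serialized active/passive pairing), not a departure from its argument.
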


\newcommand\HSYprose[1]{\QuoteAuthor{#1}{\citet{DBLP:conf/spaa/HendlerSY04}}}
\newcommand\HSYproseCiteless[1]{\QuoteAuthorCiteless{#1}}

\noindent
{\bf Comparison with the Authors' Proof.}
A proof is given by~\citet{DBLP:conf/spaa/HendlerSY04} in that paper's Section 5. It is a lengthy proof so, for lack of space, the full review is in \inExtOrApx{apx:elimination-stack} and summarized here. 
Overall, the correctness 
argument requires numerous lemmas in the \citet{DBLP:conf/spaa/HendlerSY04} proof,
mostly focused on establishing a bijection between the active thread and its correspondingly collided passive thread. The authors lay out a few definitions, which are also captured by the quotient. For example, the authors' prose includes:

\HSYprose{[A] colliding operation op is \underline{active} if it executes a successful CAS in lines C2 or C7. We say that a colliding operation is \underline{passive} if op fails in the CAS of line S10 or S19. [underlines added]}

\noindent
Above the authors' intuitive concept of ``active'' is captured by the paths in a layer that succeed their CAS, denoted in {\bf bold} in the quotient automaton above. Likewise for ``passive'' and CAS failure. As mentioned above, the active thread is captured as the bold thread that succeeds its CAS in the bottommost layer; the passive thread is the thread that finds itself collided with in the layers on arcs exiting the bottommost layer.

%
%
%
%

\HSYproseCiteless{we show that push and pop operations are paired correctly during collisions. Lemma 5.7. Every passive collider collides with exactly one active collider.}

\noindent
The bottommost layer in the {\bf bold} action, a single push or pop succeeds, colliding with another operation of the oppose type, and passing the element from the push to the pop.



Authors' LPs are given for ``active'' threads as the time when the second CAS succeeds, and linearization points for ``passive'' threads ``the time of linearization of the matching active-collider operation, and the push colliding-operation is linearized before the pop colliding-operation.'' 
The linearization points in the quotient correspond to the bold successful CAS in the bottommost layer in the quotient automaton (this linearizes both a push and a pop).
Importantly, every run of the quotient automaton gives a serial linearization order that is a
repetition of pairs of active/passive threads. All other executions are equivalent to one such serialized run, upto commutativity.
%
%

In summary, as detailed \inExtOrApx{apx:elimination-stack}, the quotient naturally and succinctly captures the key concept of the Elimination stack: that a single successful CAS of one type of operation is the LP for that operation as well as the corresponding matched operation. 
The quotient captures ``active'' versus ``passive'' threads (in the automaton layers/states/transitions), as well as this bijection through the runs of the automaton: every run in the automaton contains some number of active/passive pairs and provides a representative serialization order (in each pair the push is serialized before the pop). 
\removed{This occurs due to the active thread completing its CAS successfully in the bold action in the bottom right layer, and then the passive thread finding itself to be collided in the layers exiting ``Active collider finished.''
All other executions are equivalent to some automaton run, upto commutativity. Lemma 5.7 is thus captured by this quotient (as are the other Lemmas; see Apx.~\ref{apx:elimination-stack}).}
Linearization points and other logistics of threads preparing/completing are similarly captured by the quotient automaton.

\subsection{The Harris et al. Restricted Double-Compare Single-Swap  (RDCSS)}
\label{subsec:rdcss}

RDCSS~\cite{DBLP:conf/wdag/HarrisFP02} is a restricted version of a double-word CAS 
which modifies a so-called data address provided that this address and another so-called control address have some given expected values (the tests and the write happen atomically). 
%
%
%
\code{RDCSS} attempts a standard CAS on the data address to change the old value into a pointer to a descriptor structure that stores the inputs of the operation. This fails if the data address does not have the expected value. A second standard CAS on the data address is used  to write the new value if the control address has the expected value or the old value, otherwise. Faster threads can help complete the operations of slower threads using the information stored in the descriptor. 

%

The traces in the quotient of \code{RDCSS} interleave successful attempts at modifying the data address with unsuccessful ones. A successful attempt consists of a thread succeeding the first CAS combined with competing threads that fail, followed by another thread succeeding the second CAS (this can be different from the first one in the case of helping) combined with other threads that fail. An unsuccessful attempt may contain just a thread failing the first CAS, or it can contain two successful CASs like a successful attempt (when the data address has the expected value but the control address does not). Proving linearizability of quotient traces is obvious because they make explicit the ``evolution'' of a data address, oscillating between storing values and descriptors, and which CAS is enabled depending on the value of the control address. See \ExtOrApxt{apx:rdcss} for more details.

\subsection{The Herlihy-Wing Queue}
\label{subsec:hwq}

\newcommand{\enqinc}{\code{enqI}}
\newcommand{\enqwr}{\code{enqW}}
\newcommand{\deqT}{\code{deqT}}
\newcommand{\deqF}{\code{deqF}}
\newcommand\ttback{\code{back}}

The quotients of some data structures cannot be represented using layer automata. The Herlihy-Wing Queue~\cite{DBLP:journals/toplas/HerlihyW90} is one such example and it is notorious for  linearization points that depend on the future and that \emph{cannot} be associated to fixed statements, see e.g.~\citet{DBLP:conf/cav/SchellhornWD12}!
The queue is implemented as an array of slots for items, with a shared variable \code{back} that indicates the last possibly non-empty slot. An \code{enq} atomically reads and increments \code{back} and then later stores a value at that location. A \code{deq} repeatedly scans the array looking for the first non-empty slot in a doubly-nested loop.
We show that the Herlihy-Wing queue quotient can
be abstracted by an expression 
$\left( \deqF^*\cdot (\enqinc)^+ \cdot\enqwr^* \cdot \deqT^*\right)^*$, 
where $\deqF$ captures dequeue scans that need to restart,
$\deqT$ scans succeed, $\enqinc$ reads/increments \ttback{} and 
$\enqwr$ writes to the slot.
For lack of space, a detailed discussion about how this expression abstracts
the quotient is given \inExtOrApx{apx:hwq}. Importantly, linearization points in executions represented by this expression are \emph{fixed}, drastically simplifying reasoning from the general case where they are non-fixed.

\begin{theorem}The Herlihy-Wing Queue is linearizable.
(see \ARXIV{Thm.~\ref{th:linearizability-hwq}}\CONF{\citet{arxiv}})
\end{theorem}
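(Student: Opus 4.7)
The plan is to leverage the given quotient abstraction $\expr = \left( \deqF^*\cdot \enqinc^+ \cdot \enqwr^* \cdot \deqT^*\right)^*$ and reduce the goal to showing that every trace in $\sem{\expr}$ is linearizable with a \emph{fixed}, robust linearization-point mapping. Concretely, first I would invoke the lemma (in the appendix) that $\expr$ is an abstraction of a quotient of the Herlihy--Wing queue; then by Theorem~\ref{th:com_lin} and the corollary to Theorem~\ref{th:layer_automaton}, it suffices to establish linearizability only for the canonical traces denoted by $\expr$, provided the LP mapping does not need to move LPs across the swaps witnessing $\expr$ as an abstraction.

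Second, I would place LPs at primitive, fixed positions that become meaningful in the canonical form: the LP of each enqueue at its \enqinc\ step (the atomic read-increment of \ttback, which already commits the enqueue's slot index and hence its FIFO position), and the LP of each dequeue at its \deqT\ step (the read/remove of the dequeued slot). This is the point where the quotient pays off: on arbitrary HWQ executions LPs of enqueues are future-dependent, but in canonical traces the \enqinc\ steps within a single $\enqinc^+$ block are totally ordered by the atomicity of the read-increment, and this order coincides with the slot order in the array.

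Third, I would prove by induction on the number of rounds $R_i = \deqF^*\cdot \enqinc^+ \cdot \enqwr^* \cdot \deqT^*$ in the unrolling of $\expr^*$ that the induced sequence of LP-indexed operations belongs to the FIFO queue specification. The invariant maintained at round boundaries is that the abstract queue equals the list of slots, ordered by increasing index, that have already executed \enqwr\ but not yet been dequeued by a \deqT. Within a round: the $\deqF^*$ prefix adds no LPs and trivially preserves the invariant (the scans failed); $\enqinc^+$ extends the abstract queue on the right in slot-index order; $\enqwr^*$ materialises the corresponding values without touching the queue's shape; and each \deqT\ in $\deqT^*$ removes the oldest written-but-not-yet-dequeued slot, matching FIFO removal. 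I would then verify robustness by observing that \enqinc\ steps of distinct enqueues act on the same location \ttback\ and are consequently never swapped by the quotient's commutativity reorderings, while \deqT\ steps act on per-slot locations distinct from those of other LPs, so no LP pair ever needs to be transposed to obtain the canonical representative.

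The main obstacle I anticipate is the \deqT\ case of the inductive step: I must argue that whenever the canonical form schedules a \deqT\ after the entire $\enqwr^*$ block of its round, the slot returned is indeed the FIFO-oldest available slot. This requires relating the inner doubly-nested scan of \code{deq} in the code, which in arbitrary executions can legitimately skip slots whose \enqwr\ has not yet occurred, to the canonical form where all pending \enqwr{}s of the current round are materialised before any \deqT. The argument will lean on the fact that in the canonical trace, a dequeue executes its final successful scan entirely after the relevant writes, so its choice of slot is deterministic given the global state, and the usual future-dependent reasoning for HWQ collapses to a state-based check---which is exactly the ``drastic simplification'' noted just before the theorem.
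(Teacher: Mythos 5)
There is a genuine error in your choice of linearization points. You place the enqueue LP at \enqinc{} (the atomic read--increment of \ttback), arguing that this ``already commits the enqueue's slot index and hence its FIFO position.'' That is precisely the placement that fails for the Herlihy--Wing queue, and it is the reason this object is notorious: the slot index does \emph{not} determine the order in which items are dequeued, because a dequeuer's scan skips over a reserved-but-not-yet-written slot and may remove a higher-indexed item first. Concretely, consider a canonical trace where enqueuers $A$ and $B$ perform \enqinc{} in that order (slots $0$ and $1$), $B$ performs \enqwr, a dequeue succeeds with $B$'s value, and only then does $A$ perform its \enqwr{} (in a later round's $\enqwr^*$ block), after which a second dequeue returns $A$'s value. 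This trace cannot be reordered out of the quotient ($A$'s write to slot $0$ does not commute with the first dequeue's read of slot $0$ returning null), yet with LPs at \enqinc{} its linearization is $enq(A)\cdot enq(B)\cdot deq()/B\cdot deq()/A$, which is not FIFO. The paper instead places the enqueue LP at \enqwr{} and the dequeue LP at \deqT; the quotient's contribution is that these become \emph{fixed} positions (with $\enqwr^*$ canonically ordered by slot index), not that the LP can be moved back to the increment.

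Your own inductive invariant betrays the inconsistency: you define the abstract queue as the slots that ``have already executed \enqwr{} but not yet been dequeued,'' which is exactly the state transformer for an LP at \enqwr, not at \enqinc{} (under your LP choice, the abstract queue would have to grow at \enqinc, and your $\deqT$ case would then have to return the oldest \emph{incremented} slot, which the counterexample refutes). If you replace the enqueue LP by \enqwr{} throughout, your round-by-round induction essentially becomes the paper's argument, with one remaining obligation you have not discharged: robustness now concerns swaps of \enqwr{} steps (which act on distinct slots and hence are possibly commutative at the implementation level but are not $S$-commutative as enqueues), which the paper handles by fixing the slot-index order of $\enqwr^*$ in the canonical representative rather than by a commutativity claim about \ttback.
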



\noindent
{\bf Comparison with the Authors' Proof.}
\citet{DBLP:journals/toplas/HerlihyW90} give intuitions of scenarios:

\QuoteAuthor{Enq execution occurs in two steps, which may be interleaved with steps of other concurrent operations: an array slot is reserved by atomically incrementing back, and the new item is stored in items.}{Sec 4.1 of ~\citet{DBLP:journals/toplas/HerlihyW90}}

\noindent
This describes a scenario with unboundedly many threads, though is not yet an argument for why that scenarios is correct. This scenario appears in the quotient as the fact that $\enqinc$ and $\enqwr$ are distinct.
To cope with non-fixed LPs (in this and other objects), the authors introduce a proof methodology based on tracking all possible linearizations that could happen in the future.
This general methododology complicates the proof. The quotient, by contrast, allows one to consider scenarios along the lines of ``one or more enqueuers increment back, possibly some of them write to the array, and then some dequeuers succeed,'' following the quotient's regular expression. 
In summary, the quotient here provides the first scenario-based proof of correctness, through
representative executions that allow the linearization order to be \emph{fixed} and all other executions are equivalent to one such representative execution up to commutativity.




\section{Generating Candidate Quotient Expressions}
\label{sec:automation}
In Sec.~\ref{sec:cases} we showed quotients can be defined for a wide range of concurrent objects, including notoriously difficult ones. 
We leave the (rather large) question of automated quotient proofs for the general case as future work.
Here we take a first step asking,
\emph{Can candidate quotient expressions can be generated algorithmically?}

This section answers this question with an algorithm, implementation and
experiments showing that, from the source code of concurrent data-structures such as Treiber's stack and the MSQ, candidate quotients expressions (equivalent to those in Sec.~\ref{sec:cases}) can be automatically discovered.
We manually confirmed that these generated candidates are indeed sound abstractions of the quotient, a process that can also be automated (perhaps through new forms of induction) in future work.

\removed{The algorithm exploits our reduction to two-thread reasoning and automaton
representation of layer quotients (Apx.~\ref{apx:layerautomata}).}

\subsection{Computing Layer Automata}
\label{apx:layerautomata-alg}

Given a set of layers $\lambda_1$,$\ldots$,$\lambda_n$ whose starred union is an abstraction of an object quotient (cf. Theorem~\ref{def:completeness_methodology}), a layer automaton satisfying Theorem~\ref{th:layer_automaton} can be computed automatically. 
The algorithm consists of the following steps:
\begin{enumerate}
\item \emph{States}: Compute the automaton abstract states as boolean conjunctions of the weakest pre-conditions (and their negations) of traces in the \emph{support} of a layer $\lambda_i$ with $1\leq i\leq n$. We assume that the initial state can be determined from the object spec.

\item \label{alg:step2} \emph{Edges}: Whenever a state $q$ implies the precondition of a write layer $\lyr_i$ with write path $\kat_w$, compute every post-state $q'$ that can hold, and add an edge $q\xrightarrow{\lyr_i}q'$. This can be encoded as an assertion violation in a program that assumes $q;\kat_w$ and asserts the negation of $q'$.

\item \emph{Self-Loops}: For every state $q$ collect every local layer that is enabled from $q$ and create a single self-loop consisting of a concatenation of all these layers.
\end{enumerate}

\removed{
The algorithm is in Apx.~\ref{apx:layerautomata-alg} but, briefly, involves 
(i) computing automaton states using weakest preconditions,
(ii) computing the possible post-states of write paths $\kat_w$, and which 
local paths are feasible interleavings with those write paths 
(exploiting pair-wise reasoning about paths), and 
(iii) computing which automaton self-loops are possible via local-only layers.
}

\begin{table}[t]
\caption{Evaluation of \Tool{} discovering candidate layers from source code.}
\label{fig:eval} 
\vspace{-4mm}
\begin{center}
{
\begin{tabular}{l|ccc|ccrr}
\toprule
 & {\bf States} & \multicolumn{2}{c|}{{\bf \# Paths}} & {\bf \# Trans.} & {\bf \# Layers} & {\bf Time} & {\bf \# Solver} \\
{\bf Example} & $\mid\!\QQ\!\mid$ & {\bf \# $\kat_l$} & {\bf \# $\kat_w$} & $\mid\!\delta\!\mid$ & $\mid\!\Lambda(O)\!\mid$ &  (s) & {\bf Queries}\\
\midrule
\texttt{evenodd.c} &    2 &    2 &    2 &    6 &    3 &  52.2 &   32\\
\texttt{counter.c} &    2 &    3 &    2 &    6 &    5 &  67.8 &   36\\
\texttt{descriptor.c} &    4 &    6 &    2 &    6 &    6 & 160.2 &   74\\
\texttt{treiber.c} &    2 &    3 &    2 &    6 &    5 &  71.4 &   37\\
\texttt{msq.c    } &    4 &    9 &    3 &   17 &    7 & 441.6 &  314\\
\texttt{listset.c} &    7 &    6 &    2 &   59 &    7 & 603.8 &  494\\
\bottomrule
\end{tabular}}
\end{center}
\vspace{-5mm}
\end{table}

\subsection{Implementation and Experiments} 

We built a proof-of-concept implementation of our algorithm, called \Tool{} in $\sim$1,000 lines of OCaml code, using CIL and Ultimate~\cite{DBLP:conf/tacas/HeizmannCDGHLNM18}. \Tool{} is publicly available\footnote{\url{https://github.com/quotientprovers/cion}}.
\ARXIV{
Our implementation consumes an input concurrent data structure implemented in C, with demarcation of atomic read/writes. For simplicity, our implementation requires user-provided automaton states (also written in C), but this can be automated and there are tools for doing so. We parse the implementation, enumerate paths tracking whether they are local-vs-write, parse automaton states, and then begin the algorithm.
\Tool{} constructs  feasibility problems for a solver of the form
$\texttt{assume}(\qq); \kat; \texttt{assert}(\neg \qq')$.
In each problem $\kat$ is an element of the support $\supp(\lambda)$ of the layer, constructed by concatenating the local path prefix, the write path, and the local path suffix, \ie~$\kat_{l1} \kdot \kat_w \kdot \kat_{l2}$.
Each of these problems (feasibility ``queries'') is emitted as a C reachability problem and discharged with Ultimate.}
We applied \Tool{} to some of the Sec.~\ref{sec:cases} objects that were amenable to layers. 
Experiments were run on Ubuntu 22.04 within a Parallels VM on a MacBook Pro M2 with 32GB RAM.
Benchmarks are available in \Tool{} repository. 
We used Ultimate v0.2.1 (\texttt{54a68f4}) as a reachability solver, with its default configuration.
The results are summarized in Table~\ref{fig:eval}.
For each benchmark, we report the number of automaton {\bf States} $\mid\!\QQ\!\mid$,
the number of local {\bf Paths} $\#\kat_l$ and number of write paths $\#\kat_w$. We then report the number of {\bf Trans}itions $\mid\!\delta\!\mid$ in the automata constructed by \Tool{} and the number of 
{\bf Layers}, as well as the wall-clock {\bf Time} in seconds, and the number of {\bf Queries} made to the solver (Ultimate).
\ARXIV{For MSQ, Ultimate needed to use MathSAT for one feasibility check (the 62nd solver query). Since there is no distribution of MathSAT 5 compiled to ARM architecture, we ran Ultimate on an X86 Linux machine for that single check but the solving time for that one query was negligible.}
\ARXIV{
Even with small numbers of states and paths, the tasks are challenging because there are many $\qq,\lambda,\qq'$ triples and, for each one, the support involves all interleavings of all writers. This is why, for example, \texttt{counter} involves 36 queries. As the paths get longer, there are more interleavings to consider, as seen in MSQ and List Set. We found it most efficient to search for a possibly failing reader $\kat_l$ by interleaving $\kat_w$ starting at the end of $\kat_l$ and then progressing backwards to the first interleaving position. (Typically the ``failure'' conditions are checked at the end of a read path.)
We also improved efficiency by searching first for the readers in a $\qq$ local layer and then using only the remaining ones (that are not enabled from $\qq$) in the search for interleavings with write paths.}
The results show that \Tool{} is able to efficiently generate candidate layer automata for some important and challenging concurrent objects.

\section{Related work}\label{sec:related}

\emph{Linearizability proofs.}
Program logics for compositional reasoning about concurrent programs and data structures have been studied extensively, as mentioned in Sec.~\ref{subsec:intro-quo}. 
Improving on the classical \citet{DBLP:journals/cacm/OwickiG76} and Rely-Guarantee~\cite{DBLP:conf/ifip/Jones83} logics, numerous extensions of Concurrent Separation Logic~\cite{DBLP:conf/popl/BornatCOP05,DBLP:conf/concur/Brookes04,DBLP:conf/concur/OHearn04,DBLP:conf/popl/ParkinsonBO07} have been proposed in order to reason compositionally about different instances of fine-grained concurrency, e.g.~\cite{DBLP:journals/jfp/JungKJBBD18,DBLP:journals/pacmpl/JungLPRTDJ20,DBLP:conf/popl/Ley-WildN13,DBLP:conf/ecoop/PintoDG14,DBLP:conf/esop/SergeyNB15,DBLP:journals/pacmpl/Nanevski0DF19,DBLP:conf/esop/RaadVG15,DBLP:conf/icfp/TuronDB13,conf/cav/DragoiGH13,conf/vmcai/Vafeiadis09,phd/Vafeiadis08,DBLP:journals/pacmpl/KrishnaSW18}. 
We build on the success of such program logics toward improving the confidence in the correctness of concurrent objects. In the current paper we alternatively focus on the scenario-based reasoning found in the distributed computing literature, and have aimed to capture those scenarios as formally-defined representative executions.
In future work it could be interesting to combine the benefits of program logics with those of quotients.
%
%
Other more distantly related works include:
\citet{DBLP:conf/cav/BerdineLMRS08},
\citet{DBLP:conf/cav/Vafeiadis10},
\citet{DBLP:conf/esop/BouajjaniEEH13},
\citet{DBLP:journals/corr/ChakrabortyHSV15},
\citet{DBLP:conf/cav/ZhuPJ15}, and
\citet{DBLP:conf/sas/AbdullaJT16}.

\emph{Reduction.}
The reduction theory of~\citet{DBLP:journals/cacm/Lipton75} introduced the concept of \emph{movers}
to define a program transformation that creates atomic blocks of code.
QED~\cite{DBLP:conf/popl/ElmasQT09} expanded Lipton's theory by introducing iterated
application of reduction and abstraction over gated atomic actions.
CIVL~\cite{DBLP:conf/cav/HawblitzelPQT15} builds upon the foundation of QED,
adding invariant reasoning and refinement layers~\cite{DBLP:conf/cav/KraglQ18,DBLP:conf/concur/KraglQH18}.
Reasoning via simplifying program transformations has also been
adopted in the context of mechanized proofs, e.g.,~\cite{DBLP:conf/osdi/ChajedKLZ18}.
Inductive sequentialization~\cite{DBLP:conf/pldi/KraglEHMQ20} builds upon this prior
work, and introduces a new scheme for reasoning inductively over unbounded concurrent executions.
The main focus of these works is to define generic proof rules to prove soundness of such
program transformations, whose application does however require 
carefully-crafted artifacts such as abstractions of program code or invariants.
Our work takes a different approach and tries to distill common syntactic patterns of concurrent objects into
a simpler reduction argument. Our reduction is \emph{not} a form of program transformation
since quotient executions are interleavings of actions in the implementation.

\section{Conclusion}

We have shown that scenario-based reasoning about concurrent objects has a formal grounding, answering an open question.
The key insight is the concept of a quotient, defined so that it admits \emph{only} representative traces and all other traces are merely equivalent to one of those representatives, up to commutativity.
We then gave a language for finitely expressing abstractions of those quotients (as regular or context-free languages) and an inductive and automata-theoretical way of describing them.
Our results show that quotients provide a succinct formal foundation for scenario-based reasoning, are capable of capturing a wide range of tricky objects, enhance original authors' correctness arguments, and that discovery of candidate quotient expressions can be automated.
In the future will explore further mechanization and other application domains.

\section*{Data-Availability Statement}
Software that supports Sec.~\ref{sec:automation}
is available on GitHub~\cite{github} and Zenodo~\cite{zenodo}.

\begin{acks}
We thank Matthew Parkinson and the anonymous reviewers for their feedback on this draft. Koskinen was partially supported by 
\grantsponsor{GSNSF}{NSF}{https://www.nsf.gov} award \grantnum{GSNSF}{CCF-2008633} and 
\grantsponsor{GSNSF}{NSF}{https://www.nsf.gov} award \grantnum{GSNSF}{CCF-2315363}.
Enea was partially supported by 
\grantsponsor{GSANR}{ANR}{https://anr.fr} award \grantnum{ANR}{SCEPROOF}.
\end{acks}


\bibliography{biblio,refs-oopsla20,refs-pldi20,dblp}

\ifarxiv
\vfill\pagebreak
\appendix

\begin{center}
{\Large\bf Scenario-based Proofs for Concurrent Objects}

\medskip
{\large\bf Appendix}
\end{center}

\bigskip
\section{Unabridged Concurrent Object Semantics}
\label{apx:semantics}

We define an operational semantics for concurrent objects as sets of executions that interleave steps of a number of method invocations executed by different threads. For simplicity, we assume that every thread invokes a single method, which is without loss of generality provided that thread ids are modeled as additional inputs. Method implementations are assumed to be given as KAT expressions, as described in Section~\ref{sec:prelim}. 

\smartpar{Client environments}
An object $O$ is acted on by a finite \dt{environment} $\env: \tids \rightarrow O \times \overline{\mathit{Val}}$, specifying which threads invoke which methods, with which argument values. We use $\tids$ to denote the set of thread ids, and $\mathit{Val}$ an unspecified set of values ($\overline{\mathit{Val}}$ denotes the set of tuples of values). We assume that $\tids$ is equipped with a total order $<$ that will be used to define representatives of equivalence classes up to symmetry (renaming of thread ids).

\smartpar{States}
We assume that each test or action in a method implementation acts on a local state, whose content can be accessed only by the thread executing that test/action, and possibly a shared state which can be read or modified by any thread in the environment. As expected, we assume that each local state contains a valuation for the arguments of an invocation $\As$ and, once a thread has finished its execution, its local state contains the return values $\vec{\rv}$. A precise formalization of local/shared states is irrelevant to our development and we omit it for readability. Let $\Lstate$ and $\Gstate$ denote the set of local and shared states, respectively.

A thread executes an implementation given by a KAT expression $k$, according to the rules below.
We assume that semantics of tests $\sem{b}:(\Lstate \times \Gstate) \rightarrow \mathbb{B}$ and actions $\sem{a}:(\Lstate \times \Gstate) \rightarrow (\Lstate \times \Gstate)$ is provided (or generated from the language/program). 

\smartpar{Non-deterministic single-thread execution} 
Given an environment $\env$, a step of a single thread $\tid$ is a relation on 
$\Lstate\times\Gstate\times(\Kat\cup\{\bot\})$ where $\bot$ indicates that a thread has completed.
We denote this relation as $\lstate,\gstate,k \downarrow_\lbl \lstate',\gstate',k'$, which optionally involve label $\lbl$.
\dt{Labels} are taken from the set of possible labels 
$\Lbls \subseteq \kAs \cup \kBs \cup \Inv{\tid}{m(\vec{v})} \cup \Ret{\tid}{\rvs} \cup \condwr{b}{a}$
which includes primitive actions, primitive tests, invocations, returns or ARWs. (We here write $\Inv{\tid}{m(\vec{v})}$ with free variables to refer to the set of all invocations and similar for returns and ARWs.)
The labeled single-step semantics are now defined inductively on $k$ as follows:
$$
\infer{\lstate^0,\gstate,\epsilon \downarrow_{\Inv{\tid}{m(\vec{v})}} \lstate^0[\mathit{args}_i\mapsto v_i],\gstate,k_m}{\env(\tid)=(m(\As)/\rvs:k_m,\vec{v})}
\qquad
\infer{\lstate,\gstate,\Ret{\tid}{\rvs} \downarrow_{\Ret{\tid}{\vec{v}}} \lstate,\gstate,\bot}{\vec{v} = \lstate(\rvs)}
$$
$$
\infer{\lstate,\gstate,k + k' \downarrow \lstate,\gstate,k}{}
\qquad
\infer{\lstate,\gstate,k + k' \downarrow \lstate,\gstate,k'}{}
\qquad
\infer{\lstate,\gstate,k \kdot k' \downarrow \lstate',\gstate',k'}{
\lstate,\gstate,k \downarrow \lstate',\gstate',1
}
$$
$$
\infer{\lstate,\gstate,k^* \downarrow \lstate,\gstate,k\kdot k^*}{}
\qquad
\infer{\lstate,\gstate,k^* \downarrow \lstate,\gstate,1}{}
$$
$$
\infer{\lstate,\gstate,a \downarrow_{a} \lstate',\gstate',1}{
a \neq \condLL
&
(\lstate',\gstate') = \sem{a}(\lstate,\gstate)
}
\qquad
\infer{\lstate,\gstate,b \downarrow_{b} \lstate,\gstate,1}{\sem{b}(\lstate,\gstate) = \textsf{true}}
$$
$$
\infer{\lstate,\gstate,\condwr{b}{a} \downarrow_{\condwr{b}{a}} \lstate',\gstate',1}{\sem{b}(\lstate,\gstate) & (\lstate',\gstate') = \sem{a}(\lstate,\gstate)}
$$
The first rule is for invocation, assuming that the environment for this thread specifies that $m$ should be invoked with arguments $\vec{v}$. These arguments are recorded in the local state, and an invocation label is generated; $\lstate^0$ is a fixed initial local state.
The second rule applies when execution reaches a return statement and a label is generated with the values provided in the local state variables $\rvs$.
Note that invocation/return labels are not invocation/return actions because they contain \emph{values} rather than arguments.
The subsequent 5 rules are built atop the standard non-deterministic semantics of KAT expressions, without any labels being generated.
The last 3 rules are for atomic actions $a$, atomic tests $b$, and ARWs $\condwr{b}{a}$, with the respective labels generated. 
When a test $b$ does not hold, the successor is undefined  (and similar when atomic test $b$ in $\condwr{b}{a}$ does not hold). 
We further define $\lstate,\gstate,k \Downarrow_{\lbl} \lstate^n,\gstate^n,k^n$, relating triples from a sequence
$(\lstate,\gstate,k)\downarrow
(\lstate^1,\gstate^1,k^1)\downarrow\cdots
\downarrow_{\lbl}
(\lstate^n,\gstate^n,k^n)$ where only the final $\downarrow$ transition produces a label. (\ie~the intermediate label-free nondeterminism has been resolved.)

The rules above give a semantics to steps of a thread assuming a certain shared state $\sigma_g$, and can be extended to sequences of steps assuming that the shared state can be changed arbitrarily in between every two steps. Formally, given a KAT expression $\kat$, an \dt{execution of} $\kat$ starting from a local state $\lstate$ and global state $\gstate$ is defined as a sequence of triples $\lstate^i,\gstate^i,\kat^i$ with $0\leq i\leq n$ such that: (1) $\lstate^0=\lstate$, $\gstate^0=\gstate$, $\kat^0=k$, (2) $\lstate^i,\gstate^i,\kat^i \Downarrow_{\lbl} \lstate^{i+1},\gstate^{i+1},\kat^{i+1}$ for all $i$ even, (3) $\lstate^i= \lstate^{i+1}$ and $\kat^i = \kat^{i+1}$ for all $i$ odd, and (4) $k^n=1$. Note that $\gstate$ is unconstrained in (3).

\begin{example}
Consider $k_{inc}$ as defined in Example~\ref{ex:ctrobj} and $\lstate^0= [\ttc \mapsto \textsf{undef}]$ and $\gstate^0=[\code{ctr} \mapsto 0]$. Here is one execution of $k_{inc}$:
\[\begin{array}{rll}
\textrm{Step 0}: & (\lstate^0,\gstate^0,(\code{c:=ctr}\cdots)^*),\\
\textrm{Step 1}: & (\lstate^0,\gstate^0,(\code{c:=ctr}\cdots)\cdot (\code{c:=ctr}\cdots)^*), & \textrm{Unfold * via }\downarrow\\
\textrm{Step 2}: & (\lstate^0,\gstate',(\code{c:=ctr}\cdots)\cdot (\code{c:=ctr}\cdots)^*), & \textrm{Global state changed arbitrarily}\\
\textrm{Step 3}: & (\lstate^0[\ttc\mapsto \gstate'(\ttctr)],\gstate', 1\cdot (\code{c:=ctr}\cdots)^*), \ldots & \textrm{Reduce an action via } \downarrow_{\code{c:=ctr}} \\
\end{array}\]
\end{example}

%


\subsection{Executions, Traces}

The set of executions of a concurrent object $O$ in the context of an environment $\env$ are defined as interleavings of single-thread executions, acting on the shared state and their local states, with nondeterministic scheduling.

A configuration $C \in (\gstate,T)$ where $T : \tids \rightharpoonup (\Lstate\times (\Kat\cup\{\bot\}))$ comprises a shared state $\gstate\in\Gstate$ and a mapping for each active thread to its local state and current code.
The initial configuration is defined by
$C_0=(\gstate^0,\emptyset)$
where $\gstate^0$ is a fixed initial shared state.
Let $\Conf$ denote the set of configurations.

An \dt{execution} of $O$ is a sequence of configurations and labeled transitions over the threads specified by an environment $\env$. The transition relation $\Rightarrow: \Conf \times (\tids\times\Lbls) \times \Conf$ is defined as: 
$$
\infer{
(\gstate,T) \exectrans{(\tid:\Inv{\tid}{m(\vec{v})})} (\gstate',T[\tid \mapsto (\lstate^0[\mathit{args}_i\mapsto v_i],k_m)])
}{
\env(\tid)=(m(\As)/\rvs:k_m,\vec{v})
\quad
T(t)\mbox{ undefined}
}
$$
$$
\infer{
(\gstate,T) \exectrans{(\tid:\lbl)} (\gstate',T[\tid \mapsto (\lstate',k')])
}{
T(\tid) = (\lstate,k)
&
\lstate,\gstate,k \Downarrow_{\lbl} \lstate',\gstate',k'
}
$$
A transition $\exectrans{\;}$ is possible for any thread whose $\kat$ is not $\bot$. The first rule models a new thread invoking a method according to the environment. In the second, the thread $\tid$ takes a $\Downarrow_{\lbl}$ step, producing label $\ell$, and the configuration is updated with the new global state and the new $(\lstate',k')$ for thread $\tid$.

\section{Proof of Theorem~\ref{def:completeness_methodology}}
\label{apx:proof_methodology_sound}

We reason by induction on the number of paths in $\Pths(\expr)$ in a completed execution $\exec$ of $O$ that are either (1) write paths but they are interleaved with actions of other threads, or (2) local paths but they are interleaved with more than two write paths in $\Pths(\expr)$, or with a single write path in $\Pths(\expr)$ but together with this path, it does not form a support of a layer in $\expr$.

The base case of the induction is trivial: since every label of a transition in $\exec$ belongs to some path in $\Pths(\expr)$, and all paths are interleaved as prescribed by the layers, then clearly, the trace of $\exec$ is in the interpretation of $\expr$.

For the induction step, consider first a write path that is interleaved with actions of other threads. Let $\exec'$ be the minimal subsequence of $\exec$ that contains only steps of that path and all the other write paths that interleave with it.
By the induction hypothesis, the latter write paths execute without interruption. This execution is feasible starting from the first configuration of $\exec'$, because we removed only local actions that do not affect enabled-ness of other concurrently executing steps. Applying the WPC condition, there exists an execution $\exec''$ strongly equivalent to $\exec'$ where all paths execute without interruption. Since $\exec''$ passes through the same sequence of shared states (modulo stuttering) it can ``replace'' $\exec'$ in $\exec$. The trace of the obtained execution is a sequence of layers which ends the proof.

Second, consider a local path that is interleaved with more than two write paths. Similarly to the previous case, one can extract only the steps of that path and all the other write paths with which it interleaves, apply the LPC condition to produce an equivalent sub-execution where that path interleaves with at most one other write path, and then, re-insert the obtained sub-execution into the original execution.


\section{SLS queue source code}
\label{apx:resqueue}

Below is the implementation of the Scherer {\it et al.}~\cite{Scherer2006} queue. Path labels such as $\circled{1t}$ or $\ropath{1f}$ are included to indicate which paths from Sec.~\ref{subsec:resqueue} correspond to those program locations, where possible. (We have slightly refactored the second portion of the implementation in our path graph.) 
\begin{center}
\lstset{xleftmargin=1.0ex,numbersep=2pt}
\begin{lstlisting}
public void enq(T e) {
  Node offer = new Node(e, NodeType.ITEM);
  while (true) {
    Node t = tail.get(), h = head.get();
    if (h == t || t.type == NodeType.ITEM) {
      Node n = t.next.get();
      if (t == tail.get()) {
        if (n != null) {
          tail.compareAndSet(t, n); `\circled{1t},\ropath{1f}`
        } else if (t.next.compareAndSet(n, offer) ) { `\label{ln:insertOffer}`
            `\circled{3t}`
            tail.compareAndSet(t, offer); `\circled{3't},\ropath{3'f}`
            while (offer.item.get() == e); `\ropath{3''}` `\label{ln:taken}`
            h = head.get();
            if (offer == h.next.get()) {
               head.compareAndSet(h, offer); return; `\circled{3''at},\ropath{3''af}`
            } else { return; `\circled{3''b}` }
        } else { restart `\ropath{3f}` }
      } else { restart `\ropath{2}` }
   } else {
      Node n = h.next.get(); `\label{ln:readHeadNext}`
      if (t != tail.get() || h != head.get() || n == null `\ropath{4}`) { `\label{ln:verifyHeadNext}`
         continue;  
      }
      boolean success = n.item.compareAndSet(null, e); `\circled{5t},\ropath{5f}` `\label{ln:appendHead}`
      head.compareAndSet(h, n); `\circled{\{6,7\}bt},\ropath{\{6,7\}bf}` `\label{ln:removeDummy}`
      if (success) 
        return; `\ropath{7a},\circled{7bt},\ropath{7bf}`
      else
        restart `\ropath{6a},\circled{6bt},\ropath{7bf}`
      }
   }
}
\end{lstlisting}
\end{center}

\subsection{SLS queue implementation graph}
\label{apx:resqueue-impl-explain}

\begin{figure}[t]
\includegraphics[width=\columnwidth]{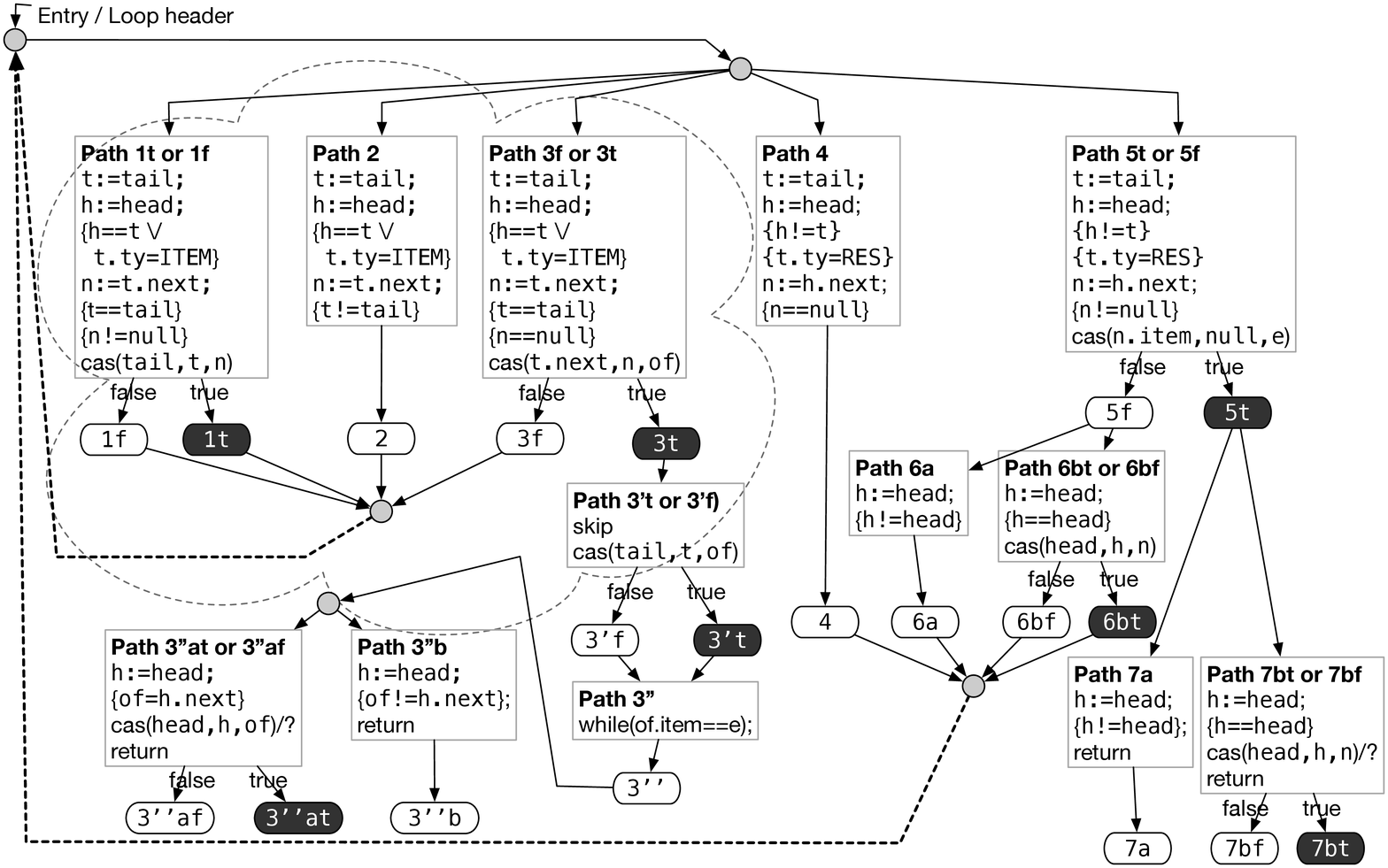}
\caption{(Reproduction of Fig.~\ref{fig:resqueue-scfg}: The implementation of
a synchronous queue due to~\citet{Scherer2006}.)}
\end{figure}

We describe the implementation in Fig.~\ref{fig:resqueue-scfg}, beginning with the cloud-surrounded area in upper left-hand half of the diagram which is, essentially, the Michael-Scott queue. In this region the queue is a list of \emph{items} (with a dummy head node), whereas the new portions of the implementation apply when the queue is a list of reservations. 
Paths $\circled{1t}$ and $\ropath{1f}$ attempt to advance the tail pointer. Path $\circled{2}$ is interrupted by a recently changed tail pointer.
Paths $\circled{3t}$ and $\ropath{3f}$ attempt to swap tail's next to their new item offer node. If successful, paths $\circled{3't}$ and $\ropath{3'f}$ attempt to advance the tail pointer. 

Path $\ropath{3''}$ is the synchronous part of the algorithm: waiting for an enqueued item to be consumed by a dequeuer. At that point, the head pointer may be stale, and paths $\circled{3''at}$, $\ropath{3''af}$ and $\ropath{3''b}$ try to advance the head pointer.

Alternatively the queue may be a list of reservations, again with a dummy head node. Paths $\circled{5t}$ and $\ropath{5f}$ attempt to fulfill a dequeuer's reservation by swapping null for an element. Path $\ropath{5f}$ is doomed to restart, while path $\circled{5t}$ will soon return. In either case, the enqueue first attempts to advance the head pointer (paths 
$\ropath{6a}, \ropath{6bf}, \circled{6bt},
\ropath{7a}, \ropath{7bf}, \circled{7bt}$).

The implementation of \lstinline|dequeue| is a sort of dual operation. 
When the queue is a non-empty list of \emph{items}, dequeue tries to take the first item by swapping the head's next value for null (and then tries to advance the head pointer). When the queue is empty or a list of reservations, dequeue redirects the tail's next to its new reservation node (and then tries to advance the tail pointer). After appending the reservation, dequeue spins until a value is swapped in, and then tries to advance the head pointer before returning.
%

\section{Detailed explanation for Counter}
\label{apx:detail-counter-case}

\begin{figure}[t]
\includegraphics[width=0.90\textwidth]{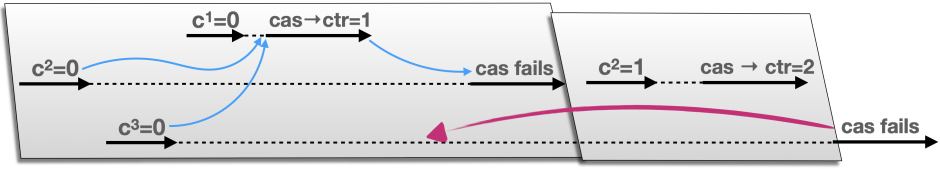}
\vspace{-3mm}
\caption{An increment-only execution for which there is an equivalent representative execution
(as suggested by the large wavy arrow) that is in the layer quotient.} 
\label{fig:counter-case1}
\vspace{-4mm}
\end{figure}

To explain the equivalence between arbitrary interleavings of increment invocations and 
representative executions in quotient $\equo{O}$,
we consider the execution pictured in Figure~\ref{fig:counter-case1}. This execution is \emph{not} 
in $\equo{O}$
because the unsuccessful iteration of thread 3 is interleaved with \emph{two} successful CASs: it reads \code{ctr} before the first successful CAS (in thread 1) and after the second successful CAS (in thread 2). Yet, as explained above, a layer interleaves an unsuccessful iteration with a \emph{single} successful CAS.

However, the second read of \code{ctr}, corresponding to the unsuccessful CAS in thread 3, is enabled even if executed earlier just after the first successful CAS. Moreover, since retry-loop iterations are ``forgetful'', \ie~there is no flow of data from one iteration to the next (the value of \code{ctr} is read anew in the next iteration), executing the unsuccessful CAS earlier would not affect the future behavior of this thread (and any other thread because it is a read) even if this reordering makes this unsuccessful CAS  read a different value of \code{ctr} (value 1 instead of 2). This reasoning extends even when the iteration of thread 3 is interleaved with more than two other iterations. 

The case of increment-only executions is simpler because it does not include the so-called ABA scenarios in which \code{ctr} is changed to a new value and later restored to a previous value. 
Every successful CAS will write a new value to \code{ctr} and will make all the other invocations that read \code{ctr} just before to restart. 

\begin{figure}[t]
\includegraphics[width=0.90\textwidth]{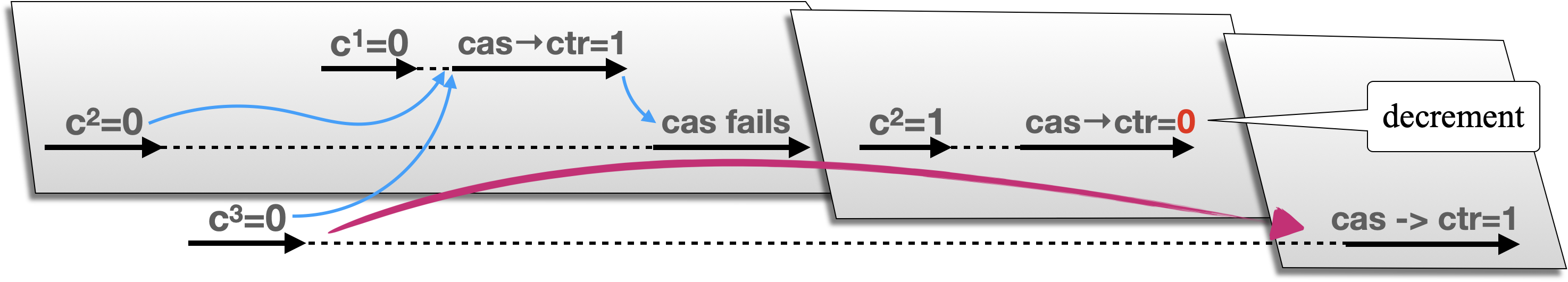}
\vspace{-3mm}
\caption{An execution where the second thread executes a decrement, which 
is equivalent to the representative execution suggested by the wavy arrow.}
\label{fig:counter-case2}
\vspace{-5mm}
\end{figure}

Interleavings of increment and decrement invocations can exhibit the ABA scenario described above, as exemplified in Figure~\ref{fig:counter-case2}. The value 0 read by thread 3 before the first successful CAS is restored by the second successful CAS (performed in a decrement invocation). This execution is \emph{not} 
a representative execution in $\equo{O}$
because the successful retry-loop iteration in thread 3 interleaves with other two successful iterations while 
in $\equo{O}$ executions,
every successful iteration is executed in isolation w.r.t. other successful iterations.
However, the equality test in the successful CAS of thread 3 (\code{ctr} == 0)  is enough to conclude that the previous read can be commuted to the right and just before the CAS. This allows to group together the actions of the third layer and obtain a 
representative execution from $\equo{O}$, extended to include decrements.
To this end, we introduce decrement layers of the form
$[(\code{c:=ctr}_{inc}^n \cdot \code{c:=ctr}_{dec}^m) \cdot \code{c:=ctr;cas(ctr,c,c-1)/true} \cdot (\code{cas/false}_{dec}^m \cdot \code{cas/false}_{inc}^n)$.
In this expression, $n$ concurrent increment threads and $m$ concurrent decrement threads interleave with a single successful decrement thread (we also subscripted with $inc/dec$ to indicate where the action came from.). All unsuccessful threads' operations commute with each other and are put in a canonical form (later the interpretation of $a^n$ will order $a$'s according to thread ids). We similarly augment the increment layers with concurrent decrement threads.


Decrement invocations can also be formed exclusively of \emph{read-only} iterations when they observe that \code{ctr} is 0. The last iteration in such invocations returns 0 and performs no write to the shared memory. Such loop iterations that read the value of \code{ctr} at the same time (after the same number of successful CASs) are grouped in a layer as well. They can be assumed to execute in isolation because they execute a single memory access.

\section{Layer automaton for Counter}

Overall, a quotient of the counter contains sequences of layers as described above. The order in which layers can occur in an execution can be constrained using regular expressions or equivalently, automata representations as shown in Fig.~\ref{fig:layers_automaton}. In this \emph{layer automaton}, states are properties of the shared memory that identify preconditions enabling shared-memory writes (successful CASs), and transitions represent layers.  

This automaton consists of two states depicted in dark gray, distinguishing shared-memory configurations where the precondition of a successful CAS in decrement invocations (\code{ctr} $>$ 0) holds. The self-loop on the initial state represents a layer (Layer 1) formed of an arbitrary number of decrement iterations returning value 0, executed by possibly different threads. ``\textsf{dec:10-11-ret(0)}'' refers to the control-flow path of decrement from Line 10 to Line 11 to return.  This is just an abbreviation; formally it is represented with KAT expressions. Layer 2 occurs on the outgoing transition from the initial state and this layer is formed from a successful increment iteration interleaved with an arbitrary number of unsuccessful increment iterations executed by different threads (when \code{ctr} equals 0 all decrement retry-loop iterations reach the \code{return} statement). Iterations are represented as control-flow paths in the code of the methods. 
{\sffamily inc:3-5-cas()/true} summarizes the single successful write path in the layer: an increment control-flow path that begins on Line 3, proceeds to the CAS, succeeds the CAS and returns. The final expression in Layer 2 summarizes an arbitrary number of threads failing the test on Line 4 (due to the successful write path), and loop back to Line 3.
The outgoing transitions from the second state represent layers containing a successful increment (Layer 3) or decrement iteration (Layer 4), each interleaved with an arbitrary number of unsuccessful increment or decrement iterations. Finally, the transition from \code{ctr}$>$0 to \code{ctr}=0  involves the same Layer 4, despite landing in a new automaton state.

\begin{figure}[t]\centering
\includegraphics[width=0.9\textwidth]{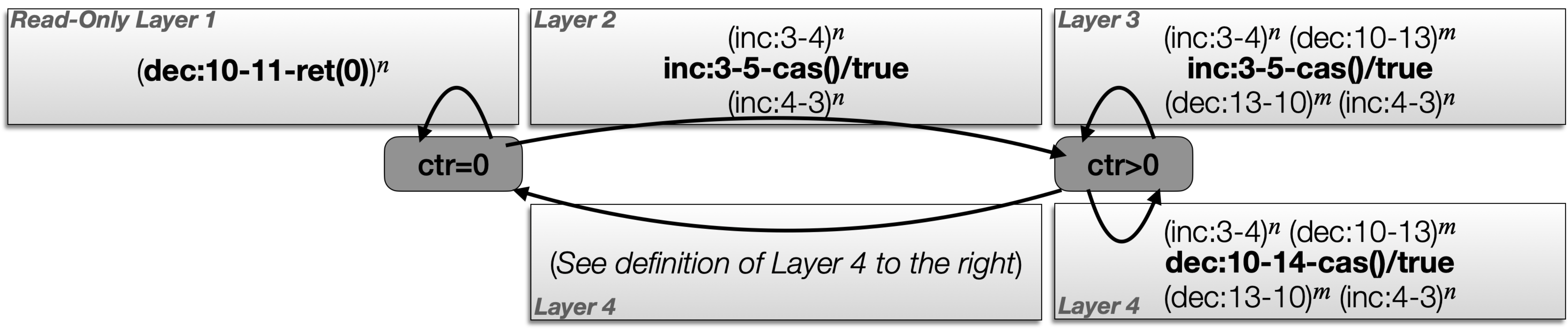}
\caption{An automaton representation of layer-serialized executions of the counter.}
\label{fig:layers_automaton}
\end{figure}


\section{Quotient for the SLS Queue}
\label{apx:resqueue-quo}

\begin{figure}[t]
\includegraphics[width=\columnwidth]{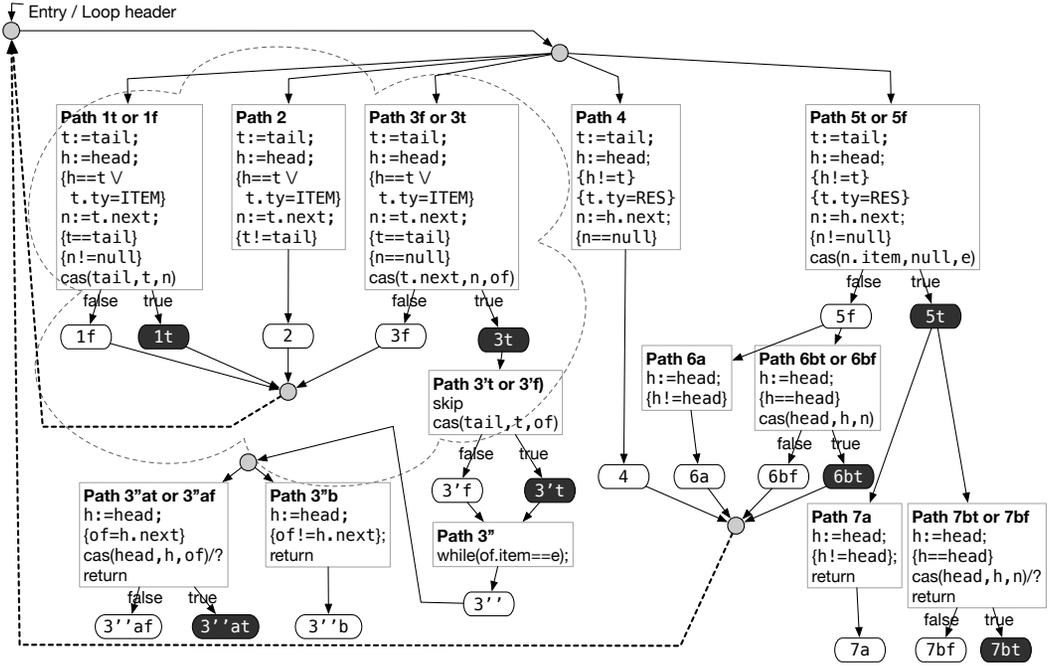}
\caption{\label{fig:resqueue-scfg} The implementation of a synchronous queue due to~\citet{Scherer2006}.}
\end{figure}

{\bf Implementation.}
The implementation of the SLS queue 
is illustrated in Fig.~\ref{fig:resqueue-scfg} (the source code is given in Apx.~\ref{apx:resqueue}).
This diagram is like a control-flow graph (entry point, loop header, branch/merge points, etc.),
but with some flattening to make paths more explicit. Paths are identified where they end, with write paths denoted as $\circled{1t}$ and local paths as $\ropath{4}$. Two paths that share a prefix and differ only based on a CAS result are denoted with a single box, but with true/false exit arcs, \eg~$\circled{1t}$ and $\ropath{1f}$. 
Later below we will write $\circled{D1t}$ versus $\circled{E1t}$ when we are referring specifically to dequeue versus enqueue.
This is the source for enqueue (which appends item nodes or fulfills reservation nodes) and the
source for dequeue is identical (except dequeue appends reservation nodes or consumes item nodes).

SLS, like MSQ, involves manipulating a list of nodes that are \emph{items} with a dummy head node. 
There is a synchronous blocking on path  $\ropath{3''}$. However, for SLS, alternatively the queue may be a list of reservations, and the right-hand paths attempt to fulfill a dequeuer's reservation by swapping null for an element. 
The implementation of \code{dequeue} is a sort of dual, omitted for lack of space. Below we denote paths such as $\circled{D5t}$ to mean the dequeue dual of enqueue's $\circled{5t}$.
Note that, unlike Treiber's stack or the MSQ, in the SLS queue a method invocation could involve a series of paths, \eg~the sequence $\circled{3t};\circled{3't};\ropath{3''};\circled{3''at}$, that involves multiple write operations.

The cloud-surrounded area in the upper left-hand half of the diagram is essentially MSQ and it involves manipulating a list of nodes that are \emph{items} with a dummy head node. 
There is a synchronous blocking on path  $\ropath{3''}$. Alternatively the queue may be a list of reservations, and the right-hand paths attempt to fulfill a dequeuer's reservation by swapping null for an element. 
\removed{The implementation of \code{dequeue} is a sort of dual, omitted for lack of space. Below we denote paths such as $\circled{D5t}$ to mean the dequeue dual of enqueue's $\circled{5t}$;}

The SLS queue demonstrates that a method implementation could consist of sequentially composed paths which define different layers. As we will see, advancing the tail pointer (and the head pointer) are subpaths of method implementation. Moreover, the synchronous behavior involves busy-wait/blocking during the implementation, after which point further paths are executed.

\begin{figure}[t]
\includegraphics[width=\columnwidth]{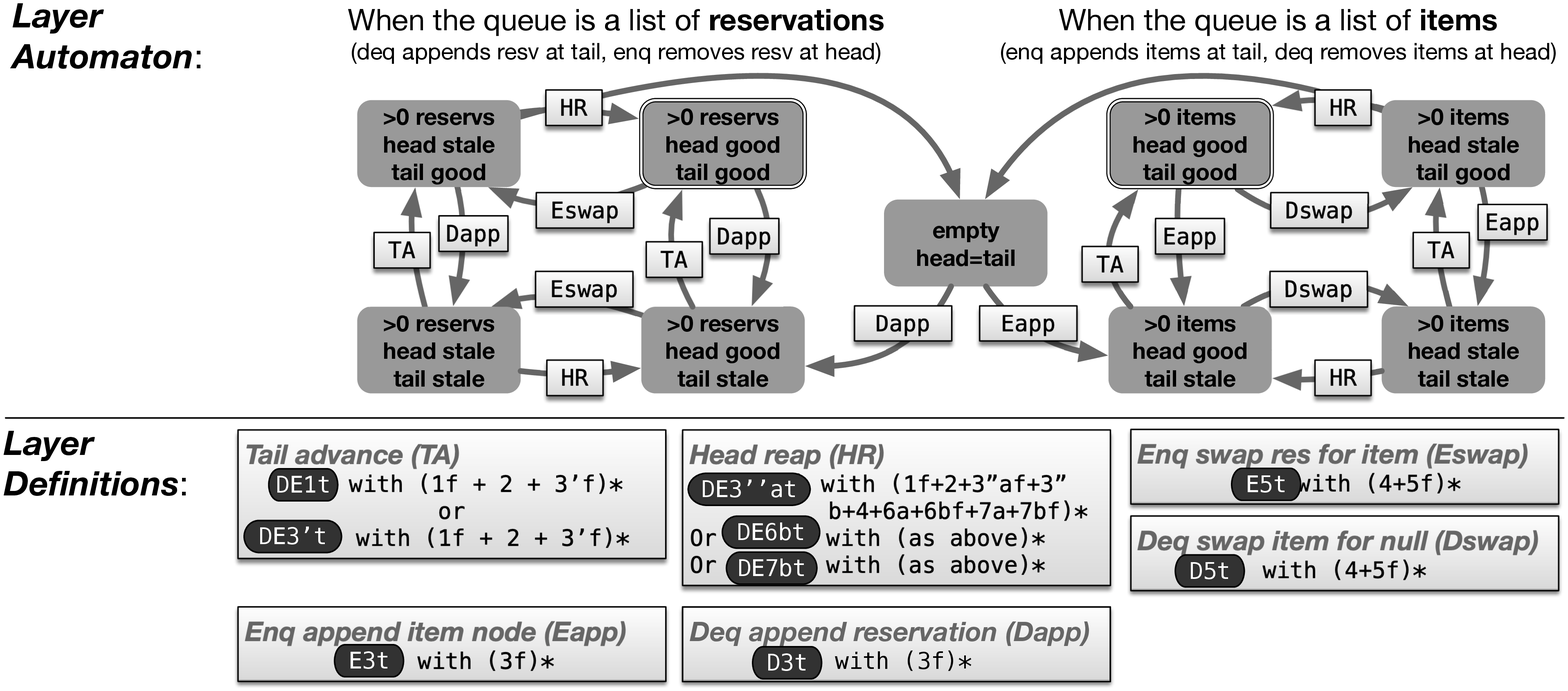}
\caption{\label{fig:resautomaton-cfgnodes} Layer automaton for the synchronous
SLS queue. Layers' acronyms and their definitions are given in the lower half of the figure. \ARXIV{States are depicted in dark boxes.} For conciseness, layer definitions do not split the prefix/suffix of the read paths.
}
\end{figure}

{\bf Quotient.} The quotient for SLS is discussed in Sec.~\ref{subsec:resqueue}. In this appendix, we show Fig.~\ref{fig:resautomaton-cfgnodes} which is similar to the Sec.~\ref{subsec:resqueue} quotient, but with precise CFG locations in the layer definitions.

Technically the states require one further predicate to indicate whether there is currently a thread at location $\circled{3t}$, omitted for lack of space. This is needed because the subsequent paths use the local variable \code{t} which is an old read done in the previous path. This is acceptable because it is only possible that one thread can be at location $\circled{3t}$ so the old read is still valid during the subsequent path. Typically we have found that implementations do not perform such ``old reads'' which are only correct as a result of very delicate reasoning.

 \begin{theorem}\label{thm:apxresqueue-linearizability} The SLS queue is linearizable.
  \begin{proof}
  The following expression uses the same layers, some marked $E$ or $D$ for linearization points:
\[\begin{array}{ll}
(
  \;\;\;\;\;[\Dapp\cdot\TA\cdot (\Eswap_{E;D}\cdot\HR \;\;\cdot\;\; \Dapp\cdot\TA)^*\cdot \Eswap_{E;D}\cdot\HR] & \textrm{// LHS}\\
  \;\;+ \;\;[\Eapp_{E}\cdot\TA \cdot(\Dswap_{D}\cdot\HR \;\;\cdot\;\; \Eapp_{E}\cdot\TA)^* \cdot \Dswap_{D}\cdot\HR] & \textrm{// RHS}\\
)^* \;\;\;\cdot\;\;\;
( 
  \;[\Dapp\cdot\TA]^* 
  \;\;\;+
  \;\;\;[\Eapp_{E}\cdot\TA]^* 
)
\end{array}\]
This expression captures 
iterating through the left and righthand sides of the automaton (passing through the empty ADT state in between),
followed by either unmatched appended dequeue reservations or unmatched appended enqueue items.
When the queue consists of reservations, the \Eswap{} layer provides the linearization point for enqueue, but also the corresponding dequeue.
\TA{} and \HR{} layers are positioned next to a corresponding  app and swap (resp.). 

We thus prove (\#1) This expression is an abstraction of the quotient: by induction on any execution,  feasible actions can be reordered into layers and those layers can be ordered into the  above expression. (\#2) For linearizability, we project out the LP operations to obtain simply $(E\!\cdot\! D)^* \cdot (E^*\! +\! D^*)$. Thus, combining with \#1, all executions meet the sequential spec.~of a queue.
\end{proof}
\end{theorem}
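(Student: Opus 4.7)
The plan is to apply Theorem~\ref{th:com_lin}, reducing the problem to exhibiting (i) an abstraction of a quotient together with (ii) a linearization-point mapping for its traces that is robust against the reorderings used to witness quotient completeness. I would first refine the automaton of Fig.~\ref{fig:resautomaton} into the more structured regular expression
\[
\Big(\,[\Dapp\cdot\TA\cdot(\Eswap\cdot\HR\cdot\Dapp\cdot\TA)^*\cdot\Eswap\cdot\HR]\;+\;[\Eapp\cdot\TA\cdot(\Dswap\cdot\HR\cdot\Eapp\cdot\TA)^*\cdot\Dswap\cdot\HR]\,\Big)^*\cdot\Big([\Dapp\cdot\TA]^*+[\Eapp\cdot\TA]^*\Big)
\]
and prove by induction on execution length that it abstracts the quotient already given by the automaton. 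The automaton structure guarantees that non-empty configurations lie entirely in the reservations (LHS) or items (RHS) region and that transitions between regions pass through empty, so each outer-star iteration corresponds to a single LHS- or RHS-phase bracketed by \TA{}/\HR{} cleanup.

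Next I would assign linearization points layer by layer: the successful CAS inside \Eapp{} (resp.~\Dswap{}) is the LP of its own enqueue (resp.~dequeue), while \Eswap{} is declared the \emph{joint} LP of two invocations---the fulfilling enqueue whose CAS occurs inside \Eswap{}, and the earliest still-unmatched dequeue whose earlier \Dapp{} appended the reservation being fulfilled---with the enqueue ordered strictly before the dequeue. Projecting the LP-bearing actions out of the refined expression collapses each LHS/RHS phase to $(E\cdot D)^{+}$ and yields $(E\cdot D)^{*}\cdot(E^{*}+D^{*})$ overall, which is contained in the sequential queue specification.

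The main obstacle is verifying the two side conditions on the LP mapping. For \emph{validity}, each \Eswap{} LP must lie between the matched dequeue's invocation and return; this holds because that dequeue's \Dapp{} strictly precedes the \Eswap{} and the dequeue's spin-wait on the reservation's item field only terminates after the \Eswap{} writes to that field. For \emph{robustness}, I would show that the swap set needed to establish completeness never contains pairs of LP labels: successful CASs on the relevant fields do not commute with other writes to those fields, so the WPC/LPC analysis only reorders LP labels past read-only or restart labels, keeping the induced order on LPs intact. The delicate subcase is the synchronous spin path, for which I must confirm that its local read steps commute freely into the canonical layer position, which follows because the spin reads a single field written only by the matching \Eswap{}.
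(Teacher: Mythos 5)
Your proposal is correct and follows essentially the same route as the paper: the same refined regular expression over layers, the same linearization-point assignment (with \Eswap{} serving as the joint LP of a fulfilling enqueue and its matched dequeue, enqueue ordered first), the same projection to $(E\cdot D)^*\cdot(E^*+D^*)$, and the same robustness argument that successful CASs never need to be swapped to establish quotient completeness. Your explicit checks of LP validity for the spin-waiting dequeue and of commutation on the synchronous read path are sound elaborations of details the paper leaves implicit.
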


\section{Quotient for Treiber's Stack}
\label{apx:treiber}

Recall the implementation of Treiber's stack~\cite{IBMTR:Treiber86}, stored as a linked list from a global pointer \code{top}, and manipulated as follows:\\
\begin{tabular}{l|l}
\begin{minipage}{1.95in}
\lstset{xleftmargin=3.0ex,numbersep=2pt}
\begin{lstlisting}[morekeywords={ret}]
void push(int item){ while(1){
  node_t* n = malloc(...);
  n->val = item;
  node_t* oldTop = top;
  n->next = oldTop;
  if(CAS(top,oldTop,n) ret;
} }
\end{lstlisting}
\end{minipage}
&
\begin{minipage}{2.3in}
\lstset{xleftmargin=3.0ex,numbersep=2pt}
\begin{lstlisting}[morekeywords={ret}]
int pop(){ while(1){
  node_t* oldTop = top;
  if(oldTop==NULL) { ret 0; }
  newTop = oldTop->next;
  if(CAS(top,oldTop,newTop) ret oldTop->val;
} }
\end{lstlisting}
\end{minipage}\\
\end{tabular}

The states for the layer-automaton of the Treiber's Stack (derived from the pre-conditions of successful \code{push} and \code{pop} operations) are simply 
\code{top=null}
and
\code{top$\neq$null}.
The Treiber stack can thus be decomposed into a layer automaton as follows:
\begin{center}
\includegraphics[width=\columnwidth]{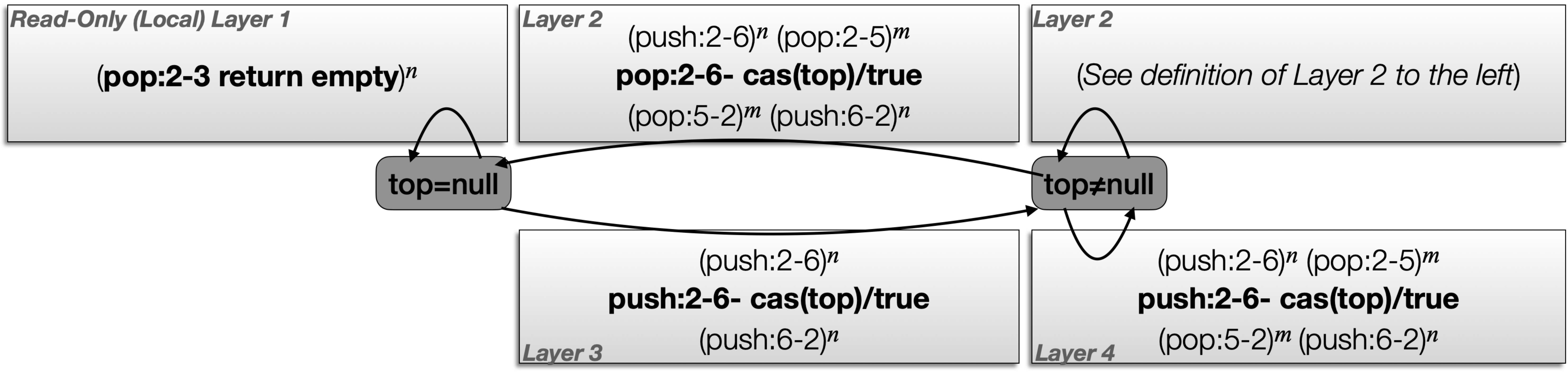}
\end{center}

Above the automaton states are given in rounded dark boxes, and edges are labeled with layers.
\ARXIV{In this diagram (and those below), we}
\CONF{We} abbreviate local paths using source code line numbers rather than KAT expressions. For example \code{pop:2-5} means the path starting at the beginning of Line 2 of \ttpop{} and proceeding to the beginning of Line 5.
Layer 1 is a local layer, in which the state is \code{top=null}. In this layer, there is only one local path from \code{pop} that is enabled for some $n$ threads and it pertains to returning 0 to indicate empty. Layer 2 occurs from a state where \code{top$\neq$null} and the \ttpop{} ARW action for the compare-and-swap occurs, causing $n$ other \ttpush{}es' and $m$ \ttpop{}s' CAS attempts to fail (on their lines 6 and 5, respectively) and thus they restart (transition back to their respective Line 2s). The write path is in bold. The other layers are similar, with a single  \ttpop{} or \ttpush{} ARW invalidating other \ttpop{}/\ttpush{} attempts. Layer 2 occurs as a label in two different transitions. Layer 5 self-loops at state \code{top$\neq$null}, which abstracts over all non-empty stacks.


\begin{lemma}\label{lemma:treiber}
The above layer automaton is an abstraction of a quotient
for Treiber's stack.
\begin{proof}
By the methodology of Def.~\ref{def:completeness_methodology}.
%
Per WPC, we must show that an old read of \code{top} and \code{top->next}, with then arbitrarily many write paths interleaved, can be moved to the right just before the successful CAS (an unsuccessful CAS belongs to a local path, discussed next). The successful CAS checks that \code{top} is unchanged since the old read. Moreover, since \code{top->next} is only written once, if \code{top} is unchanged then \code{top->next} must also be unchanged\footnote{We assume a semantics modeling garbage collection where memory cannot be reallocated. Without this assumption, it is possible that \code{top} is unchanged, but \code{top->next}  has changed. This is known as an ABA bug.}.
Therefore both old reads could be moved to the right just before the successful CAS, and a whole write path can be assumed to execute without interruption. 

Per LPC, requiring that each local path, \code{pop}s returning 0 or iterations with failed CASs, can be re-ordered to interleave with at most one write path. Iterations where a \code{pop} returns 0 perform a single access to shared-memory (reading \code{top}) and therefore, they can be assumed to execute without interruption. The failed CAS in an iteration can be re-ordered to occur just after the first successful CAS that follows the read of \code{top} in the same iteration. This holds because 
in Treiber's stack successful CAS operations always mutate \code{top} to a fresh value (assuming memory freshness). 
\end{proof}
\end{lemma}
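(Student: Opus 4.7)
My plan is to apply Theorem~\ref{th:layer_automaton}, which reduces the task to establishing three obligations: (i) the starred union of the six basic layer expressions appearing in the automaton is an abstraction of a quotient via Theorem~\ref{def:completeness_methodology}, (ii) the two automaton states $\code{top=null}$ and $\code{top}\neq\code{null}$ form an adequate abstraction of the reachable configurations of Treiber's stack, and (iii) each layer-labelled edge reflects exactly the state changes that its write path induces from a matching source configuration.

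I would first dispatch the state-abstraction obligation, which is essentially a sanity check. The initial configuration is empty, so it matches the designated initial state $\code{top=null}$. Every reachable configuration satisfies exactly one of the two predicates, so the abstraction is complete. For the edges, I would walk through each layer and verify by the semantics of the final successful CAS that the target state is correct: a \ttpush{} CAS sets \code{top} to a fresh non-null node, so push layers always land in $\code{top}\neq\code{null}$; a \ttpop{} CAS swings \code{top} to \code{oldTop->next}, which may be \code{null} or non-null, justifying the two outgoing \ttpop{} edges from $\code{top}\neq\code{null}$. The read-only ``empty pop'' local layer self-loops at $\code{top=null}$.

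The substance of the proof is verifying the conditions of Theorem~\ref{def:completeness_methodology} for the starred union of layers. Path coverage is immediate by inspection: the five full control-flow paths of \ttpush{}/\ttpop{} appear either as a write path $\kat_w$ inside some write layer or as the body of a local layer. For WPC, I plan to argue that a successful CAS validates that the thread's last read of \code{top} is still current; therefore the old read, together with the intervening thread-local actions, commutes rightward past concurrent actions of other threads and may be slid to immediately precede the CAS, producing a strongly equivalent execution in which the write path runs atomically. For LPC, the ``empty pop'' local path performs a single shared-memory access and thus trivially interleaves with at most one write path, while a failed-CAS iteration can always be moved leftward so that its failing CAS lands immediately after the first successful CAS on \code{top} following the thread's old read; that CAS is still failing at the earlier point because a later non-matching value of \code{top} implies a different value at any earlier moment after the successful update.

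The main obstacle will be the WPC argument for \ttpop{}'s write path, which depends on reading both \code{top} and \code{top->next}. Sliding these two reads rightward past concurrent writes requires knowing that, whenever \code{top} is unchanged at the CAS, \code{top->next} is unchanged as well. This is only sound under a memory model that forbids address reuse, i.e. one with garbage collection or otherwise free of ABA hazards. I would make this assumption explicit, after which the argument closes by the observation that in Treiber's stack \code{top->next} for any given node is written exactly once, at allocation time inside \ttpush{}, and so remains stable for as long as that node is reachable as \code{top}.
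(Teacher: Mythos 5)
Your proposal is correct and follows essentially the same route as the paper's proof: the WPC argument (the successful CAS validates that \code{top} is unchanged, and since each node's \code{next} is written only once, \code{top->next} is also unchanged under a no-reuse/GC assumption, so the old reads slide rightward) and the LPC argument (empty \code{pop} is a single shared access; a failed CAS moves left to just after the first successful CAS following its read, since successful CASes always install a fresh value) match the paper exactly. Your additional explicit discharge of the state-abstraction and edge obligations of Theorem~\ref{th:layer_automaton} is a reasonable bit of extra thoroughness that the paper's proof leaves implicit, but it does not change the substance of the argument.
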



%

\section{Quotient for Elimination Stack}
\label{apx:elimination}

The Elimination Stack~\cite{DBLP:conf/spaa/HendlerSY04} augments Treiber's stack with a protocol for ``colliding'' push and pop invocations so that the push passes its input directly to the pop without affecting the underlying data structure. An invocation starts this protocol after performing a loop iteration in Treiber's stack and failing (due to contention on \texttt{top}). The protocol uses two arrays: (1) a \texttt{location} array indexed by thread ids where a push or pop invocation publishes a descriptor object with fields \texttt{op} for the type of invocation (push or pop), \texttt{id} for the id of the invoking thread, and \texttt{input} for the input of a push operation, and (2) a \texttt{collision} array indexed by arbitrary integers which stores ids of threads announcing their availability to collide. 

Each invocation starts by publishing their descriptor in the \texttt{location} array (line~\ref{line:descriptor-ap}). Then, it reads a random cell of the \texttt{collision} array while also trying to publish their id at the same index using a CAS (lines~\ref{line:start_publish-ap}--\ref{line:end_publish-ap}). If it reads a non-NULL thread id, then it tries to collide with that thread. A successful collision requires 2 successful CASs on the \texttt{location} cells of the two threads (we require CASs because other threads may compete to collide with one of these two threads): the initiator of the collision needs to clear its cell (line~\ref{line:first_CAS-ap}) and modify the cell of the other thread (line~\ref{line:second_CAS-ap}) to pass its input if the other thread is a pop. The first CAS failing means that a third thread successfully collided with the initiator and the initiator can simply return (lines~\ref{line:start_finish-ap}--\ref{line:end_finish-ap}). Failing the second CAS leads to a restart (line~\ref{line:restart-ap}). If the invocation reads a NULL thread id from \texttt{collision}, then it tries to clear its cell before restarting (line~\ref{line:clear_CAS-ap}). If it fails, then as in the previous case, a collision happened with a third thread and the current thread can simply return (line~\ref{line:start_last_finish-ap}--\ref{line:end_last_finish-ap}).

\begin{minipage}{1.95in}
\lstset{xleftmargin=3.0ex,numbersep=2pt}
\begin{lstlisting}[morekeywords={ret}]
void push/pop(descriptor p){ while(1) {
    one iteration of Treiber stack
    location[mytid] = p; `\label{line:descriptor-ap}`
    pos = nondet(); `\label{line:start_publish-ap}`
    do { 
      him = collision[pos]
    } while (!CAS(&collision[pos], him, mytid)) `\label{line:end_publish-ap}`
    if him != NULL { `\label{line:first_test-ap}`
       q = location[him]
       if ( q != NULL & q.id = him & p.op != q.op ) { `\label{line:second_test-ap}`
          if (CAS(&location[mytid],p,NULL)) { `\label{line:first_CAS-ap}`
             if ( CAS (&location[him], q, p/NULL) ) `\label{line:second_CAS-ap}`
                return NULL/q.input
             else 
                continue `\label{line:restart-ap}`
          }
          else {
             val = NULL/location[mytid].input; `\label{line:start_finish-ap}`
             location[mytid] = NULL;
             return val `\label{line:end_finish-ap}`
    }  }  }
    if (!CAS(&location[mytid],p,NULL)) { `\label{line:clear_CAS-ap}`
       val = NULL/location[mytid].data; `\label{line:start_last_finish-ap}`
       location[mytid] = NULL;
       return val `\label{line:end_last_finish-ap}`
} } }
\end{lstlisting}
\end{minipage}

We use the automaton below to describe a sound abstraction of the quotient. Layers of Treiber's stack (defined in Section~\ref{apx:treiber}) interleave with layers of the collision protocol (some components are not exactly layers as in Definition~\ref{def:layer_exprs}, but very similar).
\begin{center}
\includegraphics[width=.6\columnwidth]{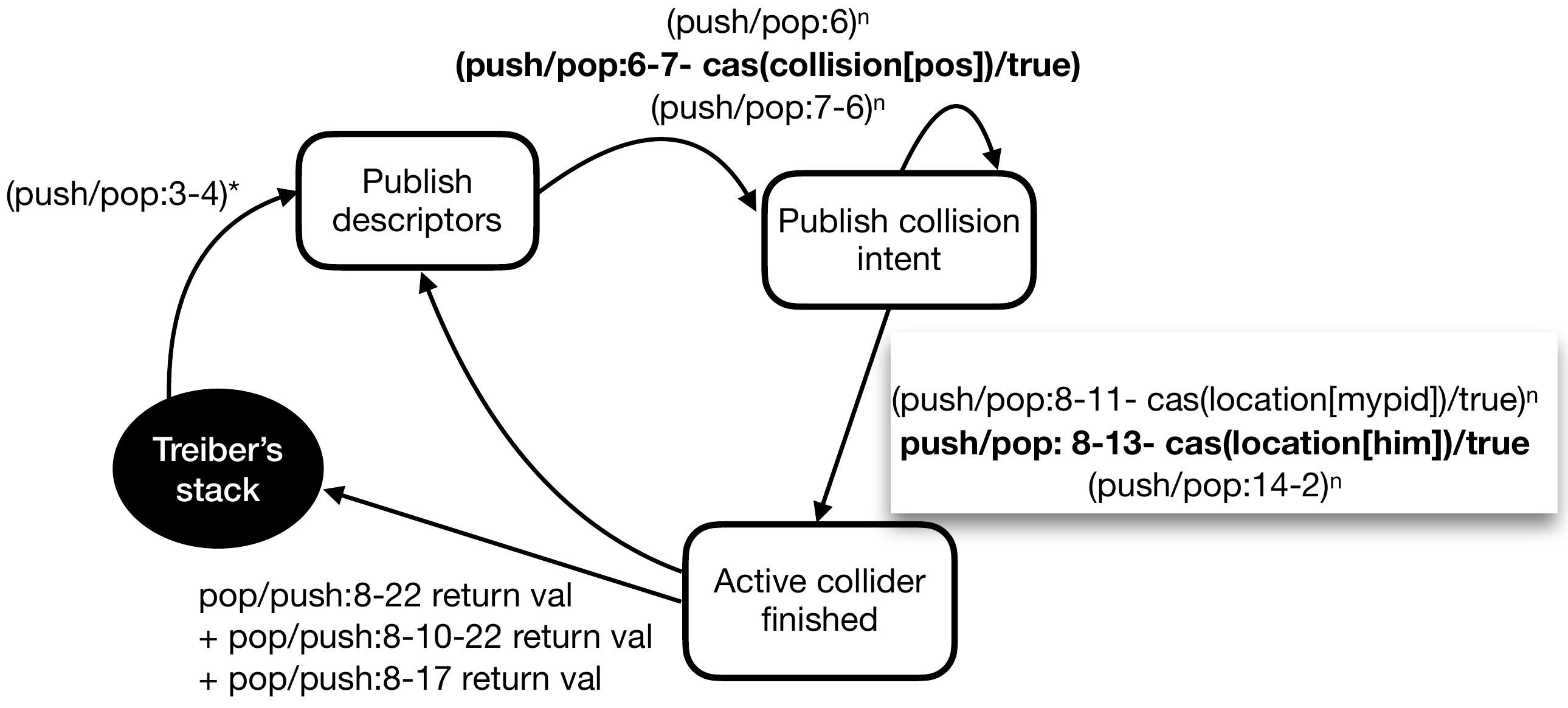}
\end{center}
Executions in the quotient serialize collisions and proceed as follows: (1) some number of threads publish their descriptor and choose a cell in the \texttt{collision} array, (2) some number of threads publish their id in the \texttt{collision} array (there may be more than one such thread -- note the self-loop on the top right state), (3) some number of threads succeed the CAS to clear their \texttt{location} cell but only one succeeds to also CAS the \texttt{location} cell of some arbitrary but fixed thread \texttt{him} and return, and (4) the thread \texttt{him} returns after possibly passing the tests at line~\ref{line:first_test-ap} or~\ref{line:second_test-ap}. We emphasize that collisions happen in a serial order, i.e., at any point there is exactly one thread that succeeds on both CASs required for a collision and immediately after the collided thread returns (publishing descriptors or collision intent interleaves arbitrarily with such serialized collisions).


\begin{theorem}\label{th:elimination}
The above automaton is an abstraction of a quotient
for the Elimination Stack.
\end{theorem}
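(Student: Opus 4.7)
The plan is to apply Theorem~\ref{th:layer_automaton}, which reduces the claim to three obligations: (i) the starred union of the basic layers labeling the automaton's edges is an abstraction of a quotient via Theorem~\ref{def:completeness_methodology}, (ii) every initial configuration of the Elimination Stack is represented by an initial abstract state and every reachable configuration by some abstract state, and (iii) whenever a layer $\lambda$ can take a reachable configuration abstracted by $q$ to one abstracted by $q'$, the edge $q \xrightarrow{\lambda} q'$ is present. Obligations (ii) and (iii) are a straightforward case analysis on the collision-protocol states (descriptor published, collision intent published, active collider finished, \emph{etc.}) and on the write paths that cause transitions between them; the abstract states we chose are precisely the disjunction of the weakest preconditions of the layers' write paths, so coverage is immediate.

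The bulk of the work is therefore establishing WPC and LPC (the hypotheses of Theorem~\ref{def:completeness_methodology}) for every path in $\Pths(\expr)$. For the paths that come from the underlying Treiber sub-module, WPC and LPC were already discharged in Lemma~\ref{lemma:treiber} and transfer verbatim, since the Treiber CAS on \texttt{top} is independent of the \texttt{location} and \texttt{collision} arrays. For the new write paths introduced by the elimination layer---the CAS publishing one's id in \texttt{collision[pos]} at line~\ref{line:end_publish-ap}, the clearing CAS on \texttt{location[mytid]} at line~\ref{line:first_CAS-ap} or~\ref{line:clear_CAS-ap}, and the completing CAS on \texttt{location[him]} at line~\ref{line:second_CAS-ap}---WPC follows the same template as in Treiber: each successful CAS certifies that the value it reads has not changed since the preceding read of the same cell, so the read can be commuted rightward past any interleaved actions until it sits immediately before the CAS, yielding a strongly equivalent execution in which the write path runs in isolation. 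LPC is similarly routine: a ``late'' failing CAS still fails if moved leftward to the point immediately after the first successful CAS on the same cell following its read, because that CAS already invalidates the expected value; the reads of \texttt{location[him]} and the tests at lines~\ref{line:first_test-ap}--\ref{line:second_test-ap} are local and can be slid next to their failing CAS.

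The main obstacle, and what deserves a careful argument, is the \emph{serialization of collisions}. A single successful collision is not a single atomic action but a pair of CASs---the initiator's clear of \texttt{location[mytid]} followed by its write to \texttt{location[him]}---and between them another thread could in principle complete a collision involving a different pair. I plan to handle this by observing that once a thread $\tau$ has successfully cleared its own cell at line~\ref{line:first_CAS-ap}, the only actions of other threads that are not already commutative with $\tau$'s subsequent CAS on \texttt{location[him]} are (a) the corresponding passive collider $\texttt{him}$ reading or CAS-ing its own \texttt{location} cell, and (b) a third thread attempting to collide with $\texttt{him}$. Case (b) can only fail (its CAS on \texttt{location[him]} will not find the expected value once $\tau$ has written), and by LPC its failing CAS can be moved after $\tau$'s second CAS; case (a) either commutes or, if it mutates \texttt{location[him]}, would cause $\tau$'s second CAS to fail, contradicting that we are analyzing the execution in which $\tau$ succeeds. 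Consequently the two CASs of $\tau$'s collision can be shuffled adjacent to each other, and the collided thread's return path (lines~\ref{line:start_finish-ap}--\ref{line:end_finish-ap}) can then be pulled rightward to follow them, producing the serialized shape demanded by the bottommost layer of the automaton.

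Putting the three obligations together yields the theorem. I expect the bookkeeping around states---especially showing that the automaton distinguishes enough about \texttt{location}/\texttt{collision} contents to enable exactly the right outgoing write layers from each state---to be the most tedious part, but conceptually routine once the collision-serialization argument above is in place.
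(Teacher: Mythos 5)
Your proposal is correct and follows essentially the same route as the paper's proof: separate the Treiber interactions (disjoint addresses) from the collision protocol, discharge WPC/LPC for the individual CAS write paths by the usual commute-the-read-to-the-CAS argument, and then serialize collisions by reordering around the successful second CAS on \texttt{location[him]}, using the fact that competing CASs on that cell must fail and that the passive collider's cell cannot be modified by anyone else before it observes the collision and returns. The paper packages this last step as an explicit induction on the number of successful CASs at line~\ref{line:second_CAS-ap} and notes that the collision layers are only ``layer-like'' (they contain several successful clearing CASs from distinct threads, so Theorem~\ref{def:completeness_methodology} does not apply verbatim), but your direct argument for bringing the two CASs of a collision adjacent is exactly the inductive step of that same argument.
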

\begin{proof} (Sketch) 
We need to show that every execution of the Elimination stack is equivalent to some execution represented by this automaton up to reordering of commutative actions. 
The interactions in the Treiber's stack component do not interfere with collisions (they use disjoint addresses in the shared memory) and therefore every execution can be assumed (up to commutativity) to execute in phases as follows: some number of invocations executing a sequence of layers as in the Treiber's stack layer automaton (competing on the \code{top} pointer) followed by some number of invocations trying to collide with each other, followed again by Treiber's stack layers, and so on. In the following we show that the collisions can be reordered to occur serially as in the above automaton. 

We proceed by induction on the number of successful CASs at line~\ref{line:second_CAS-ap} (the second CAS required for a successful collision). Consider the first such successful CAS, denoted as $CAS_2/T$ and let $\mathcal{F}_2$ be the set of threads whose next step in the execution after this point is a failed CAS on the same address. As in previous proofs, all these failed CASs can be reordered (to the left) to occur immediately after the successful one. Then, by the control-flow of an invocation, all threads in $\mathcal{F}_2$ executed the successful CAS at line~\ref{line:first_CAS-ap} before their failed CAS. All these successful CASs turn the \code{location} cell of those threads to NULL. Since no other thread (besides themselves) can turn it back to some non-NULL value (see the test at line~\ref{line:second_test-ap}), they can be reordered to occur immediately before $CAS_2/T$. This leads to an interleaving around $CAS_2/T$ that conforms to the expression that labels the transition leading to ``Active collider finished''. Then, looking at other steps before $CAS_2/T$, for every successful CAS on a \code{collision} cell, one can construct a layer as the one labeling the transitions leading to ``Publish collision intent'' and also serialize the steps \ref{line:descriptor-ap}--\ref{line:start_publish-ap} for every thread. This is possible because all these interactions concern different memory addresses. Finally, $CAS_2/T$ wrote on the \code{location} cell of a thread \code{him}, and no other thread can modify this value until \code{him} reads it, observes to have been collided and returns ($CAS_2/T$ writes either NULL to \code{location[him]} in which case the first conjunct at line~\ref{line:second_test-ap} will fail in another thread, or a descriptor with an \code{id} field different from \code{him} in which case the second conjunct at line~\ref{line:second_test-ap} will fail). Therefore, all those steps of \code{him} can be reordered to the left to occur immediately after the interaction around $CAS_2/T$, which completes the handling of this first collision. The subsequent collisions can be handled in a similar manner.
\end{proof}
\section{Quotient for RDCSS}
\label{apx:rdcss}

The Restricted Double-Compare Single-Swap (RDCSS)~\cite{DBLP:conf/wdag/HarrisFP02} is a restricted version of a double-word CAS (acting atomically on two addresses) which modifies a so-called data location provided that this location and another so-called control location have some given expected values. This is an instance of an atomic read-modify operation, i.e., the tests and the write should happen atomically. It is assumed that data and control locations are disjoint (i.e., the same address can not be a data address in some invocation and a control address in another). The code of the main RDCSS operation is given below (for simplicity, we omit the read operation):

\begin{tabular}{l|l}
\begin{minipage}{2.75in}
\lstset{xleftmargin=3.0ex,numbersep=2pt}
\begin{lstlisting}[morekeywords={ret}]
void RDCSS(descriptor *d){ 
  do {
     r = CAS(d->DATA_ADDR, d->exp_data, d); `\label{line:dataCAS1}`
     if ( isDescriptor(r) ) Complete(r); `\label{line:helpingComplete}`
  } while ( isDescriptor(r) )
  if ( r == d->exp_data ) Complete(d) `\label{line:regularComplete}`
  return r;
}
\end{lstlisting}
\end{minipage}
&
\begin{minipage}{2.3in}
\begin{lstlisting}[firstnumber=9]
void Complete(descriptor *d) {
  if ( * d->CONTROL_ADDR == d->exp_control )
     CAS(d->DATA_ADDR, d, d->new_data); `\label{line:dataCAS2T}`
  else
     CAS(d->DATA_ADDR, d, d->exp_data) `\label{line:dataCAS2F}`
}
\end{lstlisting}
\end{minipage} \\
\end{tabular}

The inputs of the operation are put inside a \code{descriptor} structure: \code{DATA\_ADDR} and \code{CONTROL\_ADDR} are the data and control addresses, respectively, \code{exp\_data} and \code{exp\_control} are the expected values of these addresses, and \code{new\_data} is the new value to be written to the data address (provided that the data and control addresses store the expected values). 

\code{RDCSS} attempts a standard CAS on the data address to change the old value into a pointer to the descriptor (line~\ref{line:dataCAS1}). This CAS checks that the data address has the expected value, and if it fails, the operation simply returns. In the context of this implementation, we assume that a CAS returns the value of the location before any modification (if any) and not just a Boolean. If the CAS succeeds, then the operation calls \code{Complete} in order to check the control location and finalize the modification if possible (line~\ref{line:regularComplete}). \code{Complete} checks the value of the control location and if it has the expected value, then it attempts a CAS to change the data address (line~\ref{line:dataCAS2T}); note that the data address currently stores a pointer to a descriptor. Otherwise, it attempts a CAS to revert the data address to its old value (line~\ref{line:dataCAS2F}). 

When multiple threads compete to change the same data address, it may happen that the thread succeeding the first CAS at line~\ref{line:dataCAS1} (the initiator) is slow and before it executes the call to \code{Complete}, another thread fails its CAS but finds a descriptor at this address (it is assumed that descriptor pointers can be distinguished from data values). Then, this other thread will try to help the slower one and call \code{Complete} itself (line~\ref{line:helpingComplete}). Note that all the information needed to help the slower thread is stored in the descriptor.

We use the expression below to describe a sound abstraction of the quotient:

\hspace{-4mm}
{\small
\begin{tabular}{ll}
\textcolor{blue}{\texttt{// }\emph{successful modification}} \\
\hspace{-2mm}\Big(\textbf{3-CAS/true} $\cdot$ (3-CAS/false-4)$^n$ $\cdot$ & \{ \textbf{4-6-11-CAS/true} $\cdot$ (4-11-CAS/false)$^n$ \\[1mm]
& + \textbf{3-CAS/false-4-11-CAS/true}$\cdot$ (4-11-CAS/false)$^n$ $\cdot$ 4-6-11-CAS/false \}\\
+ \\
\textcolor{blue}{\texttt{// }\emph{fail: wrong control value}} \\
\textbf{3-CAS/true} $\cdot$ (3-CAS/false-4)$^n$ $\cdot$ & \{ \textbf{4-6-13-CAS/true} $\cdot$ (4-13-CAS/false)$^n$ \\[1mm]
& + \textbf{3-CAS/false-4-13-CAS/true}$\cdot$ (4-13-CAS/false)$^n$ $\cdot$ 4-6-13-CAS/false \}\\
+ \\
\textcolor{blue}{\texttt{// }\emph{fail: wrong data value}} \\
(3-CAS/false)$^*$ \Big)$^*$
\\
\end{tabular}}

Executions in the quotient are iterations (note the outer $^*$) of three types of ``phases'' (note the outer union and read expressions from top to bottom): (1) a phase in which the data address is modified successfully (without or with help), (2) a phase in which the modification fails because the \emph{control} address does not have the expected value (noticed by the initiator of the modification or a helper thread), and (3) a phase in which the modification fails because the \emph{data} address does not have the expected value.

The first two phases have a common prefix: some initiator thread succeeding the CAS at line~\ref{line:dataCAS1} and some number $n$ of threads failing the same CAS and reading the descriptor written by the initiator. Next, for the first phase, there are two cases: (1) the initiator succeeds the second CAS at line~\ref{line:dataCAS2T} (after calling \code{Complete} at line~\ref{line:regularComplete}), and those $n$ threads will fail the same CAS (after calling \code{Complete} at line~\ref{line:helpingComplete}), or (2) some helping thread which fails the same CAS as the other $n$ threads will succeed the CAS at line~\ref{line:dataCAS2T} (after calling \code{Complete} at line~\ref{line:helpingComplete}), and the initiator together with the other $n$ threads fail the same CAS. For the second phase, there are two analogous cases in which either the initiator or a helping thread observes a wrong value for the control location and succeeds the CAS at line~\ref{line:dataCAS2F} to revert the value of the data location. The third phase is trivial and consists of an arbitrary number of failed instances of the CAS at line~\ref{line:dataCAS1}.

\begin{theorem}\label{th:rdcss}
The above expression is an abstraction of a quotient
for RDCSS.
\end{theorem}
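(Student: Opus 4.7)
The plan is to invoke the methodology of Theorem~\ref{def:completeness_methodology}, adapted for the fact that the given expression is not a pure starred union of basic layers but a more structured regular expression organized as a starred choice of three ``phases.'' First I would verify path coverage: the three top-level disjuncts together exercise every primitive action on the RDCSS control-flow graph, including both helping paths (line~\ref{line:helpingComplete} and line~\ref{line:regularComplete}) and both arms of \code{Complete} (lines~\ref{line:dataCAS2T} and~\ref{line:dataCAS2F}).

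The core argument would proceed by induction on the number of successful CAS operations at line~\ref{line:dataCAS1} in the execution under consideration. A value at the data address can transition to a descriptor only via such a CAS, and can transition back to a value only via a successful CAS at line~\ref{line:dataCAS2T} or~\ref{line:dataCAS2F}, each of which uses the exact descriptor as its expected value. This lets me cleanly partition the execution on a per-data-address basis into ``phases,'' each bracketed by a successful line-\ref{line:dataCAS1} CAS and the unique subsequent successful second CAS. Between phases, only failed line-\ref{line:dataCAS1} CASes are possible on that address, matching the third disjunct $(\text{3-CAS/false})^*$.

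Within a phase, I would carry out the standard commutativity-based reordering moves. The failed line-\ref{line:dataCAS1} CASes by competing threads are local and can be collected immediately after the initiator's successful CAS to form the $(\text{3-CAS/false-4})^n$ prefix; the read at line~\ref{line:helpingComplete} that identifies the descriptor travels alongside its failed CAS. Then I would case-split on whether the initiator or a helper wins the second CAS, and on whether that winning thread observes the expected control value, yielding exactly the four inner alternatives in the expression. Losing threads' failed second CASes form the matching suffixes $(\text{4-11-CAS/false})^n$ or $(\text{4-13-CAS/false})^n$; when a helper wins, the initiator's own failed attempt is pinned at the end to match the trailing $\text{4-6-11-CAS/false}$ or $\text{4-6-13-CAS/false}$.

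The main obstacle will be justifying the reorderings around \code{Complete} in the presence of helpers. Each helper independently reads the control address between its entry into \code{Complete} and its CAS attempt, and a concurrent external writer may mutate that address in between, so different helpers may disagree on which branch to take. I would argue that only the control-address read of the \emph{winning} thread is observable in the resulting shared-state trace, so losing threads' control reads and subsequent failed CASes can be slid to canonical positions without breaking strong equivalence. Here I would also rely on the paper's standing assumption that data and control addresses are globally disjoint across invocations, which rules out self-interference via the control slot and ensures that external control-address writes form an independent stream constraining only the winner's branch choice. With this in hand, the WPC and LPC conditions follow case-by-case: write paths that traverse both CASes reorder into contiguous blocks (modulo the helper-versus-initiator split), and each local failure path reorders to interleave with at most the single winning write path of its phase.
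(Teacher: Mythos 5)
Your proposal is correct and follows essentially the same route as the paper's proof: restrict attention to a single data address (using the disjointness of data and control addresses), induct on the successful CASes that bracket each ``phase,'' reorder competing failed CASes to sit adjacent to the successful ones, and case-split on initiator-versus-helper and on the control value to recover the four inner alternatives, with the residual failed first CASes forming the third disjunct. The only cosmetic differences are that you induct on successful line-3 CASes while the paper inducts on the successful CASes inside \code{Complete} (the counts coincide, one per phase), and you explicitly discharge the helpers' control-address reads, a subtlety the paper's sketch leaves implicit.
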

\begin{proof}(Sketch)
Since steps of \code{RDCSS} invocations on different data addresses commute (the assumption that data and control addresses are disjoint is important here), we focus on invocations that act on the same data address. We follow the same strategy as for Elimination Stack, and proceed by  induction on the successful CASs in \code{Complete} (line~\ref{line:dataCAS2T} or line~\ref{line:dataCAS2F}). Consider the first such CAS and assume that it is at line~\ref{line:dataCAS2T}. This corresponds to the first phase above and the case of line~\ref{line:dataCAS2F} which corresponds to the second phase can be handled similarly. There are two cases to consider:
\begin{itemize}
	\item The thread $t$ performing this CAS called \code{Complete} at line~\ref{line:regularComplete}. If there are threads whose next step in the execution after this point is a failed CAS on the same address and expecting to find the same descriptor, then all of these steps can be reordered to the left to occur immediately after the successful one. By control-flow, these other threads arrived there by calling \code{Complete} at line~\ref{line:helpingComplete} which means that they fail their CAS at line~\ref{line:dataCAS1} and they read the same descriptor. All of these failed CASs can be reordered to occur immediately after $t$ succeeding its CAS at line~\ref{line:dataCAS1}. Overall, these reorderings lead to an execution fragment with the shape described in the first line of the expression above.
	\item The thread $t$ performing this CAS called \code{Complete} at line~\ref{line:helpingComplete}. Following a similar reasoning while taking into account that another thread $t'$ initiated this modification by succeeding a CAS at line~\ref{line:dataCAS1}, one can reorder steps to obtain a prefix with the shape given by the second line of the expression above.
\end{itemize}
While building serializations of phases of type (1) and (2) above, any failed CASs at line~\ref{line:dataCAS1} that return the same value can be reordered to occur one after another, thereby creating phases of type (3). And these phases of type (3) can occur ``outside'' of phases of type (1) and (2) since they have no effect on the shared memory.
\end{proof}
\section{Quotient for the List Set}
\label{apx:listset}

We here consider a List Set Object and  describe the layer expressions and proof that they are an abstraction of the List Set's quotient.
This example is a Set object implemented as a sorted linked list~\cite{DBLP:conf/podc/OHearnRVYY10},
which involves a read-only traversal \ttlocate, and then small atomic sections to link/unlink nodes (for \ttinsert/\ttdelete, respectively). \ttlocate\ traverses the list from the \code{head} and returns a pair of nodes (\code{x},\code{y}) such that  \code{y} has the key of interest or else \code{x} points to the last node whose key is below \code{k}. It is implemented as a loop that may perform an unbounded number of shared-memory reads.
We assume that it is abstracted with the postcondition at line 3 in \code{insert} stating that \code{y} is the successor of \code{x}, the input \code{k} is in between \code{x.key} and \code{y.key}, and that at some point between the invocation of the operation and “now”, \code{x} resides on a valid search path for \code{k} that starts at the head of the list, denoted as  $\Diamonddot\code{head}\stackrel{\code{k}}{\rightarrow}\code{x}$.
Recent work~\cite{FeldmanE0RS18,DBLP:journals/pacmpl/FeldmanKE0NRS20} shows that this postcondition can be derived easily by showing that roughly, list nodes are never updated once they become unreachable.
Therefore, the implementations of \ttinsert{} and \ttdelete{} are as follows:
\begin{center}
\begin{tabular}{l}
\begin{minipage}{2.7in} 
\lstset{xleftmargin=1.0ex,numbersep=2pt}
\begin{lstlisting}
int insert(int k) { while(1) {
    struct node_t *z = ...;
    assume $\code{x.next = y}\land \code{x.key} < \code{k}\leq \code{y.key} \land \Diamonddot \code{head}\stackrel{\code{k}}{\rightarrow}\code{x}$
    atomic {
      if (x->next == y && x->del == 0) {
        if (y->key != k) {
          z->next = y;
          x->next = z;
          return 1;
      } else { return 0; }
   }
} }
\end{lstlisting}
\end{minipage}\\
\hline
\begin{minipage}{2.3in}
\lstset{xleftmargin=3.0ex,numbersep=2pt}
\begin{lstlisting}
int delete(int k) { while(1) {
  assume $\code{x.next = y}\land \code{x.key} < \code{k}\leq \code{y.key} \land \Diamonddot \code{head}\stackrel{\code{k}}{\rightarrow}\code{x}$
  atomic {   
    if(x->next == y && x->del == 0) {
      if (y->key == k) {
        y->del = 1;
        x->next = y->next;
        return 1;
      } else { return 0; }
    }
  }
} }
\end{lstlisting}
\end{minipage}\\
\end{tabular}
\end{center}
The \ttinsert{} method will link a node \code{z} in between \code{x} and \code{y}, provided that \code{k} wasn't already in the list.
The \ttdelete\ method returns 0 if the element wasn't in the list and otherwise, marks node \tty\ for deletion, and then updates \ttx\ to skip past node \tty. 
The \ttdelete{} method marks deleted nodes with a \code{del} flag before they are unlinked. Because \ttdelete{} marks deleted nodes' \code{del} fields, a concurrent \ttlocate{} that has just found this node, but was then preempted by \ttdelete{}, will return a node that's marked as deleted and unlinked, not simply unlinked.

As we discuss below, for List Set the layer expressions based on interleavings of two threads generalizes to arbitrary threads. We thus define the states of the automaton in terms of the possible values from the perspective of one reader and one writer. In these states below, \ttx$_w$ denotes the writer's \ttx, \ttx$_r$ denotes the reader's \ttx\ and similar for the other variables. The \code{x} and \code{y} variables are existentially-quantified in the pre-conditions because they are method-local variables and not inputs. We omit the sub-formula $\Diamonddot \code{head}\stackrel{\code{k}}{\rightarrow}\code{x}$ because this condition does not affect the enabled status of a layer.
\[\footnotesize\arraycolsep=1.4pt\begin{array}{rll}
\qq^1 &=& \sem{\textrm{$\exists$\code{x$_r$},\code{y$_r$}.\ \code{x$_r$->next=y$_r$ $\wedge$ !x$_r$->del $\wedge$ k$_r$=y$_r$->key}}},\\
\qq^2 &=& \sem{\textrm{$\exists$\code{x$_r$},\code{y$_r$}.\ \code{x$_r$->next=y$_r$ $\wedge$ !x$_r$->del $\wedge$ x$_r$->key~<~k$_r$~<~y$_r$->key}}},\\
\qq^3 &=& \semL
   \textrm{$\exists$\code{x$_r$},\code{y$_r$},\code{x$_w$},\code{y$_w$}.\ \code{x$_w$->next=y$_w$ $\wedge$ !x$_w$->del $\wedge$ x$_w$->key~<~k$_w$~<~y$_w$->key 
    $\wedge$}}\\ 
&&  \textrm{\code{x$_r$=x\added{$_w$} $\wedge$ x$_r$->key~<~k$_r$~<~y$_r$->key}} \semR,\\
\qq^4 &=& \semL
   \textrm{$\exists$\code{x$_r$},\code{y$_r$},\code{x$_w$},\code{y$_w$}.\ \code{x$_w$->next=y$_w$ $\wedge$ !x$_w$->del $\wedge$ x$_w$->key~<~k$_w$~<~y$_w$->key $\wedge$ }}\\
   && \textrm{\code{x$_r$=x\added{$_w$} $\wedge$ k$_r$=y$_r$->key}} \semR,\\
\qq^5 &=& \semL
   \textrm{$\exists$\code{x$_r$},\code{y$_r$},\code{x$_w$},\code{y$_w$}.\ \code{x$_w$->next=y$_w$ $\wedge$ !x$_w$->del $\wedge$ k$_w$=y$_w$->key $\wedge$ x$_r$=x\added{$_w$} $\wedge$ }}\\
   && \textrm{\code{  x$_r$->key~<~k$_r$~<~y$_r$->key}} \semR,\\
\qq^6 &=& \sem{\textrm{$\exists$\code{x$_r$},\code{y$_r$},\code{x$_w$},\code{y$_w$}.\ \code{x$_w$->next=y$_w$ $\wedge$ !x$_w$->del $\wedge$ k$_w$=y$_w$->key $\wedge$ x$_r$=x\added{$_w$} $\wedge$ k$_r$=key}}},\\
\qq^7 &=& \sem{\textrm{$\exists$\code{x$_r$},\code{y$_r$}.\ \code{x$_r$->next=y$_r$ $\wedge$ x$_r$->del $\wedge$ x$_r$->key~<~k$_r~\leq~$y$_r$->key}}}\\
    \end{array}\]

With these 7 states, 2 write paths (one from \ttinsert, one from \ttdelete) and 6 read paths, there are many transitions to consider, although many of them are labeled with the same layer. In fact, the List Set can be decomposed into 8 layers, enumerated below.
For lack of space, we omit the automaton, but the definitions, including all 77 feasible transitions, can be seen in the output of our tool (which we discuss in the next section) shown in Apx.\apxA. Below we refer to example transitions in Apx\apxA, denoted $\delta_i$.

\newcommand\mysee[1]{(\eg~$\delta_{#1}$)}
\begin{enumerate}

\item A layer with a \ttdelete{} write path that updates \code{x.next} to point to \code{y.next}, causing one \ttinsert{} and one \ttdelete{} path to fail when finding  \code{x.next$\neq$y}.\mysee{2}

\item A layer with an \ttinsert{} write path that updates \code{x.next} to point to \code{z}, causing one \ttinsert{} and one \ttdelete{} paths to fail when finding \code{x.next$\neq$y}. \mysee{9}




\item A local layer consisting of one \ttdelete{} path, when the key is not in the set. \mysee{31}

\item A local layer consisting of one \ttinsert{} path, when the element is already in the set. \mysee{47}


\item Four local layers consisting of \ttinsert{} or \ttdelete{} paths when the node \code{x} is already marked for deletion. \mysee{63}

%
%
%
%
%
%
%
%
\end{enumerate}
Note that \code{insert} and \code{delete} have more than one control-flow path that ``fails'' because of the nested conditional inside the atomic read-write.


As in the Michael/Scott queue, here again the layer-quotient is 
optimal \emph{modulo} method-level commutativity. At the method-level, operations such as insertion/deletion of different elements commute and their corresponding linearizations can be commuted 
(different orders of write layers) in the layer quotient.

\begin{lemma}
The above layer automaton is an abstraction of a quotient for the List Set.
\end{lemma}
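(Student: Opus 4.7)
The plan is to invoke Theorem~\ref{th:layer_automaton}, which reduces the goal to three obligations: (i) the starred union of the 8 basic layers is an abstraction of a quotient (via Theorem~\ref{def:completeness_methodology}), (ii) every reachable configuration is captured by one of $q^1,\ldots,q^7$, and (iii) every feasible layer from an abstract pre-state lands in an abstract post-state that is connected to it by the corresponding edge of the automaton. Obligation (i) is the technical core and is discharged by establishing WPC and LPC; obligations (ii) and (iii) are then a matter of enumerating pre/post-conditions of the two atomic blocks against the seven state formulas.

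For the Path Coverage portion of Theorem~\ref{def:completeness_methodology}, I would simply observe that the two write paths in the layers are exactly the successful-CAS branches of the two \code{atomic} blocks in \ttinsert{} and \ttdelete{}, while the six local paths exhaust the remaining control-flow paths (early return when the key is present/absent, and the four paths that fail because \code{x->next $\ne$ y} or \code{x->del} has been set). For WPC, the crucial observation is that every write to the shared heap occurs inside an \code{atomic\{...\}} block that begins with the validity test \code{x->next==y \&\& !x->del}; since the whole block is indivisible, a write path is already serialized at the block, and the only remaining step to reorder is the \code{assume} abstracting \ttlocate. Because that postcondition uses the ``at some past point'' modality $\Diamonddot\code{head}\stackrel{\code{k}}{\rightarrow}\code{x}$, if it held just before some previous shared-memory step of another thread, it still holds now; the non-modal conjuncts \code{x->next=y}, \code{!x->del}, and the key ordering are freshly re-checked inside the atomic block, so sliding the \code{assume} rightward to abut the atomic section preserves strong equivalence. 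For LPC, each local path ends by checking the same atomic guard and either returning or restarting; so any ``late'' failing guard (evaluated after two or more successful writes since the \code{assume}) can be reordered to the left to occur immediately after the first writer whose effect made it fail, giving an interleaving with at most one write path, matching exactly the support of one of the eight basic layers.

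For obligations (ii) and (iii), I would first check that $q^1,\ldots,q^7$ collectively express every combination of reader/writer pre-conditions at the atomic blocks (key found, key strictly between, node already marked deleted, and the writer's overlay of each of these with the reader sharing \code{x}). This is essentially a case split on whether \code{x_r=x_w} and on which of the three key-relations hold for each thread, which is already built into the defining formulas. Then, for each pair of state $q$ and layer $\lyr$, I would run the symbolic effect of the layer's write path on $q$, compute the strongest post-condition on the fragment of state that the reader can still observe, and verify that it matches the target state of every edge labeled with $\lyr$ out of $q$ in the automaton — this is precisely the content of the 77 feasibility queries produced by \Tool{} (Sec.~\ref{sec:automation}), so the obligation reduces to auditing that output.

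The main obstacle I expect is justifying that two-thread reasoning is adequate: the layers are defined against one writer and one reader, yet the List Set admits executions with unboundedly many concurrent readers \emph{and} unboundedly many concurrent writers, each of whom may traverse a long list before reaching its atomic block. The reason this lifts cleanly is that all genuinely non-commuting actions are confined to the atomic blocks, which totally order themselves; once WPC has been used to serialize each atomic block in isolation, LPC then lets us deal with the remaining pre-atomic traversals one-at-a-time against the single writer whose effect invalidates their guard. Formalizing this ``one-at-a-time'' reduction --- i.e.\ showing that the inductive argument inside Theorem~\ref{def:completeness_methodology} really does collapse the $n$-writer / $m$-reader situation to the pairwise support --- is the only step that requires genuine care; it hinges on the observation that a successful writer monotonically changes either \code{x->next} or \code{x->del} to a fresh value, so once a reader's guard is invalidated it stays invalidated, ruling out ABA-style interactions.
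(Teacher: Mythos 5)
Your proposal is correct and follows essentially the same route as the paper's proof: apply the layer methodology of Theorem~\ref{def:completeness_methodology}, discharge WPC by observing that the atomic block re-validates \code{x->next==y} and \code{!x->del} while the $\Diamonddot$-modality in \ttlocate's postcondition is stable under interference (the paper phrases this via the ``marked before unlinked'' invariant), and discharge LPC by commuting a late failing guard left to sit just after the first ARW that modified \code{x}. Your additional remarks on the automaton obligations and on why the two-thread support lifts to unboundedly many threads are elaborations the paper leaves implicit, not a divergence in method.
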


Proof by the methodology of Def.~\ref{def:completeness_methodology}.
To prove the lemma we first note that the post-condition of \ttlocate{} ensures that \code{x} was reachable and that \code{y=x->next}.
In all read and write paths, the ARW checks that \code{y=x->next} still holds. Furthermore, an invariant of the implementation is that if
\code{x} was reachable at some point in the past
(\ie~when \ttlocate{} executed) and
\code{!x->del} holds in the atomic section,
then \code{x} is still reachable in the atomic section (this holds because elements are marked before being unlinked). Therefore, if \ttlocate{}'s postcondition was true in the past, it remains true when the ARW succeeds and the \texttt{assume} can be reordered to occur just before it. For local paths, as in previous cases, a failed ARW can be commuted to the left to occur just after the first ARW that modifies the location \texttt{x}.

\section{Quotient for the Herlihy-Wing Queue}
\label{apx:hwq}

Recall the queue due to Herlihy and Wing~\cite{DBLP:journals/toplas/HerlihyW90}, reproduced below:
\begin{center}
\lstset{xleftmargin=3.0ex,numbersep=2pt}
\begin{tabular}{l|l}
\begin{minipage}{2.6in}\footnotesize
\begin{lstlisting}
int deq() { while(1) {
  assume 0 <= range < back; int j = 0;
  while(j<range) {
    v := swap(items[j],null); 
    if (v != null) return v;
    j++; }
} }      
\end{lstlisting}
\end{minipage}
&
\begin{minipage}{1.9in}\footnotesize
\begin{lstlisting}
void enq(int v) { 
   i := back++;
   items[i] = v;
}
\end{lstlisting}
\end{minipage}
\end{tabular}
\end{center}
Enqueue (on the right) reserves the next slot in the array \code{items} by \emph{atomically} reading and incrementing the shared variable \ttback{}, and then assigns the value to that slot in a \emph{second} write to the shared state.
Meanwhile, dequeue (on the left), in an outer loop reads into \code{range} any value strictly smaller than \ttback{} and then iterates from 0 to \code{range}, looking for a slot containing an item to atomically dequeue. For every \code{j}, it atomically reads \code{items[j]} into \code{v} and writes \code{null} (written as a \code{swap} instruction), and if the read value is not \code{null}, it returns it.
This is actually a sound abstraction of the original version which assigns \ttback{}-1 to \code{range} instead of any smaller value. Soundness follows easily from the fact that reading a smaller value will only make the dequeue restart more often (perform more traversals in which there is no occupied slot), but not affect safety. In the reasoning below, we will use the fact that such a non-deterministic read commutes to the right of any increment of \ttback{} (it is a right mover).

We show that  the Herlihy-Wing queue quotient can
be abstracted by an expression given below.
To this end we first, as below in Apx.~\ref{apx:hwqproof}, prove that
any iteration of the dequeue's outer \texttt{while(1)} loop can be considered atomic, modulo commutative re-orderings.
Consequently, there exists a quotient of Herlihy and Wing Queue where outer loop iterations in dequeue are atomic sections. Since \code{items[i] = v} steps in enqueues commute (they write on different slots of the array), there exists a quotient where additionally, every sequence of \code{items[i] = v} steps before a dequeue iteration that is successful (contains a non \code{null} swap) is ordered w.r.t. the array slots that they write. The following expression is an abstraction of such a quotient:
$
\left( \deqF^*\cdot (\enqinc)^+ \cdot\enqwr^* \cdot \deqT^*\right)^* 
$
where $\enqinc$ and $\enqwr$ denote the statements \code{i:=back++} and \code{items[i] = v} in enqueue, respectively, and $\deqT$ and $\deqF$ represent entire iterations of the outer loop in dequeue that end in a \code{return} and restarting the loop, respectively. The interpretation of $\enqwr^*$ is refined to be a set of sequences of labels \code{items[i] = v} (with thread ids) that are ordered w.r.t. the position $i$ that they write to. Above, we also use straightforward feasibility arguments like ``enqueues increment \code{back} before writing to \code{items},'' and ``a $\deqT$ must be preceded by a write to \code{items} in an enqueue.''

\begin{theorem}\label{th:hwq}
The above expression is an abstraction of the HWQ quotient.
\end{theorem}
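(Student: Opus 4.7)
The plan is to prove Theorem~\ref{th:hwq} by two successive commutativity-based reductions on an arbitrary execution of HWQ, after which every trace lies in the interpretation of the stated regular expression. First I would atomicize every iteration of the dequeue outer loop, as explicitly announced in the excerpt. Second I would reorder the enqueue-body writes into a canonical slot order, as permitted by the refined interpretation of $\enqwr^*$. What remains is a trace over $\{\deqF,\deqT,\enqinc,\enqwr\}$ which I partition into rounds of the prescribed shape.

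For the first (main) step I would process dequeue iterations one at a time, showing by a case analysis on intervening labels that they can be commuted outside the iteration's span. The base observation is that a swap on \code{items[j]} that returned \code{null} reads \code{null} and writes \code{null}, is therefore a no-op on the shared state, and commutes with every action on any different slot. A failed iteration $\deqF$ consists only of such null-returning swaps plus the nondeterministic read of \code{range} (a right-mover over $\enqinc$ as noted in the excerpt) and thread-local updates to $j$, so by iterated adjacent swaps its labels can be contracted into a contiguous block. A successful iteration $\deqT$ ends with a single effectful swap returning a non-null value $v_j$ from slot $j$; the unique $\enqwr$ that wrote $v_j$ can be pulled leftward just before the iteration, and all intervening actions on other slots commute out. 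The key fact used is that, in HWQ, each slot is written non-null at most once (by its reserving $\enqwr$) and the first subsequent swap on that slot is the only effectful one, so no null-returning swap on the same slot ever needs to be commuted across the effectful swap.

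For the second step, after Step~1 the inter-dequeue actions are only $\enqinc$ and $\enqwr$ labels. Two $\enqwr$'s commute (distinct slots), and every $\enqwr$ commutes with every $\enqinc$ of another invocation, so I push every $\enqinc$ leftward of every $\enqwr$ within each round and order the $\enqwr$'s by slot index; the $\enqinc$'s themselves share \code{back} and keep their relative order. Delimiting rounds by maximal trailers of $\deqT$'s (or end-of-trace), the sequence decomposes into blocks $\deqF^*\cdot\enqinc^+\cdot\enqwr^*\cdot\deqT^*$: $\deqF^*$ collects empty scans pushed leftmost as no-ops, $\enqinc^+$ the new reservations, $\enqwr^*$ the writes that populate them, and $\deqT^*$ the iterations that consume newly filled slots. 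Empty rounds are absorbed into neighbors, accounting for the $^+$ on $\enqinc$, and the outer Kleene star covers round alternation. The main obstacle is Step~1: the atomicization depends essentially on HWQ's specific write discipline for each slot rather than on generic alphabet-level commutativity, and the case analysis must carefully handle same-slot interactions between different dequeue iterations and the reserving $\enqwr$.
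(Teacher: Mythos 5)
Your proposal is correct and follows essentially the same route as the paper: first establish that every dequeue outer-loop iteration can be atomicized up to commutativity (using that null-returning swaps do not observe enqueue writes and that the nondeterministic read of \code{back} is a right mover over increments), then canonicalize the \code{items[i]=v} writes by slot order, and finally cut the resulting trace into rounds matching $\deqF^*\cdot(\enqinc)^+\cdot\enqwr^*\cdot\deqT^*$. The only difference is organizational --- the paper proves the atomicization lemma by induction on the number of non-\code{null} swaps, peeling off the first successful swap and its matching write, whereas you process dequeue iterations one at a time --- but the same-slot case analysis you flag as the main obstacle is exactly the content of that induction, so the two arguments coincide in substance.
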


\begin{theorem}\label{th:linearizability-hwq}
The HWQ is linearizable.
\end{theorem}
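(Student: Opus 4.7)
The plan is to reduce the claim, via Theorem~\ref{th:com_lin}, to proving linearizability of traces in a quotient. By Theorem~\ref{th:hwq}, the expression $E = \left(\deqF^*\cdot(\enqinc)^+\cdot\enqwr^*\cdot\deqT^*\right)^*$ is an abstraction of a quotient $\equo{O_{HWQ}}$ of the Herlihy-Wing queue, so it suffices to exhibit a linearization-point mapping on $\sem{E}$ such that (i) every induced linearization lies in the sequential queue specification and (ii) the mapping is robust against the set $\mathsf{Swaps}$ of label pairs swapped to justify the quotient. I will fix the LP of an enqueue at its $\enqinc$ label (the atomic read-and-increment of \ttback) and the LP of a successful dequeue at the non-null swap step inside its $\deqT$ iteration, which the quotient treats as atomic.

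To show that each induced linearization belongs to the queue specification, I would induct on the number $k$ of outer iterations of $E$ consumed by a prefix of the trace. The invariant asserts that after $k$ blocks, the abstract queue state is exactly the list of enqueues whose $\enqinc$ has fired but whose value has not yet been swapped out, ordered by the order of their $\enqinc$ events, and that each such pending element resides at a known array slot determined by the $\enqinc$/$\enqwr$ history. The inductive step follows because within one block, $(\enqinc)^+$ appends the new enqueues on the right of the queue in the chosen order; the refined interpretation of $\enqwr^*$ places the values into slots in strictly increasing slot index, which coincides with the $\enqinc$ order since $\enqinc$ is the sole atomic slot allocator; and $\deqT^*$ then removes values starting from the lowest occupied slot, which by the invariant is precisely the oldest pending enqueue, yielding FIFO behavior. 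A failing prefix $\deqF^*$ has no effect on either ADT state or the linearization.

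For robustness, the pairs in $\mathsf{Swaps}$ (extracted from the proof sketch of Theorem~\ref{th:hwq}) are of two forms: non-LP actions of a single dequeue outer iteration being moved across other threads' actions so the iteration becomes atomic, and $\enqwr$ actions of distinct enqueues reordered among themselves into slot order. In neither case does a swap reorder two labels that are both LPs: each dequeue LP is the swap step inside its own atomic $\deqT$ iteration and is never moved past another LP; $\enqwr$ is never itself an LP; and $\enqinc$ labels are fixed atomic points that the canonical ordering in $E$ already respects. Hence $\mathsf{Swaps}$ never transposes two linearization-point labels of non-$S$-commutative operations, so the mapping is robust and Theorem~\ref{th:com_lin} applies. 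The main obstacle I anticipate is the cross-block bookkeeping in the inductive invariant, in particular confirming that a $\deqT$ occurring in a later block truly consumes the oldest globally pending enqueue, since earlier blocks may leave both filled and unfilled slots around; the two key ingredients resolving this are that $\enqinc$ is the unique allocator fixing FIFO order and that the interpretation of $\enqwr^*$ canonicalizes writes into slot order, so ``lowest occupied slot'' coincides with ``oldest pending enqueue'' at the start of each $\deqT^*$ segment.
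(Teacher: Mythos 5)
There is a genuine error: your choice of linearization point for enqueue at the $\enqinc$ step is wrong, and this is precisely the well-known subtlety of the Herlihy--Wing queue. Consider the quotient trace (a member of $\sem{E}$ under the stated interpretation): block~1 is $\enqinc_1\cdot\enqinc_2\cdot\enqwr_2\cdot\deqT$, block~2 is $\enqinc_3\cdot\enqwr_1\cdot\enqwr_3\cdot\deqT\cdot\deqT$. Here thread~1 reserves slot~0 but defers its write to the next block, thread~2 reserves slot~1 and writes immediately, and the first dequeue scans past the still-\code{null} slot~0 and returns $v_2$; the later dequeues return $v_1$ and then $v_3$. Nothing in the expression $E$ forbids this: $\enqwr^*$ may contain only a subset of the pending writers, and the canonical slot-ordering of $\enqwr^*$ applies only to the writes that occur \emph{within} that block. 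With enqueue LPs at $\enqinc$, the induced linearization is $enq(v_1)\cdot enq(v_2)\cdot deq/v_2\cdots$, which is not FIFO and hence not in the queue specification. Your closing claim that ``lowest occupied slot coincides with oldest pending enqueue at the start of each $\deqT^*$ segment'' is exactly what fails here: a slot can be reserved (so its enqueue is ``pending'' and older by $\enqinc$-order) yet still empty when a dequeue scans, and the dequeue skips it.

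The paper's proof instead places the enqueue linearization point at the $\enqwr$ step (the write \code{items[i]=v}), and the dequeue LP at the successful $\deqT$ iteration. This works because the quotient's refined interpretation orders the $\enqwr$ labels by slot index, and dequeues consume the lowest occupied slot, so the dequeue order agrees with the $\enqwr$ order; the order of $\enqinc$ events is irrelevant to the ADT state. Your inductive-invariant structure and your appeal to Theorems~\ref{th:com_lin} and~\ref{th:hwq} are the right scaffolding, and your robustness discussion of $\deqF$/$\deqT$ atomicity is fine, but the argument only goes through after the enqueue LP is moved to $\enqwr$ and the invariant is restated as ``the abstract queue is the sequence of written-but-not-yet-swapped slots in increasing slot order.''
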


The set of traces represented by this expression admits a ``simple'' linearization point mapping which identifies $\enqwr$ and $\deqT$ steps with linearization points of enqueues and dequeues, respectively. The restriction to traces in this quotient is instrumental for such a simple linearization point mapping. For arbitrary traces, the Herlihy and Wing Queue is known for having linearization points that depend on the future and that \emph{can not} be associated to fixed statements, see e.g.~\cite{DBLP:conf/cav/SchellhornWD12} !


%
%

\subsection{Proof of atomicity of outer loop}
\label{apx:hwqproof}

We prove that any iteration of the outer \texttt{while(1)} loop can be considered to be an atomic section, modulo re-orderings of commutative actions. That is, there exists a quotient of this object formed of traces where all steps of such an iteration occur consecutively one after another. 

We proceed by induction on the number of steps executing the swap at line 4 in dequeue and that find a non-\code{null} value in \code{items[j]}. In the base case, i.e., the number of such steps is 0, all the swaps at line 4 in all dequeue invocations find \code{null} values. Therefore, any possible write to an \code{items} slot (in enqueues) can be re-ordered after all swaps. Now all steps in the same outer loop iteration of a dequeue (the non-deterministic read of \code{back} and swaps returning \code{null}) can be re-ordered to occur consecutively one after another. In particular this relies on the fact that the non-deterministic read of \code{back} can return the same value even if executed after more increments of \code{back}.
For a trace with $n+1$ swaps returning non-\code{null} values, we focus on the first such step. Assume that it is a swap on some position \code{k}. All the writes to \code{items} slots strictly before \code{k} can be re-ordered to the right of this first non-\code{null} swap. This relies on the fact that all the other previous swaps return a \code{null} value and anyway, do not ``observe'' these writes. Similarly to the base case, all steps in the same outer loop iteration of a dequeue that completes before this first non-\code{null} swap (including) can be re-ordered to occur consecutively one after another.
We are again using the fact that swaps read \code{null} values and there is no more write on the slots that they read. Now, removing the write to $\code{items[k]}$ in enqueue and the dequeue iteration that removes this value from the current trace, we get another feasible trace that has $n$ non-\code{null} swaps, for which one can apply the induction hypothesis.


\section{Evaluation: Algorithm Authors' Correctness Arguments}
\label{apx:revisit}

As discussed in Sec.~\ref{sec:intro}, our goal is to provide a formal foundation for the scenario-based correctness arguments found in the literature. In this section, we evaluate our work by revisiting various such arguments in the literature, and comparing them with the quotient-based proofs presented in this paper. At the high level, our comparison shows that quotients make scenario-based reasoning more explicit and ensure that all cases are considered.

\subsection{Treiber's Stack}

Treiber's stack is fairly straight-forward. As such, it provided a good starting point for defining quotients yet the prose correctness arguments are fairly minimal. For example, the following is a comment on linearizability:

\QuoteAuthor{The linearization point of both the push() and the pop() methods is the successful compareAndSet(), or the throwing of the exception in case of a pop() on an empty stack.}{\TAOMPP}

\noindent
This prose identifies specific linearization points as (1) the ``successful compareAndSet'' and (2) the not-found exception. These LPs correspond to the layers in the quotient shown in Apx.~\ref{apx:treiber}. Layers 2, 3, 4 are ``successful compareAndSet'' linearization points, and read-only Layer 1 is the linearization point for the not-found exception.

\paragraph{Summary.} 
The following table summarizes the various elements of the correctness argument/proof, and identifies examples of where they occur in the \citet{TAOMPP} proof, and where they occur in the quotient proof.

\noindent
\begin{tabular}{|p{1.2in}|p{1.9in}|p{1.9in}|}
\hline
{\bf Proof Element} & \citet{TAOMPP} {\bf Proof} & {\bf Quotient Proof} \\
\hline
\hline
ADT states & ``empty stack'' & ADT states, e.g. (\texttt{top=null})\\
\hline
Concurrent threads & (general description) & Superscripting $(...)^n$ \\
\hline
Thread-local step seq. & ``try to swing [top] ... if [] succeeds, push() returns, and if not, the [] attempt is repeated'' & Layer paths, e.g., \textsf{push:2-6}\\
\hline
Linearization pts. & ``The linearization point of both the push() and the pop() methods is the successful compareAndSet(), ...'' & The successful CAS in Layers 2, 3 and 4. \\
\hline
(continued) & ``...or the throwing of the exception in case of a pop() on an empty stack.'' &  Read-Only Layer 1\\
\hline
\end{tabular}

\noindent
The layer quotient and, especially, the layer automaton (shown in Apx.~\ref{apx:treiber}) helps make the \citet{TAOMPP} proof more explicit. 
The layer automaton makes the ADT states explicit. 
From each ADT state, one can consider which (i.e. all possible) layers are enabled, and which target states are reached via those layers. 
Linearization points are explicit in the layer quotient, occurring once with each layer transition. 
The layer quotient automaton also has the benefit of explicitly showing all of the linearizable executions: i.e. all the possible runs of the automaton. This is left as implicit in the \citet{TAOMPP} proof.


\subsection{Elimination Stack}
\label{apx:elimination-stack}

Section 5 of \citet{DBLP:conf/spaa/HendlerSY04} gives a correctness proof for the elimination stack. We now review the proof and compare it with the quotient given in Apx.~\ref{apx:elimination}.
For reference, the following is a replication of the quotient automaton:

\begin{center}
\includegraphics[width=\columnwidth]{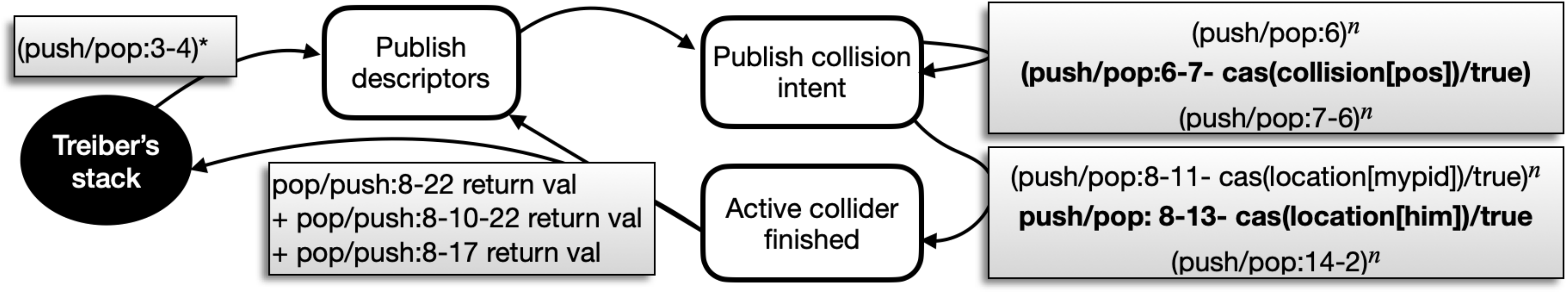}
\end{center}

\HSYprose{We note that a set is a relaxation of a stack that does not require LIFO ordering. We begin by proving that our algorithm implements a concurrent set, without considering a linearization order. We then prove that our stack implementation is linearizable to the sequential stack specification of Definition 5.1. Finally we prove that our implementation is lock-free.}

\HSYprose{We now prove that our algorithm has correct set semantics, i.e. that pop operations can only pop items that were previously pushed, and that items pushed by push operations are not duplicated. This is formalized in the following definition [omitted Set semantics for methods Push/Pop]}

\noindent
\citet{DBLP:conf/spaa/HendlerSY04} decompose the proof into first Set semantics and then ordering considerations. In the quotient this is unnecessary because the layers capture the ordering and the elements in them. 
In the bottom right layer in the {\bf bold} action, a single push or pop succeeds, colliding with another operation of the oppose type, and passing the element from the push to the pop. 
(Note that the quotient automaton could also have been written in a more verbose way where the bottom right layer is replaced with two layers: (1) a layer where a push's successful CAS takes with it a corresponding pop, and (2) a layer where a pop's successful CAS takes with it a corresponding push. For succinctness, we have combined those layers using the ``push/pop'' notation.) 
As discussed below, the thread that succeeds its CAS in the bottom right later is referred to as the ``active'' thread, and the thread with which the active thread collides is referred to as ``passive.''
These concepts are explicit in the quotient: the thread taking the bold action in the bottom right is the ``active'' thread, and the thread that finds itself collided with in the layers on the arcs that exit the ``Active Collider Finished'' state, are ``passive.''

We now continue to examine the authors' proof:

\newcommand\repeatedQuo{As discussed above, the bottom right layer in the {\bf bold} action, a single push or pop succeeds, colliding with another operation of the oppose type, and passing the element from the push to the pop.}

\HSYprose{In the following, we prove that operations that exchange their values through collisions are also correct set operations, thus we show that our algorithm has correct set semantics.
...
We say that a colliding operation op is
\underline{active} if it executes a successful CAS in lines C2 or C7. We
say that a colliding operation is \underline{passive} if op fails in the CAS
of line S10 or S19. [underlines added]}

\noindent
The authors lay out a few definitions, which are also captured by the layer quotient. Above the authors' intuitive concept of ``active'' is captured by the paths in a layer that succeed their CAS. Likewise for ``passive'' and CAS failure. As mentioned above, the active thread is captured as the bold thread that succeeds its CAS in the bottom right layer; the passive thread is the thread that finds itself collided with in the layers on arcs exiting the bottom right layer.


\HSYprose{We say that op is trying to collide at state s, if, in s,
the value of t’s program counter is pointing at a statement
of one of the following procedures: LesOP, TryCollision,
FinishCollision. Otherwise, we say that op is not trying
to collide at s.}

\noindent
Here the authors' intuitive concept of ``trying to collide'' is captured by the ``Publish collision intent'' quotient automaton state, as compared to the other states. 

\HSYprose{We next prove that operations can only collide with operations of the opposite type. First we need the following technical lemma. Lemma 5.2. Every colliding operation op is either active or passive, but not both.}

\noindent
\repeatedQuo{} Furthermore, the bottom right layer shows that the colliding operations cannot be both active and passive.

\HSYprose{Lemma 5.3. Operations can only collide with operations of the opposite type: an operation that performs a push can only collide with operations that perform a pop, and vice versa.}

\noindent
\repeatedQuo{} 

\HSYprose{Lemma 5.4. An operation terminates without modifying the central stack object, if and only if it collides with another operation.}

\noindent
This is captured by the bottom left layer, which (1) involves \code{return val} statements, avoiding the central stack and (2) is only reachable after a successful collision.

\HSYprose{Lemma 5.5. For every thread p and in any state s, if p is not trying to collide in s, then it holds in s that the element corresponding to p in the location array is NULL.}

\noindent
Captured by the initial conditions and the (only possible) paths through the automaton.

\HSYprose{Lemma 5.6. Let op be a push operation by some thread p; if location[p] $\neq$ NULL, then op is trying to push the value \texttt{location[p]->cell.pdata}.}

\noindent
Captured by the initial conditions and the (only possible) paths through the automaton.

\HSYprose{we show that push and pop operations are paired correctly during collisions. Lemma 5.7. Every passive collider collides with exactly one active collider.}

\noindent
\repeatedQuo{} 

\HSYprose{Lemma 5.8. Every active collider op1 collides with exactly one passive collider.}

\noindent
\repeatedQuo{} 

\HSYprose{Lemma 5.9. Every colliding operation op participates in exactly one collision with an operation of the opposite type.}

\noindent
\repeatedQuo{} 

\HSYprose{We now prove that, when colliding, opposite operations exchange values in a proper way. Lemma 5.10. If a pop operation collides, it obtains the value of the single push operation it collided with. [Lemma 5.11 analogous for push-pop.]}

\noindent
\repeatedQuo{} 

\HSYprose{We can now finally prove that our algorithm has correct set semantics. Theorem 5.12. The elimination-backoff stack has correct set semantics.}

\noindent
As discussed above, separately proving Set semantics is unnecessary.

{\bf Linearizability.}

\HSYprose{we choose the following linearization points for all operations, except for
passive-colliders: Lines T4, C2 (for a push operation), Lines T10, T14, C7 (for a pop operation)}

\noindent
The authors give linearization points for ``active'' threads as the time when the second CAS succeeds, and linearization points for ``passive'' threads ``the time of linearization of the matching active-collider operation, and the push colliding-operation is linearized before the pop colliding-operation.'' 
The linearization points in the quotient are:
(1) the bold successful CAS in the bottom right layer in the quotient automaton, and 
(2) the subsequent automaton transition where a corresponding passive thread finds it has been collided with.
Importantly, every run of the quotient automaton gives a serial linearization order that is a
repetition of pairs of active/passive threads. All other executions are equivalent to one such serialized run, up to commutativity.

\HSYprose{For a passive-collider operation, we set the linearization point to be at the time of linearization of the matching active-collider operation, and the push colliding-operation is linearized before the pop colliding-operation.}

\noindent
Same as above.

\HSYprose{Each push or pop operation consists of a while loop that
repeatedly attempts to complete the operation. An iteration
is successful if its attempt succeeds, in which case the operation returns at that iteration; otherwise, another iteration
is performed . Each completed operation has exactly one
successful attempt (its last attempt), and the linearization
of the operation occurs in that attempt.
In other words, the
operations are linearized in the aforementioned lineanirazation points only in case of a successful CAS, which can only
be performed in the last iteration of the while loop.
}

\noindent
Same as above.

\HSYprose{To prove that the aforementioned lines are correct linearization points of our algorithm, we need to prove that
these are correct linearization points for the two types of operations: operations that complete by modifying the central
stack object, and operations that exchange values through collisions.}

\noindent
Same as above.

\HSYprose{Lemma 5.13. For operations that do not collide, we can choose the following linearization points: Line T4 (for a push operation). Line T10 (in case of empty stack) or line T14 (for a pop operation)}

\noindent
Follows from the quotient automaton for the Treiber central stack.

\HSYprose{We still have to prove that the linearization points for
collider-operations are consistent, both with one another,
and with non-colliding operations. We need the following
technical lemma, whose proof is omitted for lack of space.
Lemma 5.14. Let op1, op2, be a colliding operations-pair,
and assume w.l.o.g. that op1 is the active-collider and op2
is the passive collider, then the linearization point of op1 (as
defined above) is within the time interval of op2.}

\noindent
Same as above.

\HSYprose{Lemma 5.15. The following are legal linearization points for collider-operations. • An active-collider, op1, is linearized at either line C2 (in case of a push operation) or at line C7 (in case of a pop operation). • A passive-collider, op2, is linearized at the linearization time of the active-collider it collided with. If op2 is a push operation, it is linearized immediately before op1, otherwise it is linearized immediately after op1.}

\noindent
Same as above.

\paragraph{Summary.} 
The quotient naturally and succinctly captures the key concept of the Elimination stack: that a single successful CAS of one type of operation is the linearization point for that operation as well as the corresponding matched operation (order with the push before the pop). Specifically, every run of the quotient automaton gives a serial linearization order that is a repetition of pairs of active/passive threads. All other executions are equivalent to one such serialized run, upto commutativity.

Many of the lemmas and reasoning in the~\citet{DBLP:conf/spaa/HendlerSY04} proof are used to set up a bijection between active and passive threads. The quotient instead simplifies the proof through the serialized representative executions. The quotient similarly simplifies the other logistics of threads preparing/completing in the other quotient automaton states.


\subsection{Michael-Scott Queue}

The layer quotient for the MSQ is given in Sec.~\ref{ssec:MSQ}.  We will refer to the layers defined there.

\newcommand\MSQprose[1]{\QuoteAuthor{#1}{\citet{TAOMPP}}}
\MSQprose{We now review all the steps in detail. An enqueuer creates a new node with
the new value to be enqueued (Line 10), reads tail, and finds the node that
appears to be last (Lines 12–13). To verify that node is indeed last, it checks whether
that node has a successor (Line 15). If so, the thread attempts to append the new
node by calling compareAndSet() (Line 16). (A compareAndSet() is required
because other threads may be trying the same thing.) }

\noindent
The above scenario involves a single successful enqueuer and unboundedly many other enqueuers attempting. This scenario is captured by the \emph{Enqueue Succeed Layer}, and the automaton transitions shown in Fig.~\ref{fig:aut-msq}.

\MSQprose{If the compareAndSet()
succeeds, the thread uses a second compareAndSet() to advance tail to the
new node (Line 17). Even if this second compareAndSet() call fails, the thread
can still return successfully because, as we will see, the call fails only if some
other thread “helped” it by advancing tail.}

\noindent
The above scenario corresponds to the \emph{Advancer Succeed Layer}, where some advancer succeeds.

\MSQprose{If the tail node has a successor (Line
20), then the method tries to “help” other threads by advancing tail to refer
directly to the successor (Line 21) before trying again to insert its own node.}

\noindent
The above scenario corresponds to the \emph{Advancer Succeed Layer}, and the fact that ``trying again to insert'' occurs in a subsequent layer.

\MSQprose{This enq() is total, meaning that it never waits for a dequeuer. A successful enq()
is linearized at the instant where the executing thread (or a concurrent helping
thread) calls compareAndSet() to redirect the tail field to the new node at
Line 21.}

\noindent
This linearization point occurrence is preserved in the layer quotient abstraction, at the point where the tail is advanced.

\MSQprose{The deq() method is similar to its total counterpart from the UnboundedQueue.
If the queue is nonempty, the dequeuer calls compareAndSet() to change head
from the sentinel node to its successor, making the successor the new sentinel
node. The deq() method makes sure that the queue is not empty in the same way
as before: by checking that the next field of the head node is not null.}

\noindent
This scenario is captured by the \emph{Dequeue Succeed Layer} (when the queue is non-empty) and by \emph{Read Only Layer 1} (where dequeue returns because the queue was empty).

Regarding ADT states, the correctness argument mentions two aspects: (1) whether the queue was empty and (2) whether the tail pointer was lagged. This is captured in the automaton representation in Fig.~\ref{fig:aut-msq}, where the states capture both 1 and 2.

\begin{center}
\includegraphics[width=2.0in]{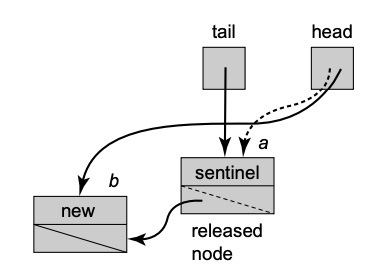}
\end{center}
\MSQprose{There is, however, a subtle issue in the lock-free case, depicted [above]:
before advancing head one must make sure that tail is not left referring to the
sentinel node which is about to be removed from the queue. To avoid this problem
we add a test: if head equals tail (Line 31) and the (sentinel) node they refer to
has a non-null next field (Line 32), then the tail is deemed to be lagging behind.
As in the enq() method, deq() then attempts to help make tail consistent by
swinging it to the sentinel node’s successor (Line 35), and only then updates head
to remove the sentinel (Line 38). As in the partial queue, the value is read from
the successor of the sentinel node (Line 37). If this method returns a value, then
its linearization point occurs when it completes a successful compareAndSet()
call at Line 38, and otherwise it is linearized at Line 33.} 

\noindent
There are multiple layers discussed above. First, there is an \emph{Advancer Succeed Layer} as part of a dequeue. Second, a \emph{Dequeue Succeed Layer} (or \emph{Read Only Layer 1})  may occur, but only after (``only then'') the  \emph{Advancer Succeed Layer}. 
This scenario is focused on the refers to $q_3$ in the quotient automaton, where \texttt{Q.tail=Q.head} and yet \texttt{Q.tail->next$\neq$null}. The automaton helps illuminate this case because the states and arcs require one to consider all possible cases, which layers are enabled, and where the arcs land after the layer.


\paragraph{Summary.} See Sec.~\ref{ssec:MSQ}.

\subsection{SLS Queue}
\label{apx:resqueue-eval}

We review the correctness argument presented by the original
authors~\cite{Scherer2006}, quoting their prose and discussing how those statements correspond to our quotient proof methodology with layer expressions.


\SLS{The reservation linearization point for this code path occurs at line 
\ref{ln:insertOffer} when we successfully insert our offering into the queue}

\noindent
First, this prose indicates that $\circled{3t}$ is a linearization point. The write of $\circled{3t}$ is atomic and so this line number has a corresponding location in the layer expression, which is this same linearization point.
Second, this prose identifies a layer as an important state change: inserting an offer node into the queue. This is the EAIN layer in our decomposition.
Third, the prose describes what kind of data change is important: the tail changing to non-null, a distinction we make in the states of our layer automaton.

\SLS{``A successful followup linearization point occurs when we notice at line \ref{ln:taken} that our data has been taken.}

\noindent
Similarly here this linearization point appears in a layer where a dequeue mutates the state, and local path $\ropath{3''}$ is feasible. This prose also identifies important state change: from an item to null.



\SLS{The other case occurs when the queue consists of reservations
(requests for data), and is depicted [below]. }
\begin{center}
\includegraphics[height=1.7in]{resqueue.png}
\end{center}

\SLS{In this case, after originally reading the head node (step A), we read its successor
(line~\ref{ln:readHeadNext}/step B) and verify consistency (line~\ref{ln:verifyHeadNext}). Then, we attempt to supply our data to the \underline{head-most reservation} (line~\ref{ln:appendHead}/C). If this succeeds, we dequeue the former dummy node (~\ref{ln:removeDummy}/D) and return}

\noindent
This prose again indicates important state changes, which are reflected as 
distinct states (and transitions between them) in our layer automata: 
whether head-most reservation has data supplied and
whether the head dummy node needs to be advanced.

%

\SLS{If it fails, we need to go to the next reservation, so we dequeue
the old dummy node anyway (28) and retry the entire operation (32,
05).}

\noindent
This is a description of the failure path $\circled{5f} \cdot (\circled{6bt} + \ropath{6bf}$) 
and that interference (implicitly) caused by  a concurrent cas from $\circled{5t}$.


\SLS{The reservation linearization point for this code path occurs
when we successfully supply data to a waiting consumer at line~\ref{ln:appendHead};
the followup linearization point occurs immediately thereafter.}

\noindent
Again, this prose indicates the important state transition
at $\circled{5t}$, replacing a null with an item (as seen in the states of our layer automaton), and corresponding automaton transition for layer EFHR.

%
%
%

\paragraph{Summary.} A summary is given in Sec.~\ref{subsec:resqueue}.

\subsection{Herlihy-Wing Queue}
\label{apx:revisit-hwq}

We now examine the author's proof of this object. As discussed in Sec.~\ref{subsec:hwq}, the quotient can be abstracted as: $\left( \deqF^*\cdot (\enqinc)^+ \cdot\enqwr^* \cdot \deqT^*\right)^*$.
A proof of correctness is given in Appendix II of~\citet{DBLP:journals/toplas/HerlihyW90}. A key challenge of this object is that linearization points are non-fixed. 

\QuoteAuthor{An Enq execution occurs in two distinct steps, which may be interleaved with steps of other concurrent operations: an array slot is reserved by atomically incrementing back, and the new item is stored in items.}{Sec 4.1 of ~\citet{DBLP:journals/toplas/HerlihyW90}}

\noindent
This describes an execution scenario with unboundedly many threads, though is not yet an argument for why that scenarios is correct. This scenario appears in the quotient as the fact that $\enqinc$ and $\enqwr$ are distinct.

To cope with non-fixed linearization points (in this and other objects), the authors introduce a proof methodology based on tracking all possible linearizations that could happen in the future:

\QuoteAuthor{For each linearized value, it is sometimes useful to keep track of which invocations
were completed in the linearization that yielded that value, and what their responses were. A possibility for a history $H$ is a triple $(v, P, R)$, where $v$ is a linearized value of $H$, $P$ is the subset of pending invocations in $H$ not completed when forming the linearization that yielded $u$, and $R$ is the set of responses appended to $H$ to form $u$. }{Appendix I of ~\citet{DBLP:journals/toplas/HerlihyW90}}

\noindent
This is a rather general method for linearizability. The quotient, however, allows one to consider scenarios along the lines of ``one or more enqueuers increment back, possibly some of them write to the array, and then some dequeuers succeed,'' following the quotient's regular expression. 

Importantly, while the Appendix I of ~\citet{DBLP:journals/toplas/HerlihyW90} methdology maintains a history to allow for all possible linearization orders, 
quotient-based reasoning instead involves 
representative executions (those that are accepted by the regular expression) with \emph{fixed} linearization orders and all other executions are equivalent to one such representative execution upto commutativity.

\fi

\end{document}